\documentclass[11pt]{article}

\usepackage[T1]{fontenc}
\usepackage[utf8]{inputenc}
\usepackage{amsmath,amssymb,tikz}
\usepackage{pict2e}
\usepackage{xcolor}
\usepackage[protrusion=true,expansion=false]{microtype}
\usetikzlibrary{decorations.markings}
\usepackage[unicode=true,colorlinks=true,linkcolor=blue,citecolor=blue,urlcolor=blue]{hyperref}
\hypersetup{pdfstartview=FitH, linktoc=all}
\usepackage{comment}

\advance\textheight by \topskip

\numberwithin{equation}{section}

\def\beq{\begin{equation}}
\def\eeq{\end{equation}}

\def\bit{\begin{itemize}}
\def\eit{\end{itemize}}

\makeatletter
\def\eqalign#1{\null\vcenter{\def\\{\cr}\openup\jot\m@th
\ialign{\strut$\displaystyle{##}$\hfil&$\displaystyle{{}##}$\hfil
\crcr#1\crcr}}\,}
\makeatother

\newcommand{\ii}{\textrm{i}}
\newcommand{\dd}{\textrm{d}}
\newcommand{\Ai}{\textrm{Ai}}

\def\bigO{{\cal O}}
\newenvironment{proof}%
{\rm \trivlist \item[\hskip \labelsep{\bf Proof. }]}%
{\hspace*{\fill}$\Box$\endtrivlist}

\begin{document}
\tikzset{middlearrow/.style={
        decoration={markings,
            mark= at position 0.6 with {\arrow{#1}} ,
        },
        postaction={decorate}
    }
}
 \def\ds{\displaystyle}
    \def\tr{{\rm tr \,}}
    \def\Re{{\rm Re \,}}
    \def\Im{{\rm Im \,}}
    \def\Ai{{\rm Ai \,}}
    \def\I{{\rm I \,}}
    \def\II{{\rm II \,}}
    \def\III{{\rm III \,}}
    \def\IV{{\rm IV \,}}
    \def\bigO{{\cal O}}
    \def\Supp{{\rm Supp}}
    \def\Res{{\rm Res}}
 \newtheorem{theorem}{Theorem}[section]
    \newtheorem{lemma}[theorem]{Lemma}
    \newtheorem{corollary}[theorem]{Corollary}
    \newtheorem{proposition}[theorem]{Proposition}
    \newtheorem{conjecture}{Conjecture}
    \newtheorem{Definition}[theorem]{Definition}
    \newenvironment{definition}{\begin{Definition}\rm}{\end{Definition}}
    \newtheorem{Remark}[theorem]{Remark}
    \newenvironment{remark}{\begin{Remark}\rm}{\end{Remark}}
    \newtheorem{Example}[theorem]{Example}
    \newenvironment{example}{\begin{Example}\rm}{\end{Example}}
    \newtheorem{Assumptions}[theorem]{Assumptions}
    \newenvironment{assumptions}{\begin{Assumptions}\rm}{\end{Assumptions}}
 \renewenvironment{proof}%
    {\rm \trivlist \item[\hskip \labelsep{\bf Proof. }]}%
    {\hspace*{\fill}$\Box$\endtrivlist}
    \newenvironment{varproof}%
    {\rm \trivlist \item[\hskip \labelsep{\bf Proof}]}%
    {\hspace*{\fill}$\Box$\endtrivlist}

\def\Xint#1{\mathchoice
{\XXint\displaystyle\textstyle{#1}}%
{\XXint\textstyle\scriptstyle{#1}}%
{\XXint\scriptstyle\scriptscriptstyle{#1}}%
{\XXint\scriptscriptstyle\scriptscriptstyle{#1}}%
\!\int}
\def\XXint#1#2#3{{\setbox0=\hbox{$#1{#2#3}{\int}$ }
\vcenter{\hbox{$#2#3$ }}\kern-.59\wd0}}
\def\ddashint{\Xint=}
\def\dashint{\Xint-}
\providecommand{\PVint}{\dashint}
\providecommand{\calO}{\mathcal{O}}
\newcommand{\unusedbib}[1]{\textcolor{gray}{#1}}

\title{Asymptotics of Hankel determinants for potentials with singular edge points}
\author{Dan Dai\footnotemark[1], \ Jia-Hao Du\footnotemark[2] \ and Chenhao Lu\footnotemark[3]}

\renewcommand{\thefootnote}{\fnsymbol{footnote}}
\footnotetext[1]{Department of Mathematics, City University of Hong Kong, Tat Chee
Avenue, Kowloon, Hong Kong. E-mail: \texttt{dandai@cityu.edu.hk}}

\footnotetext[2]{Department of Mathematics, City University of Hong Kong, Tat Chee
Avenue, Kowloon, Hong Kong. E-mail: \texttt{jiahaodu@cityu.edu.hk}}

\footnotetext[3]{Department of Mathematics, City University of Hong Kong, Tat Chee
Avenue, Kowloon, Hong Kong. E-mail: \texttt{chenhaolu3-c@my.cityu.edu.hk}}
\date{}

\maketitle

\begin{abstract}
We establish the large-$n$ asymptotics of Hankel determinants for unitary random matrix ensembles possessing a higher-order edge singularity. We focus on ensembles where the equilibrium measure is supported on a single interval and the limiting eigenvalue density vanishes to order $2k+\frac{1}{2}$ for $k \in \mathbb{N}$. Notably, we explicitly evaluate the constant term in the asymptotic expansion, which involves a regularized integral of the Hamiltonian associated with the Painlev\'e I ($P_{\rm I}^{2k}$) hierarchy. As a by-product, we also prove the universality of the eigenvalue correlation kernel near this singular edge and derive a limiting kernel expressed through functions related to a special solution of the $P_{\rm I}^{2k}$ equation. Our method relies on the Deift-Zhou nonlinear steepest descent analysis for the Riemann-Hilbert problem of orthogonal polynomials.

\end{abstract}

\noindent\textbf{Keywords.} Hankel determinants; Asymptotics; Painlev\'e I hierarchy; Riemann-Hilbert
approach.

\medskip
\noindent\textbf{2020 Mathematics Subject Classification.} 33E17; 34M55; 41A60.

\section{Introduction and main results}
\subsection{Unitary random matrix ensembles}
    \label{subsection: unitary ensembles}

Let $\mathcal{H}_n$ denote the space of $n\times n$ Hermitian matrices. For $n\in\mathbb{N}$ and $\mathbf{t}=(t_1,\ldots,t_{2k})\in\mathbb{R}^{2k}$, we consider the unitary random matrix ensemble defined by 
\begin{equation}\label{random matrix model}
\frac{1}{\mathcal {Z}_{n,\mathbf{t}}}e^{-n\,\tr V_{\mathbf{t}}(M)}dM,
\end{equation}
where $dM$ is the Lebesgue measure on $\mathcal{H}_n$. Here, $\mathcal{Z}_{n,\mathbf{t}}$ is the normalization constant, also commonly referred to as the partition function. The confining potential $V_{\mathbf{t}}$ is assumed to be a real-analytic function depending on the parameters $t_j \in \mathbb{R}$ for $j=1,\dots,2k$, satisfying the asymptotic condition
\begin{equation} \label{conditionV}
    \lim_{x\to\pm\infty} \frac{V_{\mathbf{t}}(x)}{\log(x^2+1)} = +\infty,
        \qquad \mbox{uniformly for $t_j\in[-\delta_0, \delta_0]$ for some $\delta_0>0$.}
\end{equation}
It is well-known, see e.g. \cite{Mehta}, that the eigenvalue correlation kernel of the ensemble \eqref{random matrix model} is given by the following orthogonal polynomial kernel
\begin{equation} \label{kernel}
    K_n^{(\mathbf{t})}(x,y)=
        e^{-\frac{n}{2}V_{\mathbf{t}}(x)} e^{-\frac{n}{2}V_{\mathbf{t}}(y)}
        \sum_{k=0}^{n-1} p_k^{(\mathbf{t})}(x) p_k^{(\mathbf{t})}(y),
\end{equation}
where
\begin{equation}\label{def of OP}
    p_k^{(\mathbf{t})}(x)=\kappa_k^{(\mathbf{t})} x^k + \cdots,
    \qquad\qquad \mbox{$\kappa_k^{(\mathbf{t})}>0$,}
\end{equation}
is the orthonormal polynomial with respect to the varying weights $e^{-nV_{\mathbf{t}}}$ on $\mathbb R$.

As $n \to \infty$, the macroscopic behavior of the eigenvalues is governed by potential theory. Specifically, the limiting mean eigenvalue distribution coincides with the equilibrium measure in the external field $V_{\mathbf{t}}$  (cf. \cite{DKMVZ2}). This is defined as the unique probability measure on $\mathbb{R}$ that minimizes the logarithmic energy \cite{SaTo}
\begin{equation}\label{eq:energyf}
I_{V_{\mathbf{t}}}(\mu)=\iint \log |x-y|^{-1} \,d\mu(x)d\mu(y) + \int V_{\mathbf{t}}(x)\,d\mu(x)
\end{equation}
among all probability measures $\mu$ on $\mathbb R$. It is well known (see, for instance, \cite{SaTo}) that the equilibrium measure $\mu_{V_{\mathbf{t}}}$ is characterized by the following Euler-Lagrange variational conditions: there exists a constant $\ell_{\mathbf{t}}\in\mathbb{R}$ such that
\begin{align}
    \label{variationalcondition:must-equality1}
    & 2\int \log |x-u|d\mu_{V_{\mathbf{t}}}(u)-V_{\mathbf{t}}(x)=\ell_{\mathbf{t}},
        &\mbox{for $x\in \mathbb S_{\mathbf{t}}$,}
    \\[1ex]
    \label{variationalcondition:must-inequality1}
    & 2\int \log |x-u|d\mu_{V_{\mathbf{t}}}(u)-V_{\mathbf{t}}(x)\leq \ell_{\mathbf{t}},
        &\mbox{for $x\in \mathbb R\setminus \mathbb S_{\mathbf{t}}$,}
\end{align}
where $\mathbb S_{\mathbf{t}}$ denotes the support of $\mu_{V_{\mathbf{t}}}$. Let $\rho_{\mathbf{t}}(x)$ denote the density of $\mu_{V_{\mathbf{t}}}$. In the context of the unitary ensemble, this density can be recovered from the scaled limit of the one-point correlation function:
\begin{equation}\label{psit-kernel}
    \rho_{\mathbf{t}}(x)=\lim_{n\to\infty}\frac{1}{n}K_n^{(\mathbf{t})}(x,x).
\end{equation}
Furthermore, because the confining potential $V_{\mathbf{t}}$ is real-analytic, the density takes a specific algebraic form. There exists a real-analytic function $Q_{\mathbf{t}}$ such that (cf. \cite{DeiKriMcL})
\begin{equation}\label{definition: qst}
    \rho_{\mathbf{t}}(x)=\frac{1}{\pi}\sqrt{Q_{\mathbf{t}}^-(x)},
\end{equation}
where $Q_{\mathbf{t}}^-$ denotes the negative part of $Q_{\mathbf{t}}$ (i.e., $Q_{\mathbf{t}}=Q_{\mathbf{t}}^+-Q_{\mathbf{t}}^-$, with $Q_{\mathbf{t}}^\pm\geq 0$ and $Q_{\mathbf{t}}^+Q_{\mathbf{t}}^-=0$). The growth condition \eqref{conditionV} ensures that $Q_{\mathbf{t}}(x)\to +\infty$ as $x\to\pm\infty$; consequently, the support $\mathbb S_{\mathbf{t}}$ consists of a finite union of bounded intervals.

The external field $V_{\mathbf{t}}$ is called regular if the strict inequality holds in \eqref{variationalcondition:must-inequality1}, the density $\rho_{\mathbf{t}}$ does not vanish in the interior of the support $\mathbb S_{\mathbf{t}}$, and $Q_{\mathbf{t}}$ has a simple zero at each endpoint of $\mathbb S_{\mathbf{t}}$. If any of these conditions fail, $V_{\mathbf{t}}$ is called singular. Following \cite{DeiKriMcL,KM}, singular points $x^*$ are classified into three types:
\begin{itemize}
    \item[(i)] $x^*\in\mathbb{R}\setminus \mathbb S_{\mathbf{t}}$ is a type $\I$
        singular point if equality in (\ref{variationalcondition:must-inequality1})
        holds. 
    \item[(ii)] $x^*\in\mathbb S_{\mathbf{t}}$ is a type II singular point if it is an
        interior point of $\mathbb S_{\mathbf{t}}$ where $Q_{\mathbf{t}}$
         has a zero of multiplicity $4k$.
    \item[(iii)] $x^*$ is a type III singular point if it is an endpoint
        of the support $\mathbb S_{\mathbf{t}}$ where $Q_{\mathbf{t}}$ has a zero of multiplicity $4k+1$.
\end{itemize}

In this paper, we consider external fields $V_{\mathbf{t}}$ such that, in the unperturbed critical case $t_1=...=t_{2k}=0$, the potential $V$ has a type III
singular edge point $x^*$ for an arbitrary integer $k\ge 1$. Consequently, the associated density $\rho(x)$ satisfies
\begin{equation}
    \rho(x)\sim c|x-x^*|^{\frac{4k+1}{2}},\qquad\mbox{as $x\to x^*$.}
\end{equation}
for some constant $c > 0$. Furthermore, we assume the potential $V_{\mathbf{t}}$ takes the specific form
\begin{equation}
    V_{\mathbf{t}}=V+\sum_{j=1}^{2k}t_jV_j,
\end{equation}
where $V_j$, $j=1,\ldots,2k$, are real-analytic functions satisfying additional conditions that we specify in Section \ref{subsection: statement of results} below.

\subsection{Asymptotics of Hankel determinants}
\label{subsection: hankel-determinants}
Next, we turn to the primary focus of this paper: the large-$n$ asymptotics of the Hankel determinants associated with the perturbed potential $V_{\mathbf{t}}$. Recall the partition function $\mathcal{Z}_{n,\mathbf{t}}$, which normalizes the unitary random matrix ensemble introduced in \eqref{random matrix model}. Then, we introduce the corresponding weight function
\begin{equation}\label{weight_general}
 w_{\mathbf{t}}(x)=e^{-nV_{\mathbf{t}}(x)},
\end{equation}
The Hankel determinant associated with this weight is defined as
\begin{equation}\label{Hankel determinants 1}
 D_n(V_{\mathbf{t}})=\det\left(\int_{\mathbb R}x^{j+k-2}w_{\mathbf{t}}(x)\,dx\right)_{j,k=1}^{n}.
\end{equation}
By Heine's identity \cite{Szego}, this determinant admits the exact multiple-integral representation
\begin{equation}\label{Hankel determinants}
 D_n(V_{\mathbf{t}})=\frac{1}{n!}\int_{\mathbb R^n}\prod_{1\leq j<\ell\leq n}(x_\ell-x_j)^2\prod_{j=1}^{n}w_{\mathbf{t}}(x_j)\,dx_j.
\end{equation}
The above formula indicates that $D_n(V_{\mathbf{t}})$ is proportional to the partition function $\mathcal{Z}_{n,\mathbf{t}}$ up to an $n$-dependent constant. Consequently, evaluating the large-$n$ asymptotics of the Hankel determinant $D_n(V_{\mathbf{t}})$ is equivalent to determining the asymptotic behavior of the partition function for these singular ensembles.

A substantial literature \cite{BWW,BI,BG1,Charlier,CFWW,CGML,EML,Mehta,XDZ,ZXZ,ZhaoCD} describes the large-$n$ asymptotics of $D_n(V)$ for various potentials. We briefly recall the results most closely related to our work. The simplest example arises in the context of the Gaussian unitary ensemble. It is a classical result (see, e.g., \cite[equation (3.3.10)]{Mehta}) that the asymptotics of the Hankel determinant for a quadratic potential are given by
\begin{multline}\label{asymGUE}
\log D_n\left(2\sigma x^2\right) 
=-n^2\left(\frac{3}{4}-\frac{1}{2}\log \frac{1}{4\sigma}\right)+n\log (2\pi)-\frac{1}{12} \log n+\zeta'(-1)+\mathcal O(n^{-1}),
\end{multline}
uniformly for $\sigma$ in compact subsets of $(0,+\infty)$ as $n\to \infty$, where $\zeta'(-1)$ is the derivative of the Riemann-zeta function at $-1$.

More generally, it was proven in \cite{DeiKriMcL} that when $V$ is real analytic and satisfies \eqref{conditionV}, the equilibrium measure takes the form
 $d\mu_V(x)=\frac{\psi_V(x)}{i\pi}(R_+^{\frac{1}{2}}(x)) dx$, where $R(z)=\prod_{j=1}^{m}(z-a_j)(z-b_j)$. Here, the branches are chosen such that $R^{\frac{1}{2}}(z)$ is analytic in $\mathbb{C} \setminus\bigcup_{j=1}^m[a_j,b_j]$ and $R^{\frac{1}{2}}(z)\sim z^m$ as $z\to\infty$.  For the one-cut case ($m=1$), this problem was studied in \cite{BWW,BI,BG1,Charlier,EML}. By \cite[Proposition 5.5]{BWW} or \cite[Theorem 1.1]{Charlier}, we have
\begin{multline} \label{eq:one-cut asy}
\log D_n(V)=-n^2 I_V(\mu_V)+n\log 2\pi-\frac{1}{12} \log n\\ +\zeta'(-1)-\frac{1}{24}\log \left(\frac{\tilde \psi(a_1)\tilde \psi(b_1)|b_1-a_1|^3}{2^6}\right)+\mathcal O(n^{-1}),
\end{multline}
as $n\to \infty$, where $I_V$ is defined in \eqref{eq:energyf} and
\begin{equation} \label{tildepsi}
\tilde  \psi(q):=\lim_{\lambda\to q} {\pi} \left| \frac{\psi_{V}(\lambda)}{\left(\lambda-q\right)^{1/2}}\right|,
\end{equation} 
for $q\in \{a_1,b_1\}$. In particular, for the Gaussian potential, $I_V({\mu_V})$ can be calculated explicitly, recovering the leading term shown in \eqref{asymGUE}.

For the two-cut case ($m=2$), Claeys, Grava and McLaughlin \cite[equation (1.9)]{CGML} obtained the following asymptotic expansion:
\begin{multline}\label{formCGML}
\log D_n(V)=-n^2 I_V(\mu_V)+n\log (2\pi)-\frac{1}{6}\log n+ \log \theta(n\Omega)\\+2 \zeta'(-1) -\frac{1}{2}\log \frac{K(\mathrm k)}{\pi} -\frac{1}{24}\sum_{q\in\{a_j,b_j\}_{j=1}^2} \log \tilde \psi(q)\\+\frac{1}{8}\log (b_2-b_1)(a_2-a_1)
 -\frac{1}{8}\sum_{l,j=1}^2\log|b_j-a_l|+\mathcal O(n^{-1}),
\end{multline}
as $n\to \infty$. Here, $\Omega=\int_{a_2}^{b_2}d\mu_V$, the elliptic modulus is $\mathrm k =\sqrt{\frac{(a_2-b_1)(b_2-a_1)}{(b_2-b_1)(a_2-a_1)}}$, and $K(\mathrm k)=\int_0^1 \frac{dx}{\sqrt{(1-x^2)(1-\mathrm k^2x^2)}}$ is the complete elliptic integral of the first kind. The function $\tilde \psi$ is given by \eqref{tildepsi} for $q\in \{a_j,b_j\}_{j=1}^2$, and $\theta(\cdot |\tau)$ is the Riemann theta function.

For a general number of cuts $m\ge2$,  Charlier et al. \cite{CFWW}  studied the case where $V$ is a regular multi-cut potential, deriving the corresponding large-$n$ asymptotics of the Hankel determinant:
\begin{multline}
\log D_n(V)=-n^2I_V(\mu_V)+n\log(2\pi)-\frac{m}{12}\log n+
\log\frac{\theta(n \mathbf{\Omega})}{\theta(0)}\\
\hspace{-2.5cm} +\frac{m}{4}\log2+m\zeta'(-1)
-\frac1{24}\sum_{q\in\{a_j,b_j\}_{j=1}^{m}}\log\widetilde\psi(q)
\\
+\frac18\sum_{1\leq l<j\leq m}\log\{(b_j-b_l)(a_j-a_l)\}
-\frac18\sum_{l,j=1}^{m}\log|b_j-a_l|+\bigO(n^{-1}), 
\end{multline}
as $n\to \infty$, where $\mathbf{\Omega}$ is a vector with components $\Omega_j=\int_{a_{j+1}}^{b_m}d\mu_V$.

When the potential $V$ possesses the singularities discussed at the end of Section \ref{subsection: unitary ensembles}, the only known result is the work of Bleher and Its \cite{BI} concerning type II singularities. They studied the singular potential $V(x)=\frac{1}{4}x^4-x^2$ and its corresponding deformation $V_t(x)=\frac{1}{4t^2}x^4+(1-\frac{2}{t})x^2$, obtaining the following double-scaling asymptotics for the Hankel determinant:
\begin{equation}
   \log(D_n(V_t))=\log(D_n(x^2)) - n^2F_n^{\rm reg}(t) - F_n^{\rm sing}(t)+\bigO(n^{-\frac{1}{3}+\epsilon}),
\end{equation}
for every $0<\epsilon<\frac{1}{12}$ as $n\to\infty$, in the double-scaling regime where the parameter ${s}=(t-1)2^{\frac{2}{3}}n^{\frac{2}{3}}$ is bounded ($|s|<C$). 
Here
\[
F_n^{\mathrm{reg}}(t)
=
F(t)+n^{-2}F^{(2)}(t)
\]
is the truncation of the one-cut regular expansion
\cite[(8.20), (9.65)]{BI}, with
\[
F(t)
=
\int_t^\infty
\frac{t-\tau}{\tau^2}
\left[
\frac{2\tau^2}{9}
\left(
2-\tau+\sqrt{(2-\tau)^2+3}
\right)^2
-\frac12
\right]d\tau,
\]
and
\[
F^{(2)}(t)
=
\frac{1}{12}
\int_t^\infty
(\tau-t)
\frac{
\left(
2-\tau+\sqrt{(2-\tau)^2+3}
\right)
\left(
2(2-\tau)+5\sqrt{(2-\tau)^2+3}
\right)
}{
\left((2-\tau)^2+3\right)^2
}
\,d\tau.
\]
Furthermore, the singular contribution is given by
\begin{equation}
    F_n^{\mathrm{sing}}(t) =-\log F_{\rm TW}\left((t-1)2^\frac{2}{3}n^\frac{2}{3}\right)=-\log F_{\rm TW}\left(s
    \right).
\end{equation}
In this scaling limit, $F_n^{\mathrm{sing}}(t)$ represents an $\mathcal{O}(1)$ contribution as $n \to \infty$.  It is interesting to note that this $\bigO(1)$-term involves the cumulative distribution function $F_{\rm TW}(s)$ of the Tracy-Widom distribution \cite{TW94}. Specifically, $F_{\rm TW}(s)$ is defined as $F_{\rm TW}(s)=\exp{\left(\int_s^{\infty}(s-x)u^2(x)dx\right)}$, where $u(x)$ is the Hastings-McLeod solution \cite{HastingsMcLeod} to the Painlev\'e II equation 
\begin{equation}
    u''(x)=xu(x)+2u^3(x),
\end{equation}
characterized by the boundary condition $u(x) \sim \Ai(x)$ as $x \to +\infty$. Recently, the asymptotics of partition functions for 2D random matrix models associated with planar orthogonal polynomials have attracted considerable attention \cite{ACC2026,ByunKangSeoYang2025,ByunSeoYang2025,ByunYangYoo2026,DMMS}.  In particular, in the critical regime of the complex Ginibre ensemble, the Tracy-Widom distribution $F_{\rm TW}(s)$ also appears in the constant term of the asymptotic expansion of the partition function \cite{ByunSeoYang2025}.

While the analysis of type II interior singularities was successfully carried out in \cite{BI}, the asymptotic behavior of Hankel determinants for potentials with type III edge singularities has remained an open problem. In this paper, we fill this gap by rigorously deriving the large-$n$ asymptotics for random matrix ensembles exhibiting a type III higher-order edge singularity.

\subsection{Universality in random matrix theory}
Beyond the asymptotics of Hankel determinants, our analysis naturally yields results concerning the universality of the eigenvalue correlation kernel. At a regular edge point $x^*$, it is a well-known result \cite{Deift,DKMVZ2} that the local eigenvalue correlations are governed by soft-edge universality. More precisely, there exists a constant $c>0$ such that
\begin{equation} \label{eq:Airy-kernel}
    \lim_{n\to\infty} \frac{1}{cn^{\frac{2}{3}}}
    K_n\left(x^*+\frac{u}{cn^{\frac{2}{3}}},x^*+\frac{v}{cn^{\frac{2}{3}}}\right)=
    \frac{\Ai(u)\Ai'(v)-\Ai(v)\Ai'(u)}{u-v},
\end{equation}
where $\Ai$ is the Airy function. For a comprehensive survey of universality phenomena across unitary, orthogonal, and symplectic ensembles, we refer the reader to Kuijlaars \cite{Kuijlaars_survey}.

Near spectral singularities,  non-standard universality classes emerge. For the type III singularities considered in the present paper, the density vanishes at the edge point $x^*$ to order $2k+\frac{1}{2}$, meaning
\[
    \rho(x)\sim (x-x^*)^{2k+\frac{1}{2}},\qquad \mbox{as $x\to x^*$}.
\]
It was conjectured in the physics literature \cite{BB,BMP} that, instead of the Airy kernel \eqref{eq:Airy-kernel}, the limiting eigenvalue correlation kernel in this regime takes the form
 \begin{equation}\label{eq:limiting kernel}
        \lim_{n \to \infty}
        \frac{1}{cn^{\frac{2}{4k+3}}}K_n^{(\mathbf t)}\left(x^*+\frac{u}{cn^{\frac{2}{4k+3}}}, x^*+\frac{v}{cn^{\frac{2}{4k+3}}}\right)
        =K^{(2k)}(u,v;s,\boldsymbol\tau),
    \end{equation}
for a certain constant $c>0$, uniformly for $u$ and $v$ in compact subsets of $\mathbb{R}$. Here, the limiting kernel is built out of functions associated with a special solution to the $2k$-th member of the Painlev\'e I hierarchy. More precisely, we have 
\begin{equation}\label{def:limKer}
 K^{(2k)}(u,v;s,\boldsymbol\tau)
 :=\frac{\Psi_1^{(2k)}(u;s,\boldsymbol\tau)\Psi_{2}^{(2k)}(v;s,\boldsymbol\tau)-
 \Psi_1^{(2k)}(v;s,\boldsymbol\tau)\Psi_{2}^{(2k)}(u;s,\boldsymbol\tau)}{-2\pi i (u-v)},
\end{equation}
where the functions $\Psi_1^{(2k)}(\zeta;s,\boldsymbol\tau)$ and $\Psi_2^{(2k)}(\zeta;s,\boldsymbol\tau)$, with $\boldsymbol\tau=(\tau_1,\dots,\tau_{2k-1})$, arise from the following Lax pair (cf. \cite{Claeys}):
\begin{equation}\label{Lax-pair}
\frac{\partial \Psi}{\partial \zeta}(\zeta;s,\boldsymbol\tau)=A(\zeta;s,\boldsymbol\tau)\Psi(\zeta;s,\boldsymbol\tau),\qquad \frac{\partial \Psi}{\partial s}(\zeta;s,\boldsymbol\tau)=L(\zeta;s,\boldsymbol\tau)\Psi(\zeta;s,\boldsymbol\tau).
\end{equation}
In this Lax pair, $A$ and $L$ are polynomials in $\zeta$ of degrees $2k+1$ and $1$, respectively.

For $k=1$, the universality limit \eqref{eq:limiting kernel} was rigorously established by Claeys and Vanlessen \cite{ClaeysVan}. In this paper, as a by-product of our steepest descent analysis, we prove \eqref{eq:limiting kernel} for general $k \in \mathbb{N}$ under the multi-scaling limit where $n\to\infty$ and $t_j\to 0$ for $j=1,\ldots,2k$.

\subsection{$\Psi$-functions associated with a special solution of the $P_{\rm I}^{2k}$ equation}
    \label{subsection: PI2 equation}
To state our main results, we first introduce the Painlev\'{e}~I hierarchy, denoted by $\mathrm{P_{I}^m}$ (cf. \cite{Gordoa,Kud97,Mugan,Shim04}). The $m$-th member of this hierarchy is a nonlinear ordinary differential equation of order $2m$, defined by   
\begin{equation}\label{def:PIm}
	s+\mathcal{L}_m(q)+\sum_{j=1}^{m-1}\tau_j\mathcal{L}_{j-1}(q)=0,\qquad \tau_1,\ldots,\tau_{m-1}\in\mathbb{R},
\end{equation}
where the operators $\mathcal{L}_k$ are generated by the Lenard-Magri recursion relation:
\begin{align}\label{LOdef}
	\begin{cases}
		\frac{d}{ds}\mathcal{L}_{k+1}(q)=\bigg(\frac14 \frac{d^3}{ds^3}+2q\frac{d}{ds}+q_s \bigg)\mathcal{L}_{k}(q), \quad k=0,\ldots,m-1,\\
		\mathcal{L}_{0}(q)=4q,\quad \mathcal{L}_{j}(0)=0, \quad j=1,\ldots,m.
	\end{cases}
\end{align}
If $m=1$, equation \eqref{def:PIm} reduces to the classical Painlev\'{e} I equation $q_{ss}=-6q^2-s$.

The second member of the hierarchy ($m=2$) takes the form
\begin{equation}\label{eq-PI2-this-paper}
q_{ssss}+4s+40q^3+10q_{s}^2+20q q_{ss}+16\tau_{1}q=0.
\footnote{After the rescalings
 $U=60^{\frac{2}{7}}q$, $X=60^{-\frac{1}{7}}s$, and $T=-4\times 60^{-\frac{3}{7}}\tau_1$,
 this equation reduces to
 \begin{equation*}\label{eq-PI2-previous-paper}
  \frac{1}{240}U_{XXXX}+\frac{1}{24}(U_{X}^2+2UU_{XX})+
  \frac{1}{6}U^3+X-TU=0,
 \end{equation*}
 which is the $\mathrm{P_{I}^2}$ equation studied in the literature \cite{Claeys,ClaeysVan2,Grava-Kapaev-Klein-2015}.}
\end{equation}

The relevance to our work is the even member of the Painlev\'{e} I hierarchy. It was established in \cite{Claeys} that for each $\mathrm{P_{I}^{2k}}$ equation, there exists a unique real and pole-free solution $q(s)=q(s,\tau_1,\dots,\tau_{2k-1})$ satisfying the asymptotic boundary condition
\begin{equation}\label{eq:asyq}
	q(s)=\mp\frac{1}{2}\alpha_k^{-\frac{1}{2k+1}}|s|^{\frac{1}{2k+1}}
 +\mathcal{O}\left(|s|^{-\frac{1}{2k+1}}\right),
 \qquad s\to\pm\infty.
\end{equation}
Moreover, the corresponding Hamiltonian $h(s)=h(s,\tau_1,\dots,\tau_{2k-1})$ of the $\mathrm{P_{I}^{2k}}$ equation, related to 
$q$ through the identity $\partial_sh=-q$, admits the following asymptotic expansion (cf. \cite{DLXYZ}):
\begin{equation}\label{eq:asy of Ham}
    h(s)=\frac{2k+1}{4k+4}\alpha_k^{-\frac{1}{2k+1}}|s|^\frac{2k+2}{2k+1}+\frac{ks}{12(2k+1)(s^2+1)}+\bigO(|s|^{-\frac{8k+5}{4k+2}}),\qquad s\to\pm\infty,
\end{equation}
where
\begin{equation}\label{eq:def alpha k}
    \alpha_k=\frac{2\Gamma(2k+\frac{3}{2})}{\Gamma(2k+2)\Gamma(\frac{3}{2})}.
\end{equation}
For $\mathrm{P_{I}^{2}}$, this distinguished solution is the well-known tritronqu\'{e}e solution studied in \cite{Grava-Kapaev-Klein-2015}. It is worth noting that this same hierarchy governs the universal critical behavior of Hamiltonian PDEs \cite{Claeys,Dub06,Dub08,Dub09}, including the Korteweg-de Vries hierarchy \cite{CG09,CG12}.

The distinguished $\mathrm{P_{I}^{2k}}$ solution and its corresponding Hamiltonian can be characterized via the following Riemann-Hilbert (RH) problem \cite{Claeys,ClaeysItsK}:
\subsubsection*{RH problem for $\Psi$:}
Fix
\[
 \vartheta=\frac{(4k+2)\pi}{4k+3},
 \qquad
 \Gamma_1=\mathbb R_+,
 \quad \Gamma_2=e^{i\vartheta}\mathbb R_+,
 \quad \Gamma_3=\mathbb R_-,
 \quad \Gamma_4=e^{-i\vartheta}\mathbb R_+,
 \qquad \Gamma=\bigcup_{j=1}^4\Gamma_j.
\]
The ray $\Gamma_1$ is oriented from $0$ to $+\infty$, while $\Gamma_2$, $\Gamma_3$, and $\Gamma_4$ are oriented from infinity towards $0$.  The fractional powers in the normalization at infinity are taken with the principal branch cut along $\Gamma_3=\mathbb R_-$.

\begin{itemize}
    \item[(a)] The function $\Psi: \mathbb{C} \setminus \Gamma \mapsto \mathbb{C}^{2 \times 2}$ is analytic and remains bounded as $\zeta\to0$.
    \item[(b)] $\Psi$ satisfies the following jump relations on
    $\Gamma$,
    \begin{align}
        \label{RHP Psi: b1}
        &\Psi_+(\zeta)=\Psi_-(\zeta)
        \begin{pmatrix}
            0 & 1 \\
            -1 & 0
        \end{pmatrix},& \mbox{for $\zeta\in\Gamma_3$,} \\[1ex]
        \label{RHP Psi: b2}
        &\Psi_+(\zeta)=\Psi_-(\zeta)
        \begin{pmatrix}
            1 & 1 \\
            0 & 1
        \end{pmatrix},& \mbox{for $\zeta\in\Gamma_1$,} \\[1ex]
        \label{RHP Psi: b3}
        &\Psi_+(\zeta)=\Psi_-(\zeta)
        \begin{pmatrix}
            1 & 0 \\
            1 & 1
        \end{pmatrix},& \mbox{for $\zeta\in\Gamma_2\cup \Gamma_4$.}
    \end{align}
    \item[(c)] $\Psi$ has the following behavior at infinity,
    \begin{equation}\label{RHP Psi: c}
        \Psi(\zeta)=\zeta^{-\frac{1}{4}\sigma_3}N\left(I-\frac{h}{\zeta^{\frac{1}{2}}}\sigma_3
        +\frac{1}{2 \zeta}\begin{pmatrix}h^2 & iq\\-iq &
        h^2\end{pmatrix} +\bigO(\zeta^{-\frac{3}{2}})\right)
        e^{-\theta(\zeta;s,\tau_1,...,\tau_{2k-1})\sigma_3},
    \end{equation}
    where 
\begin{equation}\label{definition: N}
    N=\frac{1}{\sqrt{2}}\begin{pmatrix}
        1 & 1\\
        -1 & 1
    \end{pmatrix}e^{-\frac{1}{4}\pi i\sigma_3},
\end{equation}    \begin{equation}\label{definition: theta}
\theta(\zeta;s,{\tau_1},...,\tau_{2k-1})=\frac{4}{4k+3}\zeta^{\frac{4k+3}{2}}+\sum_{j=1}^{2k-1}\frac{4}{2j+1}\tau_j\zeta^{\frac{2j+1}{2}}+s\zeta^{\frac{1}{2}},
\end{equation}
and $q=q(s,\tau_1,...,\tau_{2k-1})$ is the special solution of the $P_{\rm I}^{2k}$ equation (\ref{def:PIm}), where $\frac{\partial h}{\partial s}=-q$.
\end{itemize}
The functions $\Psi_1$ and $\Psi_2$ appearing in \eqref{def:limKer} are the
analytic extensions of $\Psi_{11}$ and $\Psi_{21}$ from the sector between
$\Gamma_1$ and $\Gamma_2$ to the entire complex plane.

\subsection{Statement of results}
    \label{subsection: statement of results}
   
In this paper, we work under the following assumptions.
\begin{assumptions}\label{assumptions}
    \
    \begin{itemize}
    \item[(i)] We consider external fields $V_{\mathbf{t}}$ of the form
        \begin{equation}\label{Vst}
            V_{\mathbf{t}}(x)=V(x)+\sum_{j=1}^{2k}t_jV_j(x),
        \end{equation}
        where $V$ and $V_j,j=1,...,2k$  are even, real-analytic functions. Furthermore, we assume 
        there exists a $\delta_0>0$ such that the condition \eqref{conditionV} holds.

    \item[(ii)] The equilibrium measure of $V$ is supported on a single interval $[-A,A]$, and is of the form  $d\mu_{V}(x)=\rho(x)\,dx$, where
    \begin{equation} \label{eq:V-E-measure}
        \rho(x)=\psi(x)(A^2-x^2)^{\frac{4k+1}{2}},
        \qquad \psi>0\quad\hbox{on }[-A,A].
    \end{equation}
    In addition, the Euler--Lagrange inequality \eqref{variationalcondition:must-inequality1} for $V$ is strict on $\mathbb R\setminus[-A,A]$, such that $V$ has no additional type I singular points. 
    \item[(iii)] For each $j=1,\ldots,2k$, the deformation direction $V_j$ is chosen so that
    \begin{equation}
        h_j^{(m)}(\pm A)=0\quad (m=0,\ldots,j-2),
        \qquad h_j^{(j-1)}(\pm A)\neq0,
    \end{equation}
    where the first family of conditions is empty when $j=1$. The functions $h_j$ are defined by
    \begin{equation}\label{definition: hj}
    h_j(z)=\frac{1}{2\pi i}\oint_\gamma
    R(\xi)V_j'(\xi)\frac{d\xi}{\xi-z},\qquad \mbox{for $z\in\operatorname{Int}(\gamma)$ and
    $j=1,...,2k$}
\end{equation}
with 
\begin{equation}\label{definition: R}
    R(z)=\bigl((z-A)(z+A)\bigr)^{\frac{1}{2}},
        \qquad\mbox{for $z\in\mathbb{C}\setminus[-A,A]$,}
\end{equation}
where the principal branch of the square root is chosen so that $R$ is
analytic in $\mathbb{C}\setminus[-A,A]$ and $R(z)\sim z$ as $z\to\infty$.
In the above definition,  $\gamma$ is a positively oriented contour in $\mathcal V$ with
$[-A,A]\subset\operatorname{Int}(\gamma)$. Throughout the rest of this paper,  $\mathcal V$ denotes a neighborhood of the real line on which $V,V_1,...,V_{2k}$ and $\psi$ are analytic. 

    \end{itemize}
\end{assumptions}

Our main result establishes the double-scaling asymptotic expansion of the Hankel determinant for a potential exhibiting a type III edge singularity. A notable contribution of our analysis is the explicit evaluation of the constant $\mathcal{O}(1)$ term, which we show to be a regularized integral of the Hamiltonian from the Painlev\'e I hierarchy. To present these asymptotics in a clear and explicit form, we state our main theorem under the restriction that the higher-order deformation parameters vanish (i.e., $t_j=0$ for $j=2,\dots,2k$). Although the general case would introduce additional $t_j$-dependent terms, the essential structure of the constant term (namely, the emergence from the Painlev\'e I Hamiltonian) remains unchanged. 

\begin{theorem}\label{Main theorem}
 Let $V_t(x)=V(x)+t\left(x^2-V(x)\right)$ satisfy Assumptions \ref{assumptions} and $A^2\ne2$. Assume that $V_t$
 is one-cut regular for $t\in(0,1]$, and denote by $\mu_{V_t}$ its equilibrium measure, with density $\rho_t$ and support $[-b_t,b_t]$. For every fixed $B>0$, let $t=sn^{-\frac{4k+2}{4k+3}}$, then, as $n\to\infty$,

\begin{align}\label{asymp thm}
\log D_n(V_t)={B_1}n^2
 +B_2n^{\frac{4k+4}{4k+3}}+B_3n+B_4n^{\frac{2}{4k+3}}+B_5\log n +B_6+o(1),
\end{align}
uniformly for $s\in[0,B]$. Here, the coefficients $B_k$ are given by
\begin{align}
    &B_1=-\frac12\log2-\frac34-\int_0^1\int_{-b_t}^{b_t}(V(u)-u^2)\rho_t(u)\,du\,dt, \\
 &B_2=s\int_{-A}^{A}(V(u)-u^2)\rho(u)\,du ,\\
 & B_3=\log(2\pi),\\
 & B_4=\frac{s^2}2\Bigg(\frac1\pi\int_{-A}^{A}(V(x)-x^2)\sqrt{A^2-x^2}\dd x
  -\int_{-A}^{A}(V(x)-x^2)\psi(x)(A^2-x^2)^{2k+\frac12}\dd x \notag
 \\
  &\qquad
  +\frac{2-A^2}{2\pi}\int_{-A}^{A}\frac{V(x)-x^2}{\sqrt{A^2-x^2}}\dd x\Bigg),\\
  &B_5=-\frac{1}{4(4k+3)},\\
  &B_6=-\frac{(2k+1)^2}{(4k+3)(2k+2)}\alpha_k^{-\frac{1}{2k+1}}f_1(A)|f_1(A)|^{\frac{2k+2}{2k+1}}s^{\frac{4k+3}{2k+1}}+\zeta'(-1)
 \notag \\
 &\qquad+\frac{k}{12(2k+1)}\log\!\left(\frac{f_1(A)^2}{1+f_1(A)^2s^2}\right)-C_{\rm reg}+2f_1(A)\int_s^{\infty}\widehat h(f_1(A)y,0)\,dy.\label{eq:B6}
\end{align}
In the above formulas, the functions $\rho(x)$ and $\psi(x)$ are given in \eqref{eq:V-E-measure},   
\begin{equation*}
    C_{\rm reg}=\frac{1}{24}
\log\left\{
\frac{A^{4}}{16}
\left[
2\pi\psi(A)\frac{(4k+1)!!}{(2k)!}
\right]^{\frac{2}{2k+1}}
\left[
(2k+1)\lvert 2-A^{2}\rvert
\right]^{\frac{4k}{2k+1}}
\right\},
\end{equation*}
\[
 f_1(A)=\frac{A^2-2}{\sqrt{2AC_A}},\qquad
 C_A=\left(\frac{\pi}{2}(2A)^{\frac{4k+1}{2}}\psi(A)\right)^{\frac{2}{4k+3}},
\]
and
\begin{equation}\label{eq:def-hat-h-main}
 \widehat h(y,0)=h(y,0)-\frac{2k+1}{4k+4}\alpha_k^{-\frac{1}{2k+1}}|y|^{\frac{2k+2}{2k+1}
 }-\frac{ky}{12(2k+1)(y^2+1)},
\end{equation}
with $\alpha_k=\frac{2\Gamma(2k+\frac{3}{2})}{\Gamma(2k+2)\Gamma(\frac{3}{2})}$. The function $h$ is the Hamiltonian associated with the distinguished real pole-free solution of $\mathrm P_{\mathrm I}^{2k}$ equation defined in \eqref{def:PIm}.  
\end{theorem}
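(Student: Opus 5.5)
We plan to prove the expansion by a deformation (interpolation) argument in the parameter $t$ of the family $V_t=V+t(x^2-V)$. It connects the singular potential $V=V_0$ to the Gaussian potential $V_1=x^2$, for which \eqref{asymGUE} with $\sigma=\tfrac12$ gives $\log D_n(x^2)$ explicitly. The starting point is the standard differential identity
\[
\frac{d}{dt}\log D_n(V_t)=-n\int_{\mathbb R}\bigl(x^2-V(x)\bigr)K_n^{(t)}(x,x)\,dx ,
\]
with $K_n^{(t)}$ the kernel \eqref{kernel}. Integrating from $t$ to $1$ gives
\[
\log D_n(V_t)=\log D_n(x^2)+n\int_t^1\!\int (x^2-V)K_n^{(\tau)}(x,x)\,dx\,d\tau .
\]
We express $K_n^{(\tau)}(x,x)$ through the solution $Y$ of the orthogonal polynomial RH problem, and then through its asymptotic solution. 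We split the $\tau$-integral at $\tau=t_0$ with $t_0$ fixed and small. On $[t_0,1]$ the potential is one-cut regular, so the classical Deift--Zhou analysis with Airy parametrices applies uniformly. Equivalently, one can use \eqref{eq:one-cut asy} directly at $\tau=t_0$. This yields $-n^2I_{V_{t_0}}(\mu_{V_{t_0}})+n\log 2\pi-\frac1{12}\log n+\zeta'(-1)-\frac1{24}\log(\tilde\psi(a)\tilde\psi(b)|b-a|^3/2^6)+\bigO(n^{-1})$.

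The main work is the interval $[0,t_0]$, which contains the double-scaling regime $\tau=\sigma n^{-\frac{4k+2}{4k+3}}$. We perform the steepest descent analysis uniformly in $\tau\in[0,t_0]$. The $g$-function is built from a modified, $\tau$-dependent endpoint $b_\tau$. Local parametrices at $\pm b_\tau$ are built from the $\Psi$ RH problem of Section \ref{subsection: PI2 equation}, composed with a conformal map $\zeta=n^{\frac{2}{4k+3}}f(z)$ with $f(z)\approx C_A(z-A)$. Its parameters are $s\approx n^{\frac{4k+2}{4k+3}}f_1(A)\tau$ and $\boldsymbol\tau=0$; these vanish because only the direction $x^2-V$ is switched on, and Assumption (iii) with $j=1$ makes $h_1(\pm A)\ne0$, so that $f_1(A)$ is the resulting coefficient. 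The small-norm estimate gives $R=I+\bigO(n^{-\frac{1}{4k+3}})$, and we expand $R$ to the order needed to capture the $\bigO(1)$ contribution. Inserting the result into the differential identity, the outer region produces $\int(x^2-V)\rho_\tau$ and its $1/n$ corrections. The endpoint contributions produce the residue terms $h$ and $q$ from \eqref{RHP Psi: c}. Using $\partial_sh=-q$, the contribution near each edge becomes $f_1(A)\,h(f_1(A)n^{\frac{4k+2}{4k+3}}\tau,0)$ times $d\tau$, up to powers of $n$.

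We then integrate in $\tau$ from $t$ to $t_0$. After the change of variables $y=n^{\frac{4k+2}{4k+3}}\tau$, the upper limit tends to infinity, so we subtract the large-$s$ asymptotics \eqref{eq:asy of Ham} to obtain the convergent integral of $\widehat h$. The subtracted pieces integrate explicitly: the $|y|^{\frac{2k+2}{2k+1}}$ term gives the $s^{\frac{4k+3}{2k+1}}$ term in $B_6$ and feeds into $B_2$ and $B_4$, and the $\frac{ky}{12(2k+1)(y^2+1)}$ term produces the $\log(f_1^2/(1+f_1^2s^2))$ term. Matching with the regular expansion at $\tau=t_0$ removes all $t_0$-dependence. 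This matching fixes $C_{\rm reg}$ from the $\tilde\psi$ term in \eqref{eq:one-cut asy}: near the singular edge, $\tilde\psi(b_{t_0})$ behaves like a power of $t_0$ times the constants $\psi(A)$, $(4k+1)!!/(2k)!$ and $|2-A^2|$. That power also produces the $\log n$ coefficient $-\frac1{12}+\frac{2k}{12(4k+3)}\cdot\ldots$, which collapses to $B_5=-\frac1{4(4k+3)}$. The coefficients $B_1,\ldots,B_4$ come from expanding $n^2\int_t^1\int(V-u^2)\rho_\tau$ in $t=sn^{-\frac{4k+2}{4k+3}}$. This requires $\partial_\tau\rho_\tau$ at $\tau=0$, which is where the $\sqrt{A^2-x^2}$ and $(2-A^2)$ terms enter, using $A^2\neq2$.

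The main obstacle is uniformity as $\tau$ crosses from the Painlev\'e regime ($n^{\frac{4k+2}{4k+3}}\tau$ bounded) to the regular regime ($\tau$ fixed). The error term of the differential identity must be integrable in $\tau$ with total contribution $o(1)$. We plan to handle this by constructing the $P_{\rm I}^{2k}$ parametrix for all $\tau\in[0,t_0]$, choosing the endpoint $b_\tau$ and the map $f$ so that the matching condition holds with an error $\bigO((n^{-1}+\ldots)/(1+n^{\frac{4k+2}{4k+3}}\tau)^{\epsilon})$. The large-$s$ asymptotics of $\Psi$ (the Airy degeneration) would then control the error for large $s$, and integrating this bound over $[0,t_0]$ gives $o(1)$.
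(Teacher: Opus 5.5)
\textbf{There is a genuine gap.} Your overall architecture matches the paper's: the differential identity, interpolation to the Gaussian, a $P_{\rm I}^{2k}$ parametrix in the double-scaling regime, and subtraction of \eqref{eq:asy of Ham} to produce $\widehat h$. The gap is in the step you flag as the main obstacle, and the plan you sketch for it does not work.

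Under $\zeta=n^{\frac{2}{4k+3}}f_A(z)$, $s=n^{\frac{4k+2}{4k+3}}tf_1(z)$, the phase $\theta(\zeta;s,0)$ of $\Psi$ equals $n\phi_{\mathbf t}(z)$ (identity \eqref{important relation fbf1f2}). Here $\phi_{\mathbf t}$ is the phase of the signed measure $\nu+t\nu_1$ on the \emph{fixed} interval $[-A,A]$, which carries an inverse-square-root endpoint term. It is not the phase $\varphi_\tau$ of the true equilibrium measure on $[-b_\tau,b_\tau]$.

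Whichever $g$-function you pair with $\Psi$, the mismatch on $\partial U_{\delta,A}$ has size $n\tau^{\frac{2k+2}{2k+1}}$:
\begin{itemize}
\item With a moving-endpoint $g$-function, the exponentials fail to match by $e^{n(\varphi_\tau-\phi_\tau)\sigma_3}$.
\item With the fixed-support $g$-function, the mismatch is the first correction $h(s)\zeta^{-\frac12}\sigma_3$. Indeed $h(s)\asymp|s|^{\frac{2k+2}{2k+1}}$, $|\zeta|\asymp n^{\frac{2}{4k+3}}$ and $s\asymp n^{\frac{4k+2}{4k+3}}\tau$.
\end{itemize}
So the small-norm problem, and the expansion $R=I+n^{-\frac1{4k+3}}R^{(1)}+\cdots$ with $s$-dependent coefficients, break down once $\tau\gtrsim n^{-\frac{2k+1}{2k+2}}$. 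Near a fixed $t_0$ the ``correction'' is of order $n$, and the higher terms are larger still.

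Your bookkeeping reflects the same confusion. If the $n^2$ term uses the true $\rho_\tau$, the $|y|^{\frac{2k+2}{2k+1}}$ growth of $h$ is counted twice. In the paper, $B_2$ and $B_4$ come from $\rho+t\rho_1$ via $p_0$ and $C_1$. The growth of $h$ instead supplies the missing $C_2t^{\frac{2k+2}{2k+1}}$ part of the true density.

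Appealing to the Airy degeneration of $\Psi$ at large $s$ would need a new, $s$-adapted construction, matched to the scaled $g$-function of the $\Psi$-problem itself; you have not given one. Even then, the critical side only sees the universal function $h$. The constant in \eqref{eq:one-cut asy} at $\tau=t_0$ involves the exact $b_{t_0}$ and $h_{t_0}(b_{t_0})$. So a fixed junction leaves an $\bigO(t_0^{\frac1{2k+1}})$ discrepancy that a further limit $t_0\to0$ would still have to remove.

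\textbf{The missing idea} is an $n$-dependent junction $\tau=n^{-a}$ with a genuine overlap window. The paper proceeds as follows.
\begin{itemize}
\item For $\tau\le n^{-a}$ it uses the fixed-support $P_{\rm I}^{2k}$ parametrix with the uniform large-$s$ bound of Lemma~\ref{lem:uniform-P2k-Psi}. This gives an error $\bigO\bigl(n^{-\frac{2}{4k+3}}(1+|s_n(t)|)^{\frac{4k+4}{2k+1}}\bigr)$, which integrates to $o(1)$ over $[t,n^{-a}]$ exactly when $a>\frac{3(2k+1)}{6k+5}$.
\item For $\tau\ge n^{-a}$ it does the Airy analysis at the true endpoints $\pm b_\tau$. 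The disks must shrink to radius $\delta\tau^{\frac1{2k+1}}$, because $h_\tau'(b_\tau)/h_\tau(b_\tau)\asymp\tau^{-\frac1{2k+1}}$.
\item A plain small-norm argument on that side reaches only $a<\frac{2k+1}{2k+2}$. Since $\frac{2k+1}{2k+2}<\frac{3(2k+1)}{6k+5}$, the two ranges would not overlap. The window \eqref{eq:overlap-exponent-range} is opened by the Deift--Its--Krasovsky constant conjugation $D_r=t^{-\sigma_3/(8k+4)}$ (Proposition~\ref{prop:uniform-R}). This yields errors $\bigO(n^{-1}\tau^{-\frac{8k+5}{4k+2}})$ in $\partial_\tau\log D_n$.
\item The regular-side $1/n$ term $\mathcal I_3(\tau)=-\frac1{24}\frac{d}{d\tau}\log\bigl(b_\tau^4h_\tau(b_\tau)^2/16\bigr)$ is integrated exactly down to $n^{-a}$. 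This produces $C_{\rm reg}$, and also the $-\frac{ak}{6(2k+1)}\log n$ that cancels the $a$-dependence of the critical-side logarithm and leaves $B_5$.
\end{itemize}
Using \eqref{eq:one-cut asy} at a fixed $t_0$, instead of integrating to the Gaussian, is harmless. It does not address the difficulty, which lies entirely in the range $n^{-\frac{2k+1}{2k+2}}\lesssim\tau\ll1$.
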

\begin{remark}
One may revert to the original parameter $t$ via the relation $t=sn^{-\frac{4k+2}{4k+3}}$, and combine the three $t$-dependent terms of orders $n^2$, $n^{\frac{4k+4}{4k+3}}$, $n^{\frac{2}{4k+3}}$ and the first term in $B_6$. Then, the asymptotic expansion \eqref{asymp thm} can be written in the following form
\begin{align}
    \label{eq:asy Dn(Vt)}
        &\log D_n(V_t)=-n^2I_{V_t}(\mu_{V_t})+n\log(2\pi)-\frac{1}{4(4k+3)}\log n +\zeta'(-1)
  \\
 &\quad +\frac{k}{12(2k+1)}\log\!\left(\frac{f_1(A)^2}{1+f_1(A)^2s^2}\right)-C_{\rm reg}+2f_1(A)\int_s^{\infty}\widehat h(f_1(A)y,0)\,dy+o(1), \notag
  \end{align}
    where $I_{V_t}(\mu_{V_t})$ is defined in \eqref{eq:energyf}. By setting the parameters to the special values $k=0$ and $t=0$, we recover the one-cut regular case (cf. the equilibrium measure in \eqref{eq:V-E-measure}). In this situation, the Painlev\'e system \eqref{def:PIm} reduces to the algebraic equation $s+4q(s)=0$, yielding $q(s)=-\frac{1}{4}s$ and $h(s)=\frac{1}{8}s^2$. Substituting $\alpha_0=2$ into the definition of $\widehat{h}$ given in \eqref{eq:def-hat-h-main}, we find that $\widehat{h}(s)=0$. Consequently, the expansion above simplifies to
    \begin{align}
        \log D_n(V)=&-n^2I_{V}(\mu_{V})+n\log(2\pi)-\frac{1}{12}\log n \nonumber \\
        &+\zeta'(-1)-\frac{1}{24}\log\left(\frac{A^4}{16}(2\pi\psi(A))^2\right)+
 o(1),
  \end{align}
  which is consistent with the asymptotics of the one-cut regular case in \eqref{eq:one-cut asy}.
 \end{remark}  
\begin{remark}\label{transition}
Although Theorem \ref{Main theorem} is rigorously established in the region $s\in[0,B]$ for any finite $B>0$, we can formally recover the transition to the one-cut regular asymptotics by considering the limit $s \to +\infty$. To see this, we set $s_n=\xi n^\delta$ with $\xi>0$ and $0<\delta<\frac{4k+2}{4k+3}$. Note that this choice gives us $t_n=\xi n^{\delta-\frac{4k+2}{4k+3}}$, which is a small and positive quantity. In this regime, the final integral in \eqref{eq:asy Dn(Vt)} tends to zero, reducing the asymptotic expansion to
 \begin{align}
         \log D_n(V_{t_n})=&-n^2I_{V_{t_n}}(\mu_{V_{t_n}})+n\log(2\pi)-\frac1{4(4k+3)}\log n+\zeta'(-1)\notag \\ &-\frac{k}{6(2k+1)}\delta\log n 
          -\frac{k}{6(2k+1)}\log\xi -C_{\mathrm{reg}}+o(1).
    \end{align}
 On the other hand, from \eqref{eq:Creg} and \eqref{eq:I3-Kt-identity}, we have, as $n\to\infty$,
 \begin{align}
        -\left[\frac1{4(4k+3)}+\frac{k}{6(2k+1)}\delta\right]\log n 
        -\frac{k}{6(2k+1)}\log\xi-C_{\mathrm{reg}}  \notag\\
        =-\frac{1}{12}\log n -\frac{1}{24}\log\left(\frac{b_{t_n}^4h_t(b_{t_n})h_t(-b_{t_n})}{16}\right)+o(1).
     \end{align}
     Combining the above two formulas, we have
     \begin{align}
         \log D_n(V_{t_n})=&-n^2I_{V_{t_n}}(\mu_{V_{t_n}})+n\log(2\pi)-\frac1{12}\log n+\zeta'(-1)\notag \\&-\frac{1}{24}\log\left(\frac{b_{t_n}^4h_t(b_{t_n})h_t(-b_{t_n})}{16}\right) +o(1).
    \end{align}
    which again recovers the asymptotics of the one-cut regular case given in \eqref{eq:one-cut asy}. Therefore, the regularized integral of the Hamiltonian $h$ associated with the $\mathrm{P_{I}^{2k}}$ equation acts as a crossover function, which describes the transition from the one-cut singular regime to the one-cut regular regime when $s$ varies.
 \end{remark}

 \begin{remark}\label{V even} The assumption that the potential $V$ is an even function is imposed only for technical convenience. One could extend this analysis to a general asymmetric potential whose equilibrium measure is supported on an interval $[a,b]$, where $a$ is a regular endpoint and $b$ is a singular endpoint of order $2k+\frac{1}{2}$. In such a setting, the asymptotic expansion \eqref{asymp thm} still contains terms of $n^2$, $n^{\frac{4k+4}{4k+3}}$, $n$, $n^{\frac{2}{4k+3}}$, $\log n$ and constant term, but with coefficients modified accordingly. Notably, the appearance of the $\mathrm{P_{I}^{2k}}$ Hamiltonian in the $\mathcal{O}(1)$ constant term is a universal feature that persists for any potential possessing a type III edge singularity.
\end{remark}
\begin{remark}
   In Theorem \ref{Main theorem}, we impose the technical assumption that $A^2\ne2$. If $A^2=2$, from \eqref{eq:def f1} and \eqref{eq:rho1}, we have 
    \begin{equation}
        f_1(z)=\frac{2\sqrt{2A}}{3\sqrt{C_A}}(z-A)+\bigO((z-A)^2).
    \end{equation}
    Then, from Remark \ref{remark h} and the RH analysis in Section \ref{subsection: parametrix near b}, the nontrivial term in the $\theta(\zeta)$ (defined in \eqref{definition: theta}) is $\frac{4}{4k+3}\zeta^\frac{4k+3}{2}+\frac{4}{3}\tau_1\zeta^\frac{3}{2}$ rather than $\frac{4}{4k+3}\zeta^\frac{4k+3}{2}+x\zeta^\frac{1}{2}$. Accordingly, the appropriate double-scaling parameter is $t=\tau n^{-\frac{4k}{4k+3}}$, and the integral in $B_6$ is replaced by $\int_{\tau}^{\infty}{h}(0,u,...,0)\,du$. However, the RH analysis in this paper is still applicable. 
\end{remark}
\begin{remark}
To illustrate, a concrete example of a potential satisfying Assumption~\ref{assumptions} is given by
   $$V(x)=\sum_{j=0}^{2k}(-1)^j\binom{2k+\frac{1}{2}}{j}\frac{A^{2j}}{4k+2-2j}x^{4k+2-2j}.$$ 
   A direct computation gives the density of the corresponding equilibrium measure
   \(\rho(x)=\frac{1}{2\pi}(A^2-x^2)^\frac{4k+1}{2}\), where
   \(A^{4k+2}=\frac{2^{2k+2}(2k+1)!}{(4k+1)!!}\). Consider the deformed potential
\[
V_t(x)=(1-t)V(x)+tx^2,
\]
one finds that the density of the corresponding equilibrium measure takes the form
\begin{equation} \label{eq:rho-t-example}
\rho_t(x)=\frac{1}{2\pi}\left[(1-t)\sum_{j=0}^{2k}\binom{2k+\frac{1}{2}}{j}(b_t^2-A^2)^j(x^2-b_t^2)^{2k-j}+2t\right]\sqrt{b_t^2-x^2},
\end{equation}
Here, \(b_t\) is determined by the following equation:
$$
(1-t)\gamma_k(b_t^2-A^2)^{2k+1}+t(b_t^2-2)=0 \quad \textrm{with }\gamma_k=\binom{2k+\frac{1}{2}}{2k+1}=\frac{(4k+1)!!}{2^{2k+1}(2k+1)!}.
$$
Because the quantity inside the square brackets in \eqref{eq:rho-t-example} is strictly positive for all $x \in [-b_t, b_t]$, it follows that $V_t$ remains a regular one-cut potential for all $t \in (0,1]$.

\end{remark}
\begin{remark}\label{remark FH singularities}
    We can further consider the asymptotics of the Hankel determinant when Fisher-Hartwig singularities are introduced, like the case considered in \cite{Charlier,CFWW,CG,CFLW}. More precisely, following the same framework, let
    \[
 -A<a_1<\cdots<a_m<A,
 \qquad
 \min\!\left\{
   \min_{j\ne \ell}|a_j-a_\ell|,
   \min_j(A-a_j),
   \min_j(A+a_j)
 \right\}\geq \delta
\]
for some fixed \(\delta>0\), and define
\[
 \omega_{\alpha_j}(x)=|x-a_j|^{\alpha_j},
 \qquad
 \omega_{\beta_j}(x)=
 \begin{cases}
  e^{\ii\pi\beta_j},&x<a_j,\\
  e^{-\ii\pi\beta_j},&x>a_j,
 \end{cases}
\]
with the parameters $
 \Re\alpha_j>-1,
 |\Re\beta_j|<\frac14,
  j=1,\ldots,m.$
Define
\begin{equation}\label{eq:fh-determinant}
 D_n^{\mathrm{FH}}(V_t;\boldsymbol\alpha,\boldsymbol\beta)
 =\det\!\left(
  \int_{\mathbb R}x^{p+q}e^{-nV_t(x)}
  \prod_{j=1}^m
  \omega_{\alpha_j}(x)\omega_{\beta_j}(x)\,dx
 \right)_{p,q=0}^{n-1}.
\end{equation}
Then, under the same conditions shown in Theorem \ref{Main theorem}, we have as $n\to\infty$,
\begin{equation}\label{eq:fh-regular-asymptotics}
 \log\frac{D_n^{\mathrm{FH}}(V_t;\boldsymbol\alpha,\boldsymbol\beta)}{D_n(V_t)}
 =n\Phi_{\mathrm{FH}}[\mu_{V_t}]
 +\sum_{j=1}^m
 \left(\frac{\alpha_j^2}{4}-\beta_j^2\right)\log n
 + B_{\mathrm{FH}}+o(1),
\end{equation}
where
\begin{equation}\label{eq:fh-phase-equilibrium}
 \Phi_{\mathrm{FH}}[\mu_{V_t}]
 =\sum_{j=1}^m\left\{
  \frac{\alpha_j}{2}\bigl(V_t(a_j)+\ell_t\bigr)
  +i\pi\beta_j
   \left(1-2\int_{a_j}^{b_t}\rho_t(x)\,dx\right)
 \right\},
\end{equation}
{\small
\begin{align}
    B_{\mathrm{FH}}
    ={}&
    -\frac12
    \sum_{1\leq j<\ell\leq m}
    \alpha_j\alpha_\ell
    \log\left(
        \frac{2|a_j-a_\ell|}{A}
    \right)+
    2 
    \sum_{1\leq j<\ell\leq m}
    \beta_j\beta_\ell
    \log \left( \frac{
        A^2-{a_ja_\ell}
        -
        \sqrt{A^2-{a_j^2}}
        \sqrt{A^2-{a_\ell^2}}
    }{
        {|a_j-a_\ell|}{A}
    }\right)  \notag \\
    &+
    \sum_{j=1}^{m}
    \left[
        i\mathfrak A\beta_j
        \arcsin\left(\frac{a_j}{A}\right)
        -
        \frac{i\pi}{2}
        \mathfrak A_j\beta_j
    \right]
        +
    \sum_{j=1}^{m}
    \log
    \frac{
        G\left(
            1+\frac{\alpha_j}{2}+\beta_j
        \right)
        G\left(
            1+\frac{\alpha_j}{2}-\beta_j
        \right)
    }{
        G(1+\alpha_j)
    }\notag \\
    &+\sum_{j=1}^{m}
    \left(
        \frac{\alpha_j^2}{4}-\beta_j^2
    \right)
    \log\left[
        \frac{\pi}{2}
        A^{2}
        \psi(a_j)
        \left(
            A^2-{a_j^2}
        \right)^{2k}
    \right]+
    \sum_{j=1}^{m}
    \left(
        \frac{\alpha_j^2}{4}-3\beta_j^2
    \right)
    \log\left[
        2
        \sqrt{
            1-\frac{a_j^2}{A^2}
        }
    \right].
    \label{eq:critical-FH-constant}
\end{align}}
Here, $\mathfrak{A}:=\sum_{j=1}^m\alpha_j$, $\mathfrak A_j
    :=
    \sum_{\ell<j}\alpha_\ell
    -
    \sum_{\ell>j}\alpha_\ell,j=1,\ldots,m$, and $G$ denotes the Barnes $G$-function. It is worth noting that the regularized integral of the Painlev\'e I Hamiltonian $h$, implicitly contained within the expression for $D_n(V_t)$ on the left-hand side of \eqref{eq:fh-regular-asymptotics}, still persists.
\end{remark}

Our second main result establishes the universality of the eigenvalue correlation kernel near the singular edge. As mentioned before, the case of $k=1$ was studied in \cite{ClaeysVan}. Here, we extend this to an arbitrary integer $k \ge 1$ and establish the universality result for the potential $V_{\mathbf{t}}$ with a general parameter set $(t_1, \dots, t_{2k})$.

\begin{theorem}\label{theorem: universality}
Let $V_{\mathbf t}=V+\sum_{j=1}^{2k}t_jV_j$ be a potential satisfying Assumptions~\ref{assumptions}.  Consider the multi-scaling limit where $n\to\infty$ and $t_j\to 0$ ($j=1,\ldots,2k$) such that each of the following limit exists:
\begin{equation}\label{lim stn-1}
 \widehat t_j:=\lim_{n\to\infty}c_j n^{\frac{4k-2j+4}{4k+3}}t_j\in\mathbb R,
 \qquad j=1,\ldots,2k,
\end{equation}
where the constants $c_1,\ldots,c_{2k}$ are defined in \eqref{eq:def c1} and~\eqref{eq:def cj}.  Set $c=\left(\frac{\pi}{2}(2A)^{\frac{4k+1}{2}}\psi(A)\right)^{\frac{2}{4k+3}}$, $s=\widehat t_1$, and $\boldsymbol\tau=(\widehat t_2,\ldots,\widehat t_{2k})$.  Then, the limiting kernel relation 
\eqref{eq:limiting kernel} holds at the critical edge $x^*=A$. 
\end{theorem}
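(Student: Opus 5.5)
We prove Theorem~\ref{theorem: universality} by Deift--Zhou nonlinear steepest descent for the Fokas--Its--Kitaev RH problem $Y$ of the orthonormal polynomials $p_j^{(\mathbf t)}$ with weight $e^{-nV_{\mathbf t}}$. The starting point is the Christoffel--Darboux representation
\[
K_n^{(\mathbf t)}(x,y)=e^{-\frac n2(V_{\mathbf t}(x)+V_{\mathbf t}(y))}\frac{\bigl(Y^{-1}_+(y)Y_+(x)\bigr)_{21}}{-2\pi i(x-y)} .
\]

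\emph{Normalization at infinity.} We use the $g$-function of the critical potential $V$ itself, not of $V_{\mathbf t}$. This keeps the equilibrium measure \eqref{eq:V-E-measure} fixed, with its $(A^2-x^2)^{2k+\frac12}$ vanishing. The perturbation $n\sum_j t_jV_j$ is absorbed into a modified phase near $\pm A$. Concretely, set
\[
\phi_{\mathbf t}(z)=\int_A^z \Bigl(\pi\psi(\xi)(\xi^2-A^2)^{2k+\frac12}-\tfrac12\sum_j t_j\,\bigl(h_j(\xi)/R(\xi)\bigr)\ldots\Bigr)d\xi ,
\]
or equivalently define $\phi$ through $(V_{\mathbf t}'-2g')$ with the $h_j$ of \eqref{definition: hj}. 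We then open lenses as usual. Because the inequality \eqref{variationalcondition:must-inequality1} is strict and $\psi>0$, the jumps away from $\pm A$ are exponentially close to the identity, uniformly for small $t_j$. The global parametrix is the standard $N$ built from $((z-A)/(z+A))^{1/4}$.

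\emph{Local parametrix at $A$.} This is the key construction. Define a conformal map $f(z)=c(z-A)+\dots$ with $\frac{4}{4k+3}f(z)^{\frac{4k+3}{2}}$ equal to the unperturbed phase, normalized so that $c$ is the constant of the theorem. Next, expand the $t_j$-corrections of the phase in powers of $f$. Assumption (iii) says that $h_j$ vanishes to order $j-1$ at $A$. Hence the $t_j$ term contributes at order $f^{\frac{2(2k-j)+1}{2}}$, up to higher-order corrections that also involve lower powers from $t_{j'}$ with $j'<j$. We then construct analytic functions $s_n(z)$ and $\tau_{j,n}(z)$ so that the phase equals $\theta(n^{\frac{2}{4k+3}}f(z);s_n(z),\boldsymbol\tau_n(z))$ from \eqref{definition: theta} exactly. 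This is done by a triangular solve: match the coefficients of $\zeta^{(2j+1)/2}$ successively and push the remainders into the lower coefficients.

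With the constants $c_j$ chosen accordingly, the scaling \eqref{lim stn-1} gives $s_n(A)\to\widehat t_1$ and $\tau_{j-1,n}(A)\to\widehat t_j$. The parametrix is
\[
P(z)=E(z)\,\Psi\bigl(n^{\frac{2}{4k+3}}f(z);s_n(z),\boldsymbol\tau_n(z)\bigr)e^{n\phi\sigma_3/2},
\]
with the prefactor $E$ analytic and chosen to match $N$. Matching holds because of the asymptotics \eqref{RHP Psi: c}, which give a mismatch of order $n^{-\frac{1}{4k+3}}$.

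\emph{Main obstacle.} The hardest step is the solvability of the $\Psi$ problem uniformly for parameters near $(\widehat t_1,\boldsymbol\tau)$. For real parameters, pole-freeness of the special solution of $P_{\rm I}^{2k}$ is known from \cite{Claeys}. However, $s_n(z)$ and $\tau_{j,n}(z)$ are complex for $z$ in the disk. So we need that, for $n$ large, the parameters lie in a small complex neighbourhood of the real point. There the RH problem for $\Psi$ remains solvable, because the set where it is solvable is open (by analytic Fredholm theory) and the bounds are uniform on compacts. The same remark treats the analogous parametrix at $-A$ for the even potential, where the same construction applies by symmetry. Then $R=S P^{-1}$ satisfies a small-norm problem: $R=I+\bigO(n^{-\frac{1}{4k+3}})$.

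\emph{Extracting the kernel.} Invert the transformations for $x,y=A+\frac{u}{cn^{2/(4k+3)}}$ inside the disk, first taking $u,v$ in the sector between $\Gamma_1$ and $\Gamma_2$ and then extending by analytic continuation. The exponential factors cancel against $e^{-\frac n2V_{\mathbf t}}$ and the factor $e^{\pm\theta}$. The matrix $E(y)^{-1}R^{-1}(y)R(x)E(x)$ equals $I+\bigO(|x-y|)+o(1)$, uniformly for bounded $u,v$. Finally, $f(x)n^{\frac{2}{4k+3}}\to u$. Together these give the kernel \eqref{def:limKer} evaluated at $s=\widehat t_1$ and $\boldsymbol\tau$. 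Uniform convergence on compacts, including $u=v$, follows from analyticity of the expression in $u$ and $v$.
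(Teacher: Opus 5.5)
Your overall route matches the paper's: fixed support $[-A,A]$, a local parametrix at $A$ built from the $P_{\rm I}^{2k}$ model $\Psi$ with $z$-dependent parameters, $R=I+\mathcal O(n^{-\frac{1}{4k+3}})$, and the kernel extracted via $\widehat\Psi$. Your remark about complex parameters fills in a point the paper leaves implicit. The parameters $n^{\frac{4k-2j+4}{4k+3}}t_jf_j(z)$ are complex in $U_{\delta,A}$, so one needs solvability and uniform asymptotics of $\Psi$ in a complex neighbourhood of $(\widehat t_1,\boldsymbol\tau)$. Note that this neighbourhood is made small by taking $\delta$ small, not $n$ large, since $n^{\frac{4k-2j+4}{4k+3}}t_jf_j(z)\to\widehat t_jf_j(z)/c_j$.

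\textbf{The normalization step fails as you state it.} Suppose $g$ is built from $\mu_V$ alone. Then on $[-A,A]$ one has $g_++g_--V_{\mathbf t}-\ell=-\sum_jt_jV_j$, so the $(1,2)$-entry of the jump of $T$ is $e^{-n\sum_jt_jV_j(x)}$ rather than $1$. In the scaling \eqref{lim stn-1}, $nt_j$ is of order $n^{\frac{2j-1}{4k+3}}\to\infty$ whenever $\widehat t_j\neq0$.

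After opening lenses, the jump on the whole interval is $\begin{pmatrix}0&e^{-n\sum t_jV_j}\\-e^{n\sum t_jV_j}&0\end{pmatrix}$. So the perturbation is not confined to neighbourhoods of $\pm A$. The standard outer parametrix built from $((z-A)/(z+A))^{1/4}$ does not solve this problem; you would need a Szeg\H{o} function whose logarithm grows like a positive power of $n$.

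Your $\phi_{\mathbf t}$, which contains the terms $t_jh_j/R$, is consistent only with the $\mathbf t$-dependent $g$ of Section~\ref{section: equilibrium measures}. There $g(z)=\int\log(z-u)\,d\nu_{\mathbf t}(u)$, with the signed measure $\nu_{\mathbf t}=\nu+\sum_jt_j\nu_j$. Each $\nu_j$ has density coming from $\varrho_j=\frac{1}{2\pi i}h_j/R$, has zero mass, and satisfies $2\int\log|x-u|\,d\nu_j(u)-V_j(x)=\ell_j$ on $[-A,A]$. This gives $g_++g_--V_{\mathbf t}-\ell_{\mathbf t}=0$ on $[-A,A]$, hence the jump $\begin{pmatrix}0&1\\-1&0\end{pmatrix}$. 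You must then check, as in Corollary~\ref{corollary: equilibrium measures} and Remark~\ref{remark: nust positive}, that two properties persist for small $t_j$: the strict inequality off $[-A-\delta,A+\delta]$, and positivity of the density on $(-A+\delta,A-\delta)$.

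\textbf{The local phase bookkeeping is also wrong.} Since $h_j$ vanishes to order exactly $j-1$ at $A$, one has $-\pi i\int_z^A\varrho_j(\xi)\,d\xi=d_j(z-A)^{\frac{2j-1}{2}}(1+\mathcal O(z-A))$. So the $t_j$-term sits at $f^{\frac{2j-1}{2}}$, not at $f^{\frac{2(2k-j)+1}{2}}$. With your exponent, $t_1$ would pair with $\tau_{2k-1}$ and the natural scaling would be $n^{\frac{2j+2}{4k+3}}t_j$. That contradicts \eqref{lim stn-1} and your own claim $s_n(A)\to\widehat t_1$.

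With the correct exponent no triangular solve is needed. The paper sets $f_1=(-\pi i\int_z^A\varrho_1)f_A^{-\frac12}$ and $f_j=\frac{2j-1}{4}(-\pi i\int_z^A\varrho_j)f_A^{-\frac{2j-1}{2}}$. These are analytic with $f_j(A)=c_j\neq0$, so each $t_j$ occupies exactly one slot, and $\theta\bigl(n^{\frac{2}{4k+3}}f_A(z);n^{\frac{4k+2}{4k+3}}t_1f_1(z),\ldots,n^{\frac{4}{4k+3}}t_{2k}f_{2k}(z)\bigr)=n\phi_{\mathbf t}(z)$ holds exactly.

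\textbf{A smaller point in the kernel step.} $E^{(A)}$ contains the factor $n^{\frac{\sigma_3}{2(4k+3)}}$, so conjugation amplifies your $\mathcal O(|x-y|)$ to $\mathcal O(n^{\frac{1}{4k+3}}|x-y|)=\mathcal O(n^{-\frac{1}{4k+3}})$. It is this balance, not merely $|x-y|\to0$, that yields \eqref{uv4}.
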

\begin{remark}
    Since $q(s,\tau_1,\dots,\tau_{2k-1})$ is pole-free for all $s, \tau_j \in \mathbb{R}$ with $j=1,\dots,2k-1$ (cf. \cite{Claeys}), an argument similar to the $k=1$ case in \cite{ClaeysVan} shows that the kernel $K^{(2k)}(u,v;s,\boldsymbol\tau)$ is real for real $u,v,s$ and $\boldsymbol{\tau}$. Moreover, this kernel admits the integral representation
    \begin{equation}
        K^{(2k)}(u,v;s,\boldsymbol\tau)=\frac{1}{2\pi i}\int_{-\infty}^s\Psi_1(u;\sigma,\boldsymbol{\tau})\Psi_1(v;\sigma,\boldsymbol{\tau})d\sigma.
    \end{equation}
\end{remark}

\subsection{Organization of the Paper}
In Section \ref{section: equilibrium measures}, we derive the relevant equilibrium measures, which play a crucial role in the subsequent RH analysis. In Sections \ref{Section:OP and Y} and \ref{Section: steepest descent t>0}, we perform the powerful Deift-Zhou steepest descent analysis of the RH problems. With the required asymptotic results, we finally prove our main theorems in Sections \ref{sec:proof 1.3} and \ref{section: universality}. Finally, in the Appendix, we provide a brief discussion on the properties of Cauchy operators, the $P_{\rm I}^{2k}$ parametrix, and the Airy model RH problem.

\section{Equilibrium measures}\label{section: equilibrium measures}
In this section, we consider the external fields $V_{\mathbf{t}}=V+\sum_{j=1}^{2k}t_jV_j$ that satisfy the
Assumptions \ref{assumptions} proposed in the beginning of Section
\ref{subsection: statement of results}. 
Similar to \cite{ClaeysVan}, we seek signed equilibrium measures $\nu_{\mathbf{t}}$ in the following form,
\begin{equation}\label{definition: nust}
    \nu_{\mathbf{t}}=\nu+t_1\nu_1+...+t_{2k}\nu_{2k}.
\end{equation}
From Assumption \ref{assumptions} (ii), we know that $\nu$ can be written as,
\begin{equation}\label{definition: h0}
    d\nu(x) = \rho(x)\chi_{[-A,A]}dx=\psi(x)(A^2-x^2)^\frac{4k+1}{2}\chi_{[-A,A]}dx,
\end{equation}
where $\psi(x)$ is positive on $[-A,A]$, and $\chi_{[-A,A]}$ denotes the indicator function of $[-A,A]$. Note that we use the symbols $\rho(x)$ and $\rho_j(x)$ to denote the densities on $[-A,A]$, which is the upper boundary values of the corresponding analytic functions $\varrho(z)$ or $\varrho_j(z)$. To be specific, we have
\begin{equation}\label{definition: psi0}
    \varrho(z)=\frac{1}{2\pi
        i}R(z) h(z),\qquad\mbox{for $z\in\mathcal V\setminus[-A,A]$,}
\end{equation}
with $h$ analytic in the neighborhood $\mathcal V$ of the real line and $R$ given in \eqref{definition: R}. Then $\nu$ satisfies the following condition: there exists
$\ell\in\mathbb{R}$ such that
\begin{align}
    \label{variationalcondition:nu0-equality}
    & 2\int\log|x-u|d\nu(u)-V(x)=\ell, && \mbox{for $x\in [-A,A]$,} \\
    \label{variationalcondition:nu0-inequality}
    & 2\int\log|x-u|d\nu(u)-V(x)< \ell, && \mbox{for $x\in \mathbb R\setminus [-A,A]$.}
\end{align}

In order to construct the remaining measures $\nu_j$, $j=1,...,2k$, note that the fractional residue theorem gives
\begin{equation}\label{hj real}
    h_j(x)=-\frac{1}{\pi i}\PVint_{-A}^A
    R_+(u)V_j'(u)\frac{du}{u-x},\qquad \mbox{for $x\in[-A,A]$,}
\end{equation}
where $h_j$ is defined in \eqref{definition: hj} and the integral takes the Cauchy principal value. Hence, $h_j$ is real on $[-A,A]$.

\begin{lemma}
    Define signed measures
    $\nu_j$, $j=1,...,2k$, that are supported on $[-A,A]$ as
    \begin{equation}
        d\nu_j(x)=\rho_{j}(x)\chi_{[-A,A]}dx,\qquad j=1,...,2k,
    \end{equation}
    where the corresponding $\varrho_{j}(z)$ is defined by
\begin{equation}\label{definition: psij}
        \varrho_j(z)=\frac{1}{2\pi
        i}\frac{h_j(z)}{R(z)},\qquad\mbox{for $z\in\mathcal V\setminus[-A,A]$.}
    \end{equation}
    Here, $h_j$-s are given by {\rm(\ref{definition: hj})}; see also {\rm (\ref{hj real})} for its expression on $[-A,A]$, and
    $R$ is given by {\rm (\ref{definition: R})}. Then, 
    \begin{equation}\label{zero mass nuj}
        \nu_j([-A,A])=\int_{-A}^A\rho_{j}(u)du=0,
    \end{equation}
    and there exist constants $\ell_j\in\mathbb R$ such that
    \begin{equation}\label{variational equality: nuj}
        2\int\log|x-u|d\nu_j(u)-V_j(x)=\ell_j, \qquad\mbox{for $x\in[-A,A]$.}
    \end{equation}
\end{lemma}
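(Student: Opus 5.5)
The plan is to study the Cauchy transform of $\nu_j$ through the analytic function $\varrho_j$ and read off both claims from its behaviour at infinity and its jump across $[-A,A]$.

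\emph{Behaviour at infinity.} First deform the contour in \eqref{definition: hj}. For $z$ outside $\gamma$ but still in $\mathcal V$, the residue theorem gives
\[
\frac{1}{2\pi i}\oint_\gamma R(\xi)V_j'(\xi)\frac{d\xi}{\xi-z}=h_j(z)-R(z)V_j'(z),
\]
where $h_j$ is continued analytically into all of $\mathcal V$. Substituting into \eqref{definition: psij}, write
\[
2\pi i\,\varrho_j(z)-V_j'(z)=\frac{1}{R(z)}\cdot\frac{1}{2\pi i}\oint_\gamma \frac{R(\xi)V_j'(\xi)}{\xi-z}\,d\xi \qquad\text{for $z$ outside $\gamma$.}
\]
The right-hand side is analytic outside $\gamma$ and is $\bigO(z^{-2})$ at infinity, because the contour integral is $\bigO(z^{-1})$ and $1/R(z)\sim z^{-1}$.

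\emph{Cauchy transform and zero mass.} Since $R_\pm=\pm R_+$ on $(-A,A)$, we have $\rho_j=\varrho_{j,+}=-\varrho_{j,-}$ there, so $\varrho_{j,+}-\varrho_{j,-}=2\rho_j$. The function $\varrho_j$ has only inverse square-root singularities at $\pm A$, which are integrable. Hence
\[
G_j(z):=\int_{-A}^A\frac{\rho_j(u)}{u-z}\,du
\]
satisfies $G_{j,+}-G_{j,-}=2\pi i\rho_j$. Consider $F(z)=G_j(z)-\pi i\,\varrho_j(z)\cdot(\text{sign convention})$, fixed so that its jump vanishes; concretely, $G_j-\pi i\varrho_j+\frac12 V_j'$ has no jump on $(-A,A)$. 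This function has at worst square-root singularities at $\pm A$, so those are removable, and it is analytic in $\mathcal V$. By the previous step it is $\bigO(z^{-2})$ at infinity on the region outside $\gamma$.

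Use the same contour argument to express $G_j(z)$ for large $z$ as a contour integral of $\varrho_j$. This gives
\[
G_j(z)=\pi i\,\varrho_j(z)-\tfrac12V_j'(z)+\bigO(z^{-2})\cdot(\ldots),
\]
or more directly
\[
G_j(z)=-\frac1{2}\cdot\frac{1}{2\pi i}\oint_{\gamma'}\frac{2\pi i\,\varrho_j(\xi)-V_j'(\xi)}{\xi-z}\,d\xi .
\]
Since $2\pi i\varrho_j-V_j'=\bigO(\xi^{-2})$ on large circles, shrinking the contour around $[-A,A]$ shows $G_j(z)=\bigO(z^{-2})$. The coefficient of $z^{-1}$ in $G_j$ is $-\int\rho_j$, so \eqref{zero mass nuj} follows.

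\emph{Variational equality.} Differentiate $g_j(x)=2\int\log|x-u|\,\rho_j(u)\,du$ for $x\in(-A,A)$. This gives $g_j'(x)=-2\,\PVint\rho_j(u)\,\frac{du}{u-x}=-(G_{j,+}+G_{j,-})(x)$. From the identity above, on $(-A,A)$,
\[
G_{j,\pm}=\pi i\,\varrho_{j,\pm}-\tfrac12V_j'+E_\pm,
\]
where $E$ is the analytic remainder. If that remainder is $E\equiv0$, summing and using $\varrho_{j,+}+\varrho_{j,-}=0$ gives $G_{j,+}+G_{j,-}=-V_j'$. Therefore $g_j'=V_j'$ on $(-A,A)$. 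Integrating, and using continuity at $\pm A$ (the logarithmic potential of a measure with integrable density is continuous), gives \eqref{variational equality: nuj} with some constant $\ell_j$, which is real because $\rho_j$ and $V_j$ are real.

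\emph{Main obstacle.} The delicate point is the bookkeeping of analytic continuation: showing that $G_j-\pi i\varrho_j+\frac12V_j'$ vanishes identically rather than merely being analytic. I would handle it by Liouville-type reasoning on the contour representation. Note that $G_j$ is defined on all of $\mathbb C\setminus[-A,A]$, whereas $\varrho_j$ and $V_j'$ live only on $\mathcal V$. So I would write $G_j(z)$ for $z$ inside $\gamma$ as $\frac{1}{2}\oint_\gamma\varrho_j(\xi)\frac{d\xi}{\xi-z}$ (up to sign), by collapsing $\gamma$ onto $[-A,A]$. Then I would plug in \eqref{definition: hj} for $h_j$ and exchange the integrals. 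The $V_j'$ part of the resulting double integral produces $-\frac12V_j'(z)$ plus $\pi i\varrho_j(z)$. This is exactly the residue computation used in \eqref{hj real}, so I expect it to close without further input.
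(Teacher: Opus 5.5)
Your argument is correct and is essentially the paper's proof. Your $2\pi i\varrho_j-V_j'$ equals $-2F_j$ for the paper's auxiliary function $F_j(z)=\frac{1}{2\pi i R(z)}\int_{-A}^A R_+(u)V_j'(u)\frac{du}{u-z}$. Both proofs identify this function with twice the Cauchy transform of $\rho_j$, read off zero mass from its $\mathcal O(z^{-2})$ decay, and obtain $2\,\PVint_{-A}^A\frac{\rho_j(u)}{x-u}\,du=V_j'(x)$ from the sum of boundary values.

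The step you flag as the main obstacle needs no exchange of integrals. Your first step already continues $2\pi i\varrho_j-V_j'$ analytically to all of $\mathbb C\setminus[-A,A]$, so $G_j-\pi i\varrho_j+\frac12V_j'$ is entire. It is $\mathcal O(z^{-1})$ at infinity; it is not yet $\mathcal O(z^{-2})$, as you assert before the mass is known to vanish, but tending to zero is enough. Liouville's theorem then makes it vanish identically, which gives both $E\equiv0$ and the zero mass at once.
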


\begin{proof}
    Our proof is similar to the work in \cite{ClaeysVan}. Define, for $j=1,...,2k$, the auxiliary functions
    \begin{equation}\label{definition: F}
        F_j(z)=\frac{1}{2\pi iR(z)}\int_{-A}^A
        R_+(u)V_j'(u)\frac{du}{u-z},\qquad\mbox{for $z\in\mathbb
        C\setminus[-A,A]$}.
    \end{equation}
   Then, 
    \begin{align}
        \label{SP1}
        & F_{j,+}(x)-F_{j,-}(x)=-2\pi i\rho_{j}(x), &\mbox{for $x\in[-A,A]$,}
        \\[1ex]
        \label{SP2}
        &F_{j,+}(x)+F_{j,-}(x)=V_j'(x), &\mbox{for $x\in[-A,A]$.}
    \end{align}
Since $F_j$ is analytic in $\mathbb C\setminus[-A,A]$, and $F_j(z)=\bigO(z^{-2})$ as $z\to\infty$, its Cauchy representation and \eqref{SP1} give that
    \[
        F_j(z)=-\int_{-A}^A\frac{\rho_{j}(u)}{u-z}du=
        z^{-1}\int_{-A}^A\rho_{j}(u)du+\bigO(z^{-2}),\qquad\mbox{as $z\to\infty$}.
    \]
Comparing with the fact that $F_j(z)=\bigO(z^{-2})$ as $z\to\infty$, we obtain $\int_{-A}^A\rho_{j}(u)du=0$.
From \eqref{SP2}, we obtain
    \begin{equation}
        2\,\PVint_{-A}^A\frac{\rho_{j}(u)}{x-u}\,du=V_j'(x).
    \end{equation}
    Therefore,
    \begin{equation}
        \frac{d}{dx}\left(2\int\log|x-u|\,d\nu_j(u)-V_j(x)\right)
        =2\,\PVint_{-A}^A\frac{\rho_{j}(u)}{x-u}\,du-V_j'(x)=0.
    \end{equation}
This proves \eqref{variational equality: nuj} and completes the proof of the lemma.
\end{proof}

\begin{corollary}
\label{corollary: equilibrium measures}
    Let $\nu_{\mathbf{t}}=\nu+t_1\nu_1+...+t_{2k}\nu_{2k}$.  Then the signed measure
    $d\nu_{\mathbf{t}}(x)=\rho_{\mathbf{t}}(x)\chi_{[-A,A]}dx$ is the fixed-support modified equilibrium measure, where
    \begin{equation}\label{eq:definition: psist}
        \varrho_{\mathbf{t}}=\varrho+t_1\varrho_1+...+t_{2k}\varrho_{2k},\qquad\mbox{on $\mathcal
        V\setminus[-A,A]$,}
    \end{equation}
    with $\varrho$ given by {\rm (\ref{definition: psi0})} and
    $\varrho_{j}$, $j=1,2,...,2k$ given by {\rm (\ref{definition: psij})}.
    Thus, we have $\nu_{\mathbf{t}}([-A,A])=1$.  
    Further, there exist constants $\ell_{\mathbf{t}}\in\mathbb{R}$ such
    that for any $\delta>0$ there are $\varepsilon,\kappa>0$
    sufficiently small such that for
    $t_j\in[-\varepsilon,\varepsilon]$, we have 
    \begin{align}
    \label{nust: variational equality}
    & 2\int\log|x-u|d\nu_{\mathbf{t}}(u)-V_{\mathbf{t}}(x)=\ell_{\mathbf{t}},
        &&\mbox{for $x\in[-A,A]$.}
    \\
    \label{nust: variational inequality}
    & 2\int\log|x-u|d\nu_{\mathbf{t}}(u)-V_{\mathbf{t}}(x)<\ell_{\mathbf{t}}-\kappa,
        &&\mbox{for $x\in\mathbb R\setminus [-A-\delta, A+\delta]$.}
\end{align}
\end{corollary}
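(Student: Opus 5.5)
The plan is to obtain the corollary by linearity from the preceding lemma and the variational conditions for $\nu$, and then to get the strict inequality by a perturbation argument.

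\emph{Mass, density and equality.} Summing $\nu$ and $t_j\nu_j$ gives $d\nu_{\mathbf t}=\rho_{\mathbf t}\chi_{[-A,A]}dx$ with $\rho_{\mathbf t}=\rho+\sum_j t_j\rho_j$. These densities are the boundary values of $\varrho+\sum_j t_j\varrho_j$, which is \eqref{eq:definition: psist}. For the total mass, $\nu([-A,A])=1$ because $\nu=\mu_V$ is a probability measure, and each $\nu_j$ has zero mass by \eqref{zero mass nuj}; hence $\nu_{\mathbf t}([-A,A])=1$. For the equality \eqref{nust: variational equality}, add \eqref{variationalcondition:nu0-equality} to $t_j$ times \eqref{variational equality: nuj}. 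Since $V_{\mathbf t}=V+\sum t_jV_j$, this gives the equality on $[-A,A]$ with $\ell_{\mathbf t}=\ell+\sum_j t_j\ell_j$.

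\emph{Strict inequality.} Set
\[
\Phi_{\mathbf t}(x)=2\int\log|x-u|\,d\nu_{\mathbf t}(u)-V_{\mathbf t}(x)-\ell_{\mathbf t},
\]
so that $\Phi_{\mathbf t}=\Phi_0+\sum_j t_j\Phi_j$, where
\[
\Phi_j(x)=2\int\log|x-u|\,d\nu_j(u)-V_j(x)-\ell_j .
\]
By \eqref{variationalcondition:nu0-inequality}, $\Phi_0<0$ on $\mathbb R\setminus[-A,A]$. The argument splits into a compact region and a neighbourhood of infinity.

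\emph{Compact region.} Fix $\delta>0$ and a large $M$ (chosen below). On the compact set $K=\{A+\delta\le|x|\le M\}$, $\Phi_0$ is continuous and negative, so $\Phi_0\le-2\kappa$ there for some $\kappa>0$. Each $\Phi_j$ is continuous and hence bounded on $K$; here the logarithmic potential of $\nu_j$ is continuous because $\rho_j$ is integrable, having at most inverse square root singularities at $\pm A$. Choosing $\varepsilon$ with $\varepsilon\sum_j\sup_K|\Phi_j|<\kappa$ gives $\Phi_{\mathbf t}<-\kappa$ on $K$.

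\emph{Neighbourhood of infinity.} This is the step needing the most care, since $\Phi_j$ need not be bounded at infinity. Because $\nu_j$ has zero mass, $\int\log|x-u|\,d\nu_j(u)=O(1/|x|)$ as $|x|\to\infty$. Because $\nu_{\mathbf t}$ has unit mass, $2\int\log|x-u|\,d\nu_{\mathbf t}(u)=2\log|x|+O(1)$ uniformly in $\mathbf t$. Therefore
\[
\Phi_{\mathbf t}(x)\le 2\log|x|+C-V_{\mathbf t}(x).
\]
By \eqref{conditionV}, which holds uniformly for $t_j\in[-\delta_0,\delta_0]$, the right-hand side tends to $-\infty$ uniformly in $\mathbf t$. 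So $M$ can be chosen, independently of $\mathbf t$ with $|t_j|\le\min(\varepsilon,\delta_0)$, such that $\Phi_{\mathbf t}<-\kappa$ for $|x|\ge M$.

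Combining the compact region and the neighbourhood of infinity gives \eqref{nust: variational inequality}. One point needs care: $M$ must be fixed before $\kappa$ and $\varepsilon$ are chosen on $K$. This works because the bound at infinity is uniform in $\mathbf t$ and does not depend on $\kappa$ once $\kappa$ is taken to be at most $1$.
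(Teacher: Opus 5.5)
Your proof is correct and follows the same route as the paper: linearity gives $\ell_{\mathbf t}=\ell+\sum_j t_j\ell_j$, the equality on $[-A,A]$ and the unit mass, and the strict inequality for $\nu$ is then perturbed. Your separate treatment of $|x|\ge M$ through the uniform growth condition \eqref{conditionV} is a needed refinement: the paper's one-line bound $\bigl|\sum_j t_jI_j(x)\bigr|<\tfrac12\kappa$ on all of $\mathbb R\setminus[-A-\delta,A+\delta]$ cannot hold when some $V_j$ is unbounded (e.g.\ $V_1=x^2-V$), and your compact-plus-infinity argument closes that gap.
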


\begin{proof}
     From (\ref{zero mass nuj}) and  the fact that $\nu([-A,A])=1$, it is clear that $\nu_{\mathbf{t}}([-A,A])=1$. Next, with
$\ell_{\mathbf{t}}=\ell+t_1\ell_1+...+t_{2k}\ell_{2k}$, we have
    \begin{equation}\label{proof: corrolary equilibrium measures: eq1}
        2\int \log|x-u|d\nu_{\mathbf{t}}(u)-V_{\mathbf{t}}(x)-\ell_{\mathbf{t}}= I(x)+t_1
        I_1(x)+...+t_{2k}I_{2k}(x)
    \end{equation}
    where
     \[
        I(x)= 2\int
        \log|x-u|d\nu(u)-V(x)-\ell
    \]
    \[
        I_j(x)= 2\int
        \log|x-u|d\nu_j(u)-V_j(x)-\ell_j,\qquad j=1,...2k.
    \]
    Then, condition (\ref{nust: variational equality}) follows from
    (\ref{variationalcondition:nu0-equality}) and (\ref{variational equality: nuj}). Similar to \cite{ClaeysVan}, on can check that there exists $\kappa>0$ such that
    \begin{equation}\label{proof: corrolary equilibrium measures: eq2}
        I(x)<-\frac{3}{2}\kappa \quad \mbox{and} \quad \left|\sum_{j=1}^{2k}t_jI_j(x)\right|<\frac{1}{2}\kappa \qquad\mbox{on
        $\mathbb{R}\setminus[-A-\delta,A+\delta]$,}
    \end{equation}
     with $t_j\in[-\epsilon,\epsilon]$, $j=1,...,2k$. This completes the proof of the lemma.
\end{proof}

\begin{remark}\label{remark: nust positive}
Since $\nu$ has a strictly positive density on
$(-A,A)$, we have that for any $\delta>0$, $\nu_{\mathbf{t}}$ is positive on $(-A+\delta,A-\delta)$ for sufficiently small $t_j$ and $j=1,...,2k$.
\end{remark}

\section{Riemann-Hilbert problem and steepest descent analysis as $t\to 0$}\label{Section:OP and Y}

Consider the matrix valued function $Y(z) = Y^{(n)}(z;V_{\mathbf{t}})$, defined by
\begin{equation}
Y(z) = \begin{pmatrix}\label{Y definition}
(\kappa_n^{(\mathbf{t})})^{-1}p_n^{(\mathbf{t})} (z) & \frac{(\kappa_n^{(\mathbf{t})})^{-1}}{2\pi i} \int_{\mathbb{R}} \frac{p_n^{(\mathbf{t})}(x)w_{\mathbf{t}}(x)}{x-z}dx \\
-2\pi i \kappa_{n-1}^{(\mathbf{t})} p^{(\mathbf{t})}_{n-1}(z) & -\kappa_{n-1}^{(\mathbf{t})} \int_{\mathbb{R}} \frac{p^{(\mathbf{t})}_{n-1}(x)w_{\mathbf{t}}(x)}{x-z}dx
\end{pmatrix},
\end{equation}
where $p_n^{(\mathbf{t})}$ is given in \eqref{def of OP}. It is known \cite{FokasItsKitaev} that $Y$ can be characterized as the following RH problem. 
\subsubsection*{RH problem for $Y$}
\begin{itemize}
\item[(a)] $Y : \mathbb{C}\setminus \mathbb{R} \to \mathbb{C}^{2\times 2}$ is analytic.
\item[(b)] $Y$ satisfies the following jump conditions
\begin{equation}\label{jump relations of Y}
Y_{+}(x) = Y_{-}(x) \begin{pmatrix}
1 & w_{\mathbf{t}}(x) \\ 0 & 1
\end{pmatrix}, \hspace{0.5cm} \mbox{ for } x \in \mathbb{R},
\end{equation}
where $w_{\mathbf{t}}(x)$ is defined in \eqref{weight_general}.
\item[(c)] As $z \to \infty$, we have $Y(z) = \left(I + \bigO(z^{-1})\right) z^{n\sigma_{3}}$, where $\sigma_{3} = \begin{pmatrix}
1 & 0 \\ 0 & -1
\end{pmatrix}$.

\end{itemize}

In the next subsections, we will use the Deift-Zhou steepest descent method \cite{DKMVZ1,DKMVZ2,DeiftZhou,DeiftZhou1992} to analyze the RH problem $Y$ introduced above.

\subsection{Normalization of the RH problem at infinity: $Y\mapsto T$}
    \label{subsection: sd-normalization}

In order to normalize the RH problem for $Y$ at infinity, the signed fixed-support modified equilibrium measures $\nu_{\mathbf{t}}$, introduced in Section
\ref{section: equilibrium measures}, play a key role. Consider 
\begin{equation}\label{definition: gst}
    g(z)=\int_{-A}^A\log(z-u)d\nu_{\mathbf{t}}(u),
        \qquad\mbox{for $z\in\mathbb C\setminus(-\infty,A]$,}
\end{equation}
where $\log(\cdot)$ takes the principal branch.
From \eqref{definition: gst} and condition \eqref{nust: variational equality}, it follows that
\begin{equation}\label{property g: 1}
    g_{+}(x)+g_{-}(x)-V_{\mathbf{t}}(x)-\ell_{\mathbf{t}}=0,\qquad\mbox{for
    $x\in[-A,A]$.}
\end{equation}
One can also show that
\begin{equation}\label{property g: 2}
    g_{+}(x)-g_{-}(x)=2\pi i\int_x^A d\nu_{\mathbf{t}}(u),\qquad \mbox{for $x\in\mathbb
    R$,}
\end{equation}
so that since $\nu_{\mathbf{t}}$ is supported on $[-A,A]$ and has mass of one,
\begin{equation}\label{property g: 3}
    g_{+}(x)-g_{-}(x)=
    \begin{cases}
        2\pi i, & \mbox{for $x<-A$,} \\
        0, & \mbox{for $x>A$.}
    \end{cases}
\end{equation}

Now we are ready to perform the first transformation $Y\mapsto T$.
Define the matrix valued function $T$ as
\begin{equation}\label{TinY}
    T(z)=e^{-\frac{1}{2} n\ell_{\mathbf{t}}\sigma_3} Y(z) e^{-ng(z)\sigma_3}
    e^{\frac{1}{2}n\ell_{\mathbf{t}}\sigma_3}, \qquad \mbox{for
    $z\in\mathbb{C}\setminus\mathbb R$,}
\end{equation}
where $\ell_{\mathbf{t}}$ is the constant that appears in the variational
conditions (\ref{nust: variational equality}) and (\ref{nust: variational inequality}).
It is straightforward to check that $T$ is a
solution to the following RH problem.

\subsubsection*{RH problem for $T$:}
\begin{itemize}
    \item[(a)] $T:\mathbb{C}\setminus \mathbb{R}\to\mathbb{C}^{2\times 2}$ is analytic.
    \item[(b)] $T_+(x)=T_-(x)v_T(x)$ for $x\in\mathbb{R}$, with
        \begin{equation} \label{RHP T: b}
            v_T(x) =
            \begin{cases}
                \begin{pmatrix}
                    e^{-n(g_{+}(x)-g_{-}(x))} & 1 \\
                    0 & e^{n(g_{+}(x)-g_{-}(x))}
                \end{pmatrix}, & \mbox{on $(-A,A)$,} \\[3ex]
                \begin{pmatrix}
                    1 & e^{n(g_{+}(x)+g_{-}(x)-V_{\mathbf{t}}(x)- \ell_{\mathbf{t}})} \\
                    0 & 1
                \end{pmatrix}, & \mbox{on $\mathbb R\setminus (-A,A)$.}
            \end{cases}
        \end{equation}
    \item[(c)] $T(z)=I+\bigO(\frac{1}{z})$,\qquad as $z\to\infty$.
\end{itemize}

\subsection{Opening of the lens: $T\mapsto S$}
    \label{subsection: sd-lens}
We next introduce a function $\phi_{\mathbf{t}}$ as follows,
\begin{equation}\label{definition: phist}
    \phi_{\mathbf{t}}(z) =-\pi i\int_z^A\varrho_{\mathbf{t}}(\xi)d\xi,\qquad\mbox{for
    $z\in\mathcal V\setminus(-\infty,A]$,}
\end{equation}
where $\varrho_{\mathbf{t}}$ is defined by (\ref{eq:definition: psist}), and the path of integration does not cross $(-\infty,A]$. Then $\phi_{\mathbf{t}}$ satisfies
\begin{equation}
    -2\phi_{\mathbf{t},+}(x)=2\phi_{\mathbf{t},-}(x)=2\pi i\int_x^{A} d\nu_{\mathbf{t}}(u)=g_{+}(x)-g_{-}(x),
    \quad\mbox{for $x\in(-A,A)$,}
   \label{equation-phig-bulk}
\end{equation}
\begin{equation}\label{property phi: 2}
    2g(x)-V_{\mathbf{t}}(x)-\ell_{\mathbf{t}}=-2\phi_{\mathbf{t}}(x),\qquad \mbox{on $\mathcal{V}\setminus(-\infty,-A]$.}
\end{equation}
Using \eqref{property g: 3}, this yields
\begin{equation}
    g_{+}(x)+g_{-}(x)-V_{\mathbf{t}}(x)-\ell_{\mathbf{t}}
   =-2\phi_{\mathbf{t},-}(x)+2\pi
    i, \quad \mbox{on $(-\infty,-A)$}.\label{property phi: 3}
\end{equation}

Inserting \eqref{equation-phig-bulk}, \eqref{property phi: 2}, and \eqref{property phi: 3} into \eqref{RHP T: b}, the jump matrix for
$T$ can be written in terms of $\phi_{\mathbf{t}}$ as
\begin{equation}
    v_T(x)=
    \begin{cases}
        \begin{pmatrix}
            e^{2n\phi_{\mathbf{t},+}(x)} & 1 \\
                    0 & e^{2n\phi_{\mathbf{t},-}(x)}
                \end{pmatrix}, & \mbox{on $(-A,A)$,} \\[3ex]
                \begin{pmatrix}
                    1 & e^{-2n\phi_{\mathbf{t},-}(x)} \\
                    0 & 1
                \end{pmatrix}, & \mbox{on $\mathbb R\setminus (-A,A)$.}
    \end{cases}
\end{equation}
It is straightforward to check that $v_T$ has the following factorization on the
interval $(-A,A)$,
\begin{equation}\label{factorization}
    v_T(x)
    =
    \begin{pmatrix}
        1 & 0 \\
        e^{2n \phi_{\mathbf{t},-}(x)} & 1
    \end{pmatrix}
    \begin{pmatrix}
        0 & 1 \\
        -1 & 0
    \end{pmatrix}
    \begin{pmatrix}
        1 & 0 \\
        e^{2n\phi_{\mathbf{t},+}(x)} & 1
    \end{pmatrix},\qquad \mbox{on $(-A,A)$.}
\end{equation}
This motivates us to introduce the following transformation 
\begin{equation}\label{SinT}
    S(z)=
    \begin{cases}
        T(z), & \mbox{for $z$ outside the lens.} \\[1ex]
        T(z)
            \begin{pmatrix}
                1 & 0\\
                -e^{2n\phi_{\mathbf{t}}(z)} & 1
            \end{pmatrix}, & \mbox{for $z$ in the upper part of the lens,}\\[3ex]
        T(z)
            \begin{pmatrix}
                1 & 0 \\
                e^{2n\phi_{\mathbf{t}}(z)}& 1
            \end{pmatrix}, & \mbox{for $z$ in the lower part of the lens,}
    \end{cases}
\end{equation}
see Figure \ref{figure: opening of the lens} for an illustration. Then $S$ is the unique solution to the following RH problem.
\begin{figure}[t]
\begin{center}
    \setlength{\unitlength}{1mm}
    \begin{picture}(137.5,26)(-2.5,11.5)

        \put(45,25){\thicklines\circle*{.8}} \put(42.5,27){\small $-A$}
        \put(85,25){\thicklines\circle*{.8}} \put(86,27){\small $A$}

        \put(94,25){\thicklines\vector(1,0){.0001}}
        \put(29.6,25){\line(1,0){70.4}} \put(38,25){\thicklines\vector(1,0){.0001}}
        \put(66,25){\thicklines\vector(1,0){.0001}}

        \qbezier(45,25)(65,45)(85,25) \put(66,35){\thicklines\vector(1,0){.0001}}
        \qbezier(45,25)(65,5)(85,25) \put(66,15){\thicklines\vector(1,0){.0001}}

    \end{picture}
    \caption{The contours $\Sigma_S$ of the RH problem for $S$}
    \label{figure: opening of the lens}
\end{center}
\end{figure}

\subsubsection*{RH problem for $S$:}
\begin{itemize}
    \item[(a)] $S:\mathbb{C}\setminus\Sigma_S\to\mathbb{C}^{2\times 2}$ is analytic, where the contours $\Sigma_S$ are shown in Figure \ref{figure: opening of the lens}.
    \item[(b)] $S_+(z)=S_-(z)v_S(z)$ for $z\in\Sigma_S$, with
        \begin{equation}\label{RHP S: b}
            v_S(z)=
            \begin{cases}
                \begin{pmatrix}
                    0 & 1 \\
                    -1 & 0 \\
                \end{pmatrix}, & \mbox{on $(-A,A)$,}
                \\[3ex]
                \begin{pmatrix}
                    1 & 0 \\
                    e^{2n \phi_{\mathbf{t}}(z)} & 1 \\
                \end{pmatrix}, & \mbox{on $\Sigma_S\cap\mathbb C_\pm$,}
                \\[3ex]
                \begin{pmatrix}
                    1 & e^{-2n\phi_{\mathbf{t},-}(x)} \\
                    0 & 1
                \end{pmatrix}, & \mbox{on $\mathbb R\setminus(-A,A)$.}
            \end{cases}
        \end{equation}
    \item[(c)] $S(z)=I+\bigO(\frac{1}{z})$, \qquad as $z\to\infty$.
    \end{itemize}

\subsection{Global parametrix for $\hat{P}^{(\infty)}$}\label{sec:hat Pinfty}
    \label{subsection: sd-outside}
From the Cauchy-Riemann conditions, we have 
\begin{equation}\label{Re phist<0}
    \Re\phi_{\mathbf{t}}(z)<0,\quad\mbox{for $|\Im z|\neq 0$ small and $-A+\delta<\Re z
    <A-\delta$}.
\end{equation}
This, combining with \eqref{nust: variational inequality}, gives that all jumps bounded away from $[-A,A]$ are exponentially close to the identity as $n\to\infty$. Then we seek the following RH problem for the global
parametrix $\hat{P}^{(\infty)}$.

\subsubsection*{RH problem for $\hat{P}^{(\infty)}$:}
\begin{itemize}
    \item[(a)] $\hat{P}^{(\infty)}:\mathbb{C}\setminus [-A,A]\to\mathbb{C}^{2\times 2}$ is
        analytic.
    \item[(b)] $\hat{P}^{(\infty)}_+(x)=\hat{P}^{(\infty)}_-(x)
        \begin{pmatrix}
            0 & 1 \\
            -1 & 0
        \end{pmatrix}$,\qquad for $x\in(-A,A)$.
    \item[(c)] $\hat{P}^{(\infty)}(z)=I+\bigO(\frac{1}{z})$,\qquad as $z\to\infty$.
\end{itemize}

It is well known, see for example \cite{Deift,DKMVZ1}, that the solution
$\hat{P}^{(\infty)}$ is given by
\begin{equation}\label{definition: Pinfty}
    \hat{P}^{(\infty)}(z)=M^{-1}a(z)^{-\sigma_3}M ,\qquad\mbox{for $z\in\mathbb C\setminus[-A,A]$,}
\end{equation}
with $a(z)=\left(\frac{z-A}{z+A}\right)^\frac{1}{4}$, $M=\frac{I+i\sigma_1}{\sqrt{2}}$.

\subsection{Local parametrices at the critical endpoints $\pm A$}
    \label{subsection: parametrix near b}

In this subsection, we perform a local analysis near the critical endpoint $A$. Let
$U_{\delta,A}=\{z\in\mathbb{C}:|z-A|<\delta\}$ be a small disk with center $A$ and radius $\delta>0$ sufficiently small such that $U_{\delta,A}$ lies in $\mathcal V$ and such that $U_{\delta,-A}$ and $U_{\delta,A}$ do not intersect. We then seek $P^{(A)}$ satisfying the following RH problem.

\subsubsection*{RH problem for $P^{(A)}$:}

\begin{itemize}
    \item[(a)] $P^{(A)}:U_{\delta,A}\setminus\Sigma_S\to\mathbb{C}^{2\times 2}$ is
        analytic.
    \item[(b)] $P^{(A)}_+(z)=P^{(A)}_-(z)v_S(z)$ for $z\in U_{\delta,A}\cap
        \Sigma_S$, where $v_S$ is given by (\ref{RHP S: b}).
    \item[(c)] $P^{(A)}$ satisfies the matching condition
        \begin{equation}
            P^{(A)}(z)(\hat{P}^{(\infty)})^{-1}(z)=I+\bigO(n^{-\frac{1}{4k+3}}),
        \end{equation}
        as $n\to\infty$ and $t_j\to 0$, $j=1,...,2k$ such that \eqref{lim stn-1} holds uniformly for
        $z\in\partial U_{\delta,A}\setminus\Sigma_S$.
\end{itemize}

We then construct $P^{(A)}$ using the model RH problem $\Psi$ of the $P_{\rm I}^{2k}$ equation, which was introduced in Section \ref{subsection: PI2 equation}. Let
\begin{equation}\label{definition: Psingular}
    P^{(A)}(z)=E^{(A)}(z)\Psi\left(n^{\frac{2}{4k+3}}f_A(z);n^{\frac{4k+2}{4k+3}}t_1f_1(z),n^{\frac{4k}{4k+3}}t_2f_2(z),...,n^{\frac{4}{4k+3}}t_{2k}f_{2k}(z)\right)
        e^{n\phi_{\mathbf{t}}(z)\sigma_3},
\end{equation}
where $E^{(A)}$ is defined as
\begin{equation}\label{definition: Eb}
    E^{(A)}(z)=\hat{P}^{(\infty)} (z) N^{-1} \left(n^{\frac{2}{4k+3}}f_A(z) \right)^{\frac{\sigma_3}{4}},
\end{equation}
with $N$ given by \eqref{definition: N} and $\hat{P}^{(\infty)}$ given by \eqref{definition: Pinfty}. It is straightforward to see that $E^{(A)}$ is invertible and analytic in $U_{\delta,A}$. In addition, $f_A$ and $f_j$, $j=1,...,2k$, are scalar analytic functions on $U_{\delta,A}$ that are real on $(A-\delta,A+\delta)$.

Let
\begin{equation}\label{definition: fb}
    f_A(z)=\left[\frac{4k+3}{4}\left(-\pi i \int_z^A\varrho(\xi)d\xi\right)\right]^{\frac{2}{4k+3}}.
\end{equation}
It follows from \eqref{eq:V-E-measure} that
\begin{equation}
f_A(z) = C_A(z-A)+\bigO(z-A)^2
\end{equation}
as $z\to A$, where
\begin{equation}\label{eq:def CA}
     C_A=\left(\frac{\pi}{2}(2A)^\frac{4k+1}{2}\psi(A)\right)^\frac{2}{4k+3}.
\end{equation}
Since $h^{(2k)}(A)>0$, we have $C_A>0$. Hence, we have defined an analytic function $f_A$ with $f_A(A)=0$ and
$f_A'(A)=C_A>0$, which is real on $(A-\delta,A+\delta)$. It is a conformal mapping on $U_{\delta,A}$ provided $\delta>0$ is sufficiently small.

Next, define
\begin{equation}\label{eq:def f1}
    f_1(z)=\left(-\pi i \int_z^A
    \varrho_{1}(\xi)d\xi\right)f_A(z)^{-\frac{1}{2}},
\end{equation}
and $f_j$ as
\begin{equation}\label{definition: fj}
    f_j(z)=\left[\frac{2j-1}{4}\left(-\pi i \int_z^A\varrho_{j}(\xi)d\xi\right)\right]f_A(z)^{-\frac{(2j-1)}{2}},~~j=2,...,2k.
\end{equation}
Since $f_A$ is a conformal mapping in $U_{\delta,A}$, it is clear from \eqref{definition: psij} and \eqref{definition: R} that $f_1$ is analytic in $U_{\delta,A}$. Furthermore, $f_1$ is real on $(A-\delta,A+\delta)$, and one can verify
\begin{equation}\label{eq:def c1}
    f_1(A)=\frac{h_1(A)}{C_A^{\frac{1}{2}}(2A)^{\frac{1}{2}}} =: c_1,
\end{equation}
where $h_1$ is defined in \eqref{definition: hj}. Similarly, $f_j$, $j=2,...,2k$ are all real and analytic in $U_{\delta,A}$. This motivates us to define 
\begin{equation}\label{eq:def cj}
    c_j:=f_j(A), \qquad j=2,...,2k.
\end{equation}
Thus, $P^{(A)}$ defined by \eqref{definition: Psingular} satisfies conditions (a) and (b) of the RH problem for $P^{(A)}$.
\begin{remark}\label{remark h}
   From Assumptions \ref{assumptions} (iii), one can see that as \(z\to A\), it gives
\[
 h_j(z)=
 \frac{h_j^{(j-1)}(A)}{(j-1)!}(z-A)^{j-1}
 +O\bigl((z-A)^j\bigr).
\]
Together with \eqref{definition: psij}, this yields
\[
 -\pi\mathrm{i}\int_z^A\varrho_j(\xi)\,d\xi
 =d_j(z-A)^{\frac{2j-1}{2}}
 \bigl(1+O(z-A)\bigr),
 \qquad d_j\neq0.
\]
Consequently, the direction \(V_j\) generates precisely the term of order
\((z-A)^{\frac{2j-1}{2}}\) in the local phase. These are exactly the \(2k\) deformation monomials in the phase of the model RH problem for \(P_{\mathrm I}^{2k}\), which ensures that the functions \(f_j\) in \eqref{eq:def f1}, \eqref{definition: fj} are analytic near \(A\), satisfy \(f_j(A)\neq0\), and further provide the natural double-scaling regimes
$t_j=O\!\left(n^{-\frac{4k-2j+4}{4k+3}}\right),
j=1,\ldots,2k.$ Thus, using \eqref{definition: theta}, \eqref{eq:definition: psist}, \eqref{definition: phist}, \eqref{definition: fb}, \eqref{eq:def f1} and \eqref{definition: fj}, we have
    \begin{equation}\label{important relation fbf1f2}
        \theta\left(n^{\frac{2}{4k+3}}f_A(z);n^{\frac{4k+2}{4k+3}}t_1f_1(z),n^{\frac{4k}{4k+3}}t_2f_2(z),...,n^{\frac{4}{4k+3}}t_{2k}f_{2k}(z)\right)=n\phi_{\mathbf{t}}(z),
    \end{equation}
    for $z\in U_{\delta,A}\setminus(A-\delta,A]$. From this and \eqref{RHP Psi: c}, it is clear that our choice of $f_A$ and $f_j$ will do the job.

\end{remark}

For condition (c), we will make use of the following proposition.
\begin{proposition}
    Let $n\to\infty$ and $t_j\to 0$, $j=1,...,2k$ such that {\rm (\ref{lim stn-1})}
    holds. Then,
    \begin{multline}\label{asymptotic expansion: Psingular}
        P^{(A)}(z)=
        E^{(A)}(z)\left(n^{\frac{2}{4k+3}}f_A(z)\right)^{-\frac{\sigma_3}{4}}N
        \\[1ex]
        \times\,\left[
            I-hf_A(z)^{-\frac{1}{2}}\sigma_3 n^{-\frac{1}{4k+3}}+\frac{1}{2}
            \begin{pmatrix}
                h^2 & iq \\
                -iq & h^2
            \end{pmatrix}f_A(z)^{-1}n^{-\frac{2}{4k+3}}+\bigO(n^{-\frac{3}{4k+3}})
        \right].
    \end{multline}
    where we have used the notation
    \[
        h=h(n^{\frac{4k+2}{4k+3}}t_1f_1(z),...,n^{\frac{4}{4k+3}}t_{2k}f_{2k}(z)),\mbox{ and }
        q=q(n^{\frac{4k+2}{4k+3}}t_1f_1(z),...,n^{\frac{4}{4k+3}}t_{2k}f_{2k}(z)).
    \]
    for brevity.
\end{proposition}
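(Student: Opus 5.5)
The idea is to substitute the large-$\zeta$ expansion \eqref{RHP Psi: c} of the model problem $\Psi$ into the definition \eqref{definition: Psingular} of $P^{(A)}$, with $\zeta=n^{\frac{2}{4k+3}}f_A(z)$ and parameters $s=n^{\frac{4k+2}{4k+3}}t_1f_1(z)$, $\tau_{j-1}=n^{\frac{4k-2j+4}{4k+3}}t_jf_j(z)$. Then we check that the exponential factors cancel exactly, and that the error term is uniform.

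The plan has four steps.

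\emph{Step 1 (largeness of the argument).} For $z\in\partial U_{\delta,A}$, the conformality of $f_A$ with $f_A(A)=C_A>0$ gives $|f_A(z)|\geq c\delta>0$. Hence $|\zeta|\asymp n^{\frac{2}{4k+3}}\to\infty$ uniformly on the circle.

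\emph{Step 2 (cancellation of the exponential).} By \eqref{important relation fbf1f2} in Remark \ref{remark h}, $\theta(\zeta;s,\boldsymbol\tau)=n\phi_{\mathbf t}(z)$ for $z$ off $(A-\delta,A]$. Therefore $e^{-\theta\sigma_3}e^{n\phi_{\mathbf t}\sigma_3}=I$, and the prefactor $\zeta^{-\sigma_3/4}N$ appears explicitly. The branch of $\zeta^{1/4}$ with cut on $\Gamma_3$ corresponds, under $f_A$, to the cut $(A-\delta,A]$. This is consistent with the branch of $f_A^{1/4}$ in $E^{(A)}$, and the latter is analytic in $U_{\delta,A}$ as asserted. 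Substituting $\zeta^{-1/2}=n^{-\frac{1}{4k+3}}f_A^{-1/2}$ and $\zeta^{-1}=n^{-\frac{2}{4k+3}}f_A^{-1}$ gives exactly the bracket in \eqref{asymptotic expansion: Psingular}.

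\emph{Step 3 (uniformity of the error).} This is the main obstacle. The expansion \eqref{RHP Psi: c} holds pointwise in the parameters, but here $s$ and $\boldsymbol\tau$ depend on $n$ and on $z$. By \eqref{lim stn-1}, $n^{\frac{4k-2j+4}{4k+3}}t_jf_j(z)\to \widehat t_j f_j(z)/c_j$. So for $z\in\overline{U_{\delta,A}}$ and $n$ large, the parameters stay in a compact subset $K$ of $\mathbb C^{2k}$ containing a neighbourhood of the real set $\{\widehat t_j f_j(x)/c_j : x\in[A-\delta,A+\delta]\}$. The $f_j$ are only complex-valued off the real line, but this is harmless if $\delta$ is small, since the imaginary parts are then small. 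I would invoke the appendix analysis of the $P_{\rm I}^{2k}$ parametrix, namely the vanishing lemma and the Deift--Zhou analysis of $\Psi$ as $\zeta\to\infty$. Solvability is an open condition, and $q$ and $h$ are pole-free near the real parameter set. Together these show that the $\bigO(\zeta^{-3/2})$ remainder is uniform for parameters in $K$. It follows that $h$ and $q$ are bounded there, and the remainder is $\bigO(n^{-\frac{3}{4k+3}})$ uniformly for $z\in\partial U_{\delta,A}$.

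\emph{Step 4 (assembly).} Combining Steps 2 and 3 yields \eqref{asymptotic expansion: Psingular}, with $h$ and $q$ evaluated at the $z$-dependent parameters as in the statement. As a consequence, $E^{(A)}\zeta^{-\sigma_3/4}N=\hat P^{(\infty)}$. The matching condition (c), $P^{(A)}(\hat P^{(\infty)})^{-1}=I+\bigO(n^{-\frac{1}{4k+3}})$, then follows directly.
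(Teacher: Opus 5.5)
Your proposal is correct and is essentially the paper's proof: the paper inserts \eqref{RHP Psi: c} into \eqref{definition: Psingular}, cancels the exponentials via \eqref{important relation fbf1f2}, and reads off the prefactor from \eqref{definition: Eb}, while your Step~3 adds the uniformity of the remainder for the $n$- and $z$-dependent (slightly complex) parameters, which the paper leaves implicit. One small slip: in Step~1 you mean $f_A(A)=0$ and $f_A'(A)=C_A>0$, not $f_A(A)=C_A$.
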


\begin{proof}
    This follows directly from \eqref{RHP Psi: c}, \eqref{definition: Psingular}, \eqref{definition: Eb}, and \eqref{important relation fbf1f2}.
\end{proof}

By symmetry, one can define
\begin{equation}\label{local at -A}
    P^{(-A)}(z)=\sigma_3P^{(A)}(-z)\sigma_3
\end{equation}
on $U_{\delta, -A}$.

\subsection{Final transformation: $S\mapsto R$}
    \label{subsection: sd-R}

\begin{figure}[t]
\begin{center}

\tikzset{every picture/.style={line width=0.75pt}} 

\begin{tikzpicture}[x=0.75pt,y=0.75pt,yscale=-1,xscale=1]

\draw   (100,142) .. controls (100,128.19) and (111.19,117) .. (125,117) .. controls (138.81,117) and (150,128.19) .. (150,142) .. controls (150,155.81) and (138.81,167) .. (125,167) .. controls (111.19,167) and (100,155.81) .. (100,142) -- cycle ;
\draw   (124.65,141.65) .. controls (124.65,141.84) and (124.8,142) .. (125,142) .. controls (125.2,142) and (125.35,141.84) .. (125.35,141.65) .. controls (125.35,141.45) and (125.2,141.29) .. (125,141.29) .. controls (124.8,141.29) and (124.65,141.45) .. (124.65,141.65) -- cycle ;
\draw    (102,131) -- (106.33,125) ;
\draw [shift={(107.09,123.95)}, rotate = 125.84] [fill={rgb, 255:red, 0; green, 0; blue, 0 }  ][line width=0.08]  [draw opacity=0] (8.93,-4.29) -- (0,0) -- (8.93,4.29) -- cycle    ;
\draw   (265,142) .. controls (265,128.19) and (276.19,117) .. (290,117) .. controls (303.81,117) and (315,128.19) .. (315,142) .. controls (315,155.81) and (303.81,167) .. (290,167) .. controls (276.19,167) and (265,155.81) .. (265,142) -- cycle ;
\draw   (289.65,141.65) .. controls (289.65,141.84) and (289.8,142) .. (290,142) .. controls (290.2,142) and (290.35,141.84) .. (290.35,141.65) .. controls (290.35,141.45) and (290.2,141.29) .. (290,141.29) .. controls (289.8,141.29) and (289.65,141.45) .. (289.65,141.65) -- cycle ;
\draw    (267,131) -- (271.33,125) ;
\draw [shift={(272.09,123.95)}, rotate = 125.84] [fill={rgb, 255:red, 0; green, 0; blue, 0 }  ][line width=0.08]  [draw opacity=0] (8.93,-4.29) -- (0,0) -- (8.93,4.29) -- cycle    ;
\draw    (138,121) .. controls (178,91) and (234.33,90) .. (271.33,125) ;
\draw [shift={(210.76,98.82)}, rotate = 182.09] [fill={rgb, 255:red, 0; green, 0; blue, 0 }  ][line width=0.08]  [draw opacity=0] (8.93,-4.29) -- (0,0) -- (8.93,4.29) -- cycle    ;
\draw    (138,164) .. controls (180.33,191) and (228.33,192) .. (271.33,160) ;
\draw [shift={(210.54,183.84)}, rotate = 177.37] [fill={rgb, 255:red, 0; green, 0; blue, 0 }  ][line width=0.08]  [draw opacity=0] (8.93,-4.29) -- (0,0) -- (8.93,4.29) -- cycle    ;
\draw    (37.33,142) -- (100,142) ;
\draw [shift={(73.67,142)}, rotate = 180] [fill={rgb, 255:red, 0; green, 0; blue, 0 }  ][line width=0.08]  [draw opacity=0] (8.93,-4.29) -- (0,0) -- (8.93,4.29) -- cycle    ;
\draw    (315,142) -- (377.33,142) ;
\draw [shift={(351.17,142)}, rotate = 180] [fill={rgb, 255:red, 0; green, 0; blue, 0 }  ][line width=0.08]  [draw opacity=0] (8.93,-4.29) -- (0,0) -- (8.93,4.29) -- cycle    ;

\draw (110,140) node [anchor=north west][inner sep=0.75pt]   [align=left] {$\displaystyle -A$};
\draw (289.65,141.65) node [anchor=north west][inner sep=0.75pt]   [align=left] {$\displaystyle A$};

\end{tikzpicture}
    \caption{The contours $\Sigma_R$ of the RH problem for $R$.}
    \label{figure: system contours R}
\end{center}
\end{figure}

In the final transformation, we define
\begin{equation}\label{R-S-P}
    R(z)=
    \begin{cases}
        S(z)\left(P^{(-A)}\right)^{-1}(z), &\mbox{for $z\in U_{\delta,-A}\setminus\Sigma_S$,} \\[1ex]
        S(z)\left(P^{(A)}\right)^{-1}(z), &\mbox{for $z\in U_{\delta,A}\setminus\Sigma_S$,} \\[1ex]
        S(z)\left(\hat{P}^{(\infty)}\right)^{-1}(z), &\mbox{for $z\in
            \mathbb{C}\setminus (\Sigma_S\cup U_{\delta,-A}\cup U_{\delta,A})$.}
    \end{cases}
\end{equation}
Then $R$ satisfies the following RH problem.

\subsubsection*{RH problem for $R$:}

\begin{itemize}
    \item[(a)] $R:\mathbb{C}\setminus\Sigma_R\to \mathbb{C}^{2\times
    2}$ is analytic, where the contours $\Sigma_R$ are shown in Figure \ref{figure: system contours R}. 
    \item[(b)] $R_+(z)=R_-(z)v_R(z)$ for $z\in\Sigma_R$, with
    \begin{equation}\label{RHP R: b-critical}
        v_R(z)=
        \begin{cases}
            P^{(-A)}(z) \left(\hat{P}^{(\infty)}(z) \right)^{-1}, & \mbox{on $\partial
            U_{\delta,-A}$,}\\[1ex]
            P^{(A)}(z) \left(\hat{P}^{(\infty)} (z)\right)^{-1}, & \mbox{on $\partial
            U_{\delta,A}$,}\\[1ex]
            \hat{P}^{(\infty)}(z) v_S(z) \left(\hat{P}^{(\infty)} (z) \right)^{-1}, &
            \mbox{on the rest of $\Sigma_R$.}
        \end{cases}
    \end{equation}
    \item[(c)] $R(z)=I+\bigO(\frac{1}{z})$,\qquad as $z\to\infty$.
\end{itemize}

As $n\to\infty$ and $t_j\to 0$, $j=1,...,2k$ such that (\ref{lim stn-1}) holds, we
have,
\begin{equation}
    v_R(z)=
    \begin{cases}
        I+\bigO(n^{-\frac{1}{4k+3}}), &\mbox{on $\partial U_{\delta,-A}\cup \partial U_{\delta,A}$,} \\[1ex]
        I+\bigO(e^{-\gamma n}),&\mbox{on the rest of $\Sigma_R$,}
    \end{cases}
\end{equation}
with $\gamma>0$ some fixed constant. This, together with \eqref{asymptotic expansion: Psingular}, \eqref{definition: Eb} and \eqref{local at -A}, gives 
\begin{equation}\label{expansion: vR}
    v_R(z)=I+\frac{\Delta_1(z)}{n^\frac{1}{4k+3}}+\bigO(n^{-\frac{2}{4k+3}}),
\end{equation}
as $n\to\infty$ and $t_j\to 0$ such that (\ref{lim stn-1}) hold uniformly for $z\in\Sigma_R$. From  \eqref{asymptotic expansion: Psingular}, \eqref{definition: Eb} and \eqref{RHP R: b-critical}, we obtain
\eqref{expansion: vR} with
\begin{align}
    \label{definition: Delta1 near a}
    & \Delta_1(z)=-h f_A(z)^{-\frac{1}{2}}\hat{P}^{(\infty)}(z)\sigma_3
        \hat{P}^{(\infty)}(z)^{-1},
        && \mbox{for $z\in\partial U_{\delta,A}$,}
    \\[2ex]
    \label{definition: Delta1 near b}
    & \Delta_1(z)=-\sigma_3
        h f_A(-z)^{-\frac{1}{2}}\hat{P}^{(\infty)}(-z)\sigma_3
       \hat{P}^{(\infty)}(-z)^{-1}\sigma_3, && \mbox{for $z\in\partial U_{\delta,-A}$.}
\end{align}

By a standard argument of the small norm RH problem as in \cite{DKMVZ1,DKMVZ2}, one can ensure the unique solvability of the RH problem for $R$ as $n\to\infty$
and $t_j\to 0$. Moreover, we obtain from \eqref{expansion: vR} that $R$ satisfies
\begin{equation}\label{expansion: R}
    R(z)=I+\frac{R^{(1)}(z)}{n^{\frac{1}{4k+3}}}+\bigO(n^{-\frac{2}{4k+3}}),
\end{equation}
as $n\to\infty$ and $t_j\to 0$ such that \eqref{lim stn-1} holds, and it is valid uniformly for $z\in\mathbb{C}\setminus(\partial
U_{\delta,-A}\cup\partial U_{\delta,A})$. Furthermore, $R^{(1)}$ has the following explicit expression
\begin{equation}\label{definition: R(1)}
    R^{(1)}(z)=
    \begin{cases}
        {\displaystyle \frac{A^{(1)}}{z-A}-\frac{\sigma_3A^{(1)}\sigma_3}{z+A}}, &\mbox{for $z\in\mathbb C\setminus(\overline{U}_{\delta,-A}\cup
        \overline{U}_{\delta,A})$,}
        \\[3ex]
        {\displaystyle \frac{A^{(1)}}{z-A}-\frac{\sigma_3A^{(1)}\sigma_3}{z+A}-\Delta_1(z)},
        &\mbox{for $z\in U_{\delta,-A}\cup U_{\delta,A}$,}
    \end{cases}
\end{equation}
A straightforward calculation shows that
\begin{equation}\label{eq:A1}
    A^{(1)}=-\frac{\sqrt{2A}}{2\sqrt{C_A}}h\left(n^\frac{4k+2}{4k+3}t_1f_1(A),...,n^\frac{4}{4k+3}t_{2k}f_{2k}(A)\right)\begin{pmatrix}
        1 & -i\\
        -i & -1
    \end{pmatrix}.
\end{equation}

\section{Steepest descent analysis when $t\in(0,1]$}\label{Section: steepest descent t>0}

We now restrict the multi-parameter deformation to $t_1=t$, $V_1(x)=x^2-V(x)$, and $t_j\equiv0$ for $2\le j\le2k$.  The differential identity \eqref{eq:differential-identity} will then yield the asymptotics of $D_n(V_t)$.  By the assumed one-cut regularity of $V_t$ for $t\in(0,1]$, we have
\[
 d\mu_{V_t}(x)=\rho_t(x)\chi_{[-b_t,b_t]}(x)\,dx,
\]
 where the corresponding $\varrho_{t}(z)$ is defined by
\begin{equation}
        \varrho_t(z)=\frac{1}{2\pi
        i}\frac{h_j(z)}{R_t(z)},\qquad\mbox{for $z\in\mathbb C\setminus[-b_t,b_t]$,}
    \end{equation}
    with $R_t(z)=\sqrt{(z-b_t)(z+b_t)}$ and $R_t(z)\sim z$ as $z\to\infty$.
On $[-b_t,b_t]$, we have
\begin{equation}\label{eq:regular-equilibrium-density}
\rho_{t}(x) =\frac{1}{2\pi}h_t(x)\sqrt{b_t^2-x^2},~x\in[-b_t,b_t],
\end{equation}
with
\begin{equation}\label{eq:def ht}
h_t(x)=\frac{1}{\pi} \dashint_{-b_t}^{b_t}\frac{V_t'(s)}{(s-x)\sqrt{b_t^2-s^2}}\,ds,~h_t(x)>0, ~x\in[-b_t,b_t].
\end{equation}
The endpoint $b_t$ satisfies the following equation
\begin{equation}
  (1-t)M(b_t)+\pi t b_t^2=2\pi ,
  \label{eq:endpoint-equation}
\end{equation}
where
\begin{equation}
  M(b_t):=\int_{-b_t}^{b_t} \frac{xV'(x)}{\sqrt{b_t^2-x^2}}\dd x .
  \label{eq:M-def}
\end{equation}

We note that the equilibrium measure in \eqref{eq:regular-equilibrium-density} differs from the fixed-support modified measure introduced in \eqref{eq:definition: psist}. For the potential under consideration, $V_t(x) =(1-t)V(x)+tx^2$, the modified measure is given by
\[
 d\nu_{\mathbf{t}}(x)
 =\hat{\rho}_{{t}}(x)\chi_{[-A,A]}(x)\,dx,
 \qquad
 \hat{\rho}_{{t}}(x)=\rho(x)+t\rho_1(x).
\]
Here \(\rho_1\) is defined in \eqref{definition: psij} and is the fixed-support measure corresponding to
\(V_1=x^2-V\). Hence \(\nu_{\mathbf{t}}\) is a signed measure on the fixed interval \([-A,A]\), whereas \(\mu_{V_t}\) is the positive
equilibrium measure on the moving interval \([-b_t,b_t]\). These two measures coincide at \(t=0\) and differ for \(t>0\). Specifically, for every compact set \(K\subset(-A,A)\), we have
\begin{equation}\label{eq:density-bulk-comparison}
 \rho_{t}(x)
 =\hat\rho_t(x)
 +O\!\left(t^{1+\frac1{2k+1}}\right),
 \qquad x\in K,
\end{equation}
uniformly as \(t\to0\). Therefore, in the case of $t\in(0,1]$, the RH analysis is slightly different from that in Section \ref{Section:OP and Y}. To be specific, the global parametrix in the two RH analysis have the same form, with $A$ being replaced by $b_t$. However, we use the standard Airy model RH problem to construct the local parametrices in this section, whereas in Section \ref{Section:OP and Y} we used the model RH problem of the $P_{\rm I}^{2k}$ equation.

\subsection{First transformation: $Y\mapsto T$}
For $z\in\mathbb C\setminus(-\infty,b_t]$, set
\begin{equation}\label{eq:regular-g}
  g_t(z)=\int_{-b_t}^{b_t}\log(z-s)\rho_t(s)\,\dd s,
  \qquad
  \varphi_t(z)=-\pi i\int_z^{b_t}\varrho_t(\xi)\,\dd\xi.
\end{equation}
The Euler--Lagrange conditions imply
\begin{equation}
    g_{t,+}(x)+g_{t,-}(x)-V_t(x)-\ell_t=0,~
 -2\varphi_{t,+}(x)=2\varphi_{t,-}(x)=g_{t,+}(x)-g_{t,-}(x), \label{eq:g++g-}
\end{equation}
on $(-b_t,b_t)$.
We use the same normalization as in the previous section
\begin{equation}\label{eq:regular-T}
 T(z)=e^{-\frac{n\ell_t}{2}\sigma_3}Y(z)e^{-ng_t(z)\sigma_3}
 e^{\frac{n\ell_t}{2}\sigma_3}.
\end{equation}
Then $T(z)=I+\mathcal O(z^{-1})$ at infinity and its jump matrix is
\begin{equation}\label{eq:regular-T-jump}
 J_T(x)=
 \begin{cases}
 \begin{pmatrix}e^{2n\varphi_{t,+}(x)}&1\\0&e^{2n\varphi_{t,-}(x)}\end{pmatrix},
      &x\in(-b_t,b_t),\\[3ex]
 \begin{pmatrix}1&e^{-2n\varphi_{t,-}(x)}\\0&1\end{pmatrix},
      &x\in\mathbb R\setminus (-b_t,b_t).
 \end{cases}
\end{equation}

\subsection{Second transformation: $T\mapsto S$}
On $(-b_t,b_t)$, the first matrix in \eqref{eq:regular-T-jump} factors as
\begin{equation}\label{eq:regular-factorization}
\begin{pmatrix}e^{2n\varphi_{t,+}(x)}&1\\0&e^{2n\varphi_{t,-}(x)}\end{pmatrix}
=
\begin{pmatrix}1&0\\e^{2n\varphi_{t,-}(x)}&1\end{pmatrix}
\begin{pmatrix}0&1\\-1&0\end{pmatrix}
\begin{pmatrix}1&0\\e^{2n\varphi_{t,+}(x)}&1\end{pmatrix}.
\end{equation}
\begin{figure}[t]
    \begin{center}
    \setlength{\unitlength}{1mm}
    \begin{picture}(137.5,26)(-2.5,11.5)

        \put(45,25){\thicklines\circle*{.8}} \put(42.5,27){\small $-b_t$}
        \put(85,25){\thicklines\circle*{.8}} \put(86,27){\small $b_t$}

        \put(94,25){\thicklines\vector(1,0){.0001}}
        \put(29.6,25){\line(1,0){70.4}} \put(38,25){\thicklines\vector(1,0){.0001}}
        \put(66,25){\thicklines\vector(1,0){.0001}}

        \qbezier(45,25)(65,45)(85,25) \put(66,35){\thicklines\vector(1,0){.0001}}
        \qbezier(45,25)(65,5)(85,25) \put(66,15){\thicklines\vector(1,0){.0001}}

    \end{picture}
    \caption{The contours $\Sigma_S$ of the RH problem for $S$. \label{Fig:S}}
\end{center}
\end{figure}
We open the lenses $\gamma_+$ and $\gamma_-$ around $[-b_t,b_t]$ as in
Figure~\ref{Fig:S}, and define
\begin{equation}\label{eq:regular-S}
S(z)=T(z)
\begin{cases}
\begin{pmatrix}1&0\\-e^{2n\varphi_t(z)}&1\end{pmatrix},
   &z\in\gamma_+,\\[3ex]
\begin{pmatrix}1&0\\ e^{2n\varphi_t(z)}&1\end{pmatrix},
   &z\in\gamma_-,\\[3ex]
I,&\text{ elsewhere}.
\end{cases}
\end{equation}
Then the jump matrices for $S$ are given by
\begin{equation}\label{eq:regular-S-jumps}
J_S(z)=
\begin{cases}
\begin{pmatrix}0&1\\-1&0\end{pmatrix},&z\in(-b_t,b_t),\\[3ex]
\begin{pmatrix}1&e^{-2n\varphi_{t,-}(z)}\\0&1\end{pmatrix},&z\in\mathbb R\setminus (-b_t,b_t),\\[3ex]
\begin{pmatrix}1&0\\e^{2n\varphi_t(z)}&1\end{pmatrix},&z\in\gamma_+\cup\gamma_-.
\end{cases}
\end{equation}

\subsection{Global parametrix}
As in the construction of $\hat{P}^{(\infty)}$ in Section~\ref{sec:hat Pinfty},
the global parametrix is explicit.  Define
$a_t(z)=\Big( \frac{z-b_t}{z+b_t}\Big)^{\frac{1}{4}}$, analytic on
$\mathbb{C}\setminus [-b_t,b_t]$ and normalized by $a_t(z)\sim1$ as
$z\to\infty$.  Then
\begin{equation}\label{Pinf Region 1}
P^{(\infty)}(z) =M^{-1}a_t(z)^{-\sigma_3}M  ,\qquad\mbox{for $z\in\mathbb C\setminus[-b_t,b_t]$,}
\end{equation}
with $M=\frac{I+i\sigma_1}{\sqrt{2}}$.

\subsection{Local parametrices near $\pm b_t$}
Let
$U_{\delta,b_t}=\{z\in\mathbb{C}:|z-b_t|<\delta\}$ be a small disk with
center $b_t$ and radius $\delta>0$ sufficiently small, and $t\ge t_0$ for a fixed $t_0$. Inside $U_{\delta,b_t}$, the local parametrix $P^{(b_t)}$ satisfies the following RH problem.
\subsubsection*{RH problem for $P^{(b_t)}$}
\begin{itemize}
\item[(a)] $P^{(b_t)} : U_{\delta,b_t}\setminus \Sigma_S \to \mathbb{C}^{2\times 2}$ is analytic, where the contours $\Sigma_S=\mathbb{R} \cup \gamma_{+} \cup \gamma_{-}$ are shown in Figure \ref{Fig:S}.
\item[(b)] $P^{(b_t)}$ has the following jumps:
\begin{align}
& P_{+}^{(b_t)}(z) = P_{-}^{(b_t)}(z)J_S(z),~z\in U_{\delta,b_t}\cap\Sigma_S.
\end{align}
\item[(c)] As $n \to \infty$, we have $P^{(b_t)}(z) = (I + \bigO(n^{-1}))P^{(\infty)}(z)$ uniformly for $z \in \partial U_{\delta,b_t}$.
\end{itemize}
We denote
\begin{equation}\label{eq:def ft}
f_{t}(z) = \left( \frac{3\pi i}{2} \int_{b_t}^{z} \varrho_t(s)ds \right)^{\frac{2}{3}}.
\end{equation}
This is a conformal map from $U_{\delta,b_t}$ to a neighborhood of $0$ and
\begin{equation}\label{eq:asymptotics of f near b}
f_{t}(z) = C_t(z-b_t)+\frac{2}{5}C_t\left(\frac{h_t'(b_t)}{h_t(b_t)}+\frac{1}{4b_t}\right)(z-b_t)^2+ \bigO((z-b_t)^{3}) , \quad \mbox{as } z \to b_t.
\end{equation}
with $C_t=(\frac{1}{2}h_t(b_t)\sqrt{2b_t})^{\frac{2}{3}}$. We choose the lenses such that $f_{t}(\gamma_{+}\cap U_{\delta,b_t}) \subset e^{\frac{2\pi i}{3}}\mathbb{R}^{+}$ and $f_{t}(\gamma_{-}\cap U_{\delta,b_t}) \subset e^{-\frac{2\pi i}{3}}\mathbb{R}^{+}$. The solution of the above RH problem is given by the standard Airy parametrix
\begin{equation}
P^{(b_t)}(z) = E_{b_t}(z) \Phi_{\mathrm{Ai}}(n^{\frac{2}{3}}f_{t}(z))e^{n\varphi_t(z)\sigma_{3}},
\end{equation}
where $E_{b_t}$ is analytic in $U_{\delta,b_t}$ and is given by
\begin{equation}
E_{b_t}(z) = P^{(\infty)}(z)M^{-1} (n^{\frac{2}{3}}f_{t}(z))^{\frac{\sigma_{3}}{4}},
\end{equation}
with $M=\frac{I+i\sigma_1}{\sqrt{2}}$. One shows with a direct calculation together with \eqref{Asymptotics Airy}, that as $n \to \infty$, uniformly for $z \in \partial U_{\delta,b_t}$, $P^{(b_t)}(z)P^{(\infty)}(z)^{-1}$ admits the following expansion
\begin{equation}\label{asy D_b}
P^{(b_t)}(z)P^{(\infty)}(z)^{-1} = I + \frac{P^{(\infty)}(z)\Phi_{\mathrm{Ai,1}}P^{(\infty)}(z)^{-1}}{ (f_{t}(z))^{\frac{3}{2}}}\frac{1}{n}+ \bigO ( n^{-2} ).
\end{equation}
By the symmetry, we have
\begin{equation}\label{eq:asy D_-b}
    P^{(-b_t)}(z)=\sigma_3P^{(b_t)}(-z)\sigma_3
\end{equation}
on $U_{\delta, -b_t}$

\subsection{Small-norm RH problem}\label{Subsection: small-norm RH problem}
In the final transformation,
set
\begin{equation}\label{def of R}
    R(z)=
    \begin{cases}
        S(z)\left(P^{(-b_t)}\right)^{-1}(z), &\mbox{for $z\in U_{\delta,-b_t}\setminus\Sigma_S$,} \\[1ex]
        S(z)\left(P^{(b_t)}\right)^{-1}(z), &\mbox{for $z\in U_{\delta,b_t}\setminus\Sigma_S$,} \\[1ex]
        S(z)\left({P}^{(\infty)}\right)^{-1}(z), &\mbox{for $z\in
            \mathbb{C}\setminus (\Sigma_S\cup U_{\delta,-b_t}\cup U_{\delta,b_t})$.}
    \end{cases}
\end{equation}
\begin{figure}[t]
\begin{center}

\tikzset{every picture/.style={line width=0.75pt}} 

\begin{tikzpicture}[x=0.75pt,y=0.75pt,yscale=-1,xscale=1]

\draw   (100,142) .. controls (100,128.19) and (111.19,117) .. (125,117) .. controls (138.81,117) and (150,128.19) .. (150,142) .. controls (150,155.81) and (138.81,167) .. (125,167) .. controls (111.19,167) and (100,155.81) .. (100,142) -- cycle ;
\draw   (124.65,141.65) .. controls (124.65,141.84) and (124.8,142) .. (125,142) .. controls (125.2,142) and (125.35,141.84) .. (125.35,141.65) .. controls (125.35,141.45) and (125.2,141.29) .. (125,141.29) .. controls (124.8,141.29) and (124.65,141.45) .. (124.65,141.65) -- cycle ;
\draw    (102,131) -- (106.33,125) ;
\draw [shift={(107.09,123.95)}, rotate = 125.84] [fill={rgb, 255:red, 0; green, 0; blue, 0 }  ][line width=0.08]  [draw opacity=0] (8.93,-4.29) -- (0,0) -- (8.93,4.29) -- cycle    ;
\draw   (265,142) .. controls (265,128.19) and (276.19,117) .. (290,117) .. controls (303.81,117) and (315,128.19) .. (315,142) .. controls (315,155.81) and (303.81,167) .. (290,167) .. controls (276.19,167) and (265,155.81) .. (265,142) -- cycle ;
\draw   (289.65,141.65) .. controls (289.65,141.84) and (289.8,142) .. (290,142) .. controls (290.2,142) and (290.35,141.84) .. (290.35,141.65) .. controls (290.35,141.45) and (290.2,141.29) .. (290,141.29) .. controls (289.8,141.29) and (289.65,141.45) .. (289.65,141.65) -- cycle ;
\draw    (267,131) -- (271.33,125) ;
\draw [shift={(272.09,123.95)}, rotate = 125.84] [fill={rgb, 255:red, 0; green, 0; blue, 0 }  ][line width=0.08]  [draw opacity=0] (8.93,-4.29) -- (0,0) -- (8.93,4.29) -- cycle    ;
\draw    (138,121) .. controls (178,91) and (234.33,90) .. (271.33,125) ;
\draw [shift={(210.76,98.82)}, rotate = 182.09] [fill={rgb, 255:red, 0; green, 0; blue, 0 }  ][line width=0.08]  [draw opacity=0] (8.93,-4.29) -- (0,0) -- (8.93,4.29) -- cycle    ;
\draw    (138,164) .. controls (180.33,191) and (228.33,192) .. (271.33,160) ;
\draw [shift={(210.54,183.84)}, rotate = 177.37] [fill={rgb, 255:red, 0; green, 0; blue, 0 }  ][line width=0.08]  [draw opacity=0] (8.93,-4.29) -- (0,0) -- (8.93,4.29) -- cycle    ;
\draw    (37.33,142) -- (100,142) ;
\draw [shift={(73.67,142)}, rotate = 180] [fill={rgb, 255:red, 0; green, 0; blue, 0 }  ][line width=0.08]  [draw opacity=0] (8.93,-4.29) -- (0,0) -- (8.93,4.29) -- cycle    ;
\draw    (315,142) -- (377.33,142) ;
\draw [shift={(351.17,142)}, rotate = 180] [fill={rgb, 255:red, 0; green, 0; blue, 0 }  ][line width=0.08]  [draw opacity=0] (8.93,-4.29) -- (0,0) -- (8.93,4.29) -- cycle    ;

\draw (110,140) node [anchor=north west][inner sep=0.75pt]   [align=left] {$\displaystyle -b_{t}$};
\draw (289.65,141.65) node [anchor=north west][inner sep=0.75pt]   [align=left] {$\displaystyle b_{t}$};
\draw (193,68) node [anchor=north west][inner sep=0.75pt]   [align=left] {$\displaystyle C_{0}^{( u)}$};
\draw (190,189) node [anchor=north west][inner sep=0.75pt]   [align=left] {$\displaystyle C_{0}^{( l)}$};

\end{tikzpicture}

    \caption{The contours $\Sigma_R$ of the RH problem for $R$.}
    \label{figure: contours R}
\end{center}
\end{figure}
Then, $R_+(z) = R_-(z)J_R(z)$ for $z \in \Sigma_R$ (see Figure \ref{figure: contours R}),  with the jump matrices given below:
\begin{equation}\label{RHP R: b-regular}
 J_R(z)=
 \begin{cases}
 P^{(-b_t)}(z)P^{(\infty)}(z)^{-1},&z\in\partial U_{\delta,-b_t},\\[1ex]
 P^{(b_t)}(z)P^{(\infty)}(z)^{-1},&z\in\partial U_{\delta,b_t},\\[1ex]
 P^{(\infty)}(z)J_S(z)P^{(\infty)}(z)^{-1},&\text{ elsewhere.}
 \end{cases}
\end{equation}
By the standard results for small-norm RH problems (see \cite{DKMVZ1,DKMVZ2} for instance), we obtain 
\begin{equation}\label{eq:uniform-R-prop}
     R(z)=I+\frac{R^{(1)}(z)}{n}+\mathcal O(n^{-2}),
\end{equation}
uniformly for $z\in\mathbb{C}\setminus(\partial
U_{\delta,-b_t}\cup\partial U_{\delta,b_t})$ and $t\ge t_0$.
We now compute the coefficient $R^{(1)}(z)=R^{(1)}(z;t)$ in \eqref{eq:uniform-R-prop}.  Note that the expression for $J_{R}^{(1)}(z)$ with $z\in \partial U_{\delta, b_t} \cup \partial U_{\delta, -b_t}$ can be analytically continued on $\overline{U_{\delta, b_t}} \cup \overline{U_{\delta, -b_t}}$, except at $-b_t$ and $b_t$, where the expressions in \eqref{asy D_b} and \eqref{eq:asy D_-b} admit poles. These poles are of order $2$ at $-b_t$ and $b_t$. Therefore, for $z\in\mathbb C\setminus\overline{U_{\delta, b_t}} \cup \overline{U_{\delta, -b_t}}$, $R^{(1)}(z)$ is given by
\begin{equation}\label{R^{(1)}}
\begin{array}{r c l}
\displaystyle R^{(1)}(z)  =  \displaystyle\frac{A_{-1}}{z-b_t} + \frac{A_{-2}}{(z-b_t)^{2}} 
  \displaystyle -\frac{\sigma_3A_{-1}\sigma_3}{z+b_t} + \frac{\sigma_3A_{-2}\sigma_3}{(z+b_t)^{2}} .  \\
\end{array}
\end{equation}
where
\begin{equation}
    A_{-1}=\mbox{Res}\big(J_{R}^{(1)}(s),s=b_t\big),~A_{-2}=\mbox{Res}\big((s-b_t)J_{R}^{(1)}(s),s=b_t\big).
\end{equation}
Recalling \eqref{Pinf Region 1},
\eqref{eq:asymptotics of f near b}, and \eqref{asy D_b}, a residue calculation gives us
\begin{equation}\label{eq:A-2}
A_{-2} = \frac{5}{48h_t(b_t)} \begin{pmatrix}
-1 & i \\ i & 1
\end{pmatrix} 
\end{equation}
and
\begin{equation}\label{eq:A-1}
A_{-1} = \frac{1}{h_t(b_t)}\begin{pmatrix}
\frac{1}{16}\left(\frac{h_t'(b_t)}{h_t(b_t)}+\frac{1}{b_t}\right) & i \left(-\frac{1}{16}\frac{h_t'(b_t)}{h_t(b_t)}+\frac{1}{12b_t}\right) \\
i \left(-\frac{1}{16}\frac{h_t'(b_t)}{h_t(b_t)}+\frac{1}{12b_t}\right)  & -\frac{1}{16}\left(\frac{h_t'(b_t)}{h_t(b_t)}+\frac{1}{b_t}\right)
\end{pmatrix}.
\end{equation}

Note that the asymptotic expansion \eqref{eq:uniform-R-prop} holds when $t \geq t_0 >0$. We now turn to its behavior as $t \to 0$. First, from \eqref{eq:H-expansion} and \eqref{eq:bt-expansion}, we have 
 \begin{equation}\label{eq:ht and ht'}
     h_t(b_t)\sim \bigO(t^{\frac{2k}{2k+1}}),\quad \frac{h_t'(b_t)}{h_t(b_t)}\sim \bigO(t^{-\frac{1}{2k+1}}),\quad C_t\sim \bigO(t^\frac{4k}{6k+3}), \quad \textrm{as } t\to 0.
 \end{equation} 
 Consequently, the expansion of $f_t(z)$ in \eqref{eq:asymptotics of f near b} becomes 
 \begin{equation}\label{eq:uniform ft}
     f_t(z)=C_t(z-b_t)\left[1+\bigO(|z-b_t|t^{-\frac{1}{2k+1}})\right],\quad \quad \textrm{as } z\to b_t.
 \end{equation}
 On the circle $\partial U_{\delta,b_t}$ for a fixed $\delta$, the term $\delta t^{-\frac{1}{2k+1}}\to\infty$, as $t\to0$.  Therefore, we need to consider a shrinking neighborhood for $U_{\delta_t,b_t}=\{z\in\mathbb{C}:|z-b_t|<\delta_t=\delta t^{\frac{1}{2k+1}}\}$. In this shrinking region, 
 the estimate in \eqref{eq:uniform-R-prop} is modified as detailed in the following proposition.

\begin{proposition}
\label{prop:uniform-R}
Fix
\begin{equation}\label{eq:overlap-exponent-range}
 \frac{3(2k+1)}{6k+5}<a<\frac{4k+2}{4k+3},
\end{equation}
  then the RH problem for $R$ defined in \eqref{def of R} is uniquely solvable and has the following expansion
\begin{equation}
 R(z)=I+\frac{R^{(1)}(z;t)}{n}+\mathcal O(n^{-2}t^{-\frac{8k+5}{4k+2}}),
 \qquad
 R'(z)=\frac{(R^{(1)})'(z;t)}{n}+\mathcal O(n^{-2}t^{-\frac{8k+5}{4k+2}}),
\end{equation}
uniformly for $n^{-a}\le t\le1$ and $z\in K$, where $\operatorname{dist}(K,\Sigma_R)\ge\eta>0$ and the contours $\Sigma_R$ are shown in Figure \ref{figure: contours R}.
\end{proposition}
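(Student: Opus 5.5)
The plan is to run the standard small-norm argument for $R$ with $t$-dependent neighborhoods, keeping track of the $t$-dependence of every bound. The local parametrices are placed in the shrinking disks $U_{\delta_t,\pm b_t}$ with $\delta_t=\delta t^{\frac{1}{2k+1}}$. By \eqref{eq:ht and ht'} and \eqref{eq:uniform ft}, $f_t$ stays conformal on these disks, and on $\partial U_{\delta_t,b_t}$ we have $|f_t(z)|\asymp C_t\delta_t\asymp t^{\frac{4k}{6k+3}+\frac{1}{2k+1}}=t^{\frac{2}{3}}$. Hence $n^{\frac23}|f_t|\asymp (nt^{\frac{2k+2}{...}})$, more precisely $n|f_t|^{\frac32}\asymp nt$.

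\textbf{Step 1: the jump on the circle.} The Airy variable $n^{\frac{2}{3}}f_t$ on $\partial U_{\delta_t,b_t}$ has modulus of order $(nt)^{\frac23}$, which tends to infinity for $t\ge n^{-a}$ with $a<1$. The full asymptotic series \eqref{Asymptotics Airy} then gives
\[
J_R=I+\frac{\Delta^{(1)}(z)}{n}+\frac{\Delta^{(2)}(z)}{n^2}+\bigO\bigl((nt)^{-3}\bigr),\qquad \Delta^{(j)}=\bigO(t^{-j}),
\]
with the conjugation by $P^{(\infty)}$ contributing at most a factor $\bigO(\delta_t^{-1/2})$ on each side. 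I would check carefully that $P^{(\infty)}\Phi_{\rm Ai,1}(P^{(\infty)})^{-1}$ is actually $\bigO(1)$ in norm on the circle after using $a_t(z)\asymp \delta_t^{1/4}$. Otherwise, extra powers of $t^{-\frac{1}{2k+1}}$ appear, and these are presumably what produce the exponent $\frac{8k+5}{4k+2}$.

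\textbf{Step 2: the jumps away from the disks.} On the lens lips and on $\mathbb R\setminus[-b_t,b_t]$, outside the shrinking disks, I need $\Re\varphi_t$ to be bounded away from zero in the appropriate sense. Near $b_t$ one has $\varphi_t\approx \frac23 f_t^{3/2}$, so on the boundary of the small disk $n|\varphi_t|\asymp nt$. This gives jumps of size $e^{-cnt}\le e^{-cn^{1-a}}$. Farther out, the singular-edge behavior $\varphi\sim (z-A)^{2k+3/2}$ must be matched against the regular one. In the intermediate annulus $\delta_t\le|z-b_t|\le\delta$ I would show that $n|\Re\varphi_t|\gtrsim n|z-b_t|^{2k+\frac32}\ge n\delta_t^{2k+\frac32}$. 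The lower constraint $a>\frac{3(2k+1)}{6k+5}$ should be exactly what ensures this quantity tends to infinity and that the error terms dominate the exponentially small ones. All of these jumps are then $\bigO(n^{-\infty})$.

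\textbf{Step 3: small norm and expansion.} Since $\|J_R-I\|_{L^2\cap L^\infty}\to0$ uniformly for $t\ge n^{-a}$ (using $nt\to\infty$, which holds because $a<1$), the operator $1-C_{J_R}$ is invertible. The upper constraint $a<\frac{4k+2}{4k+3}$ keeps us in the regime where $s\to\infty$, so the comparison with the Airy expansion is valid.

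The first-order term is obtained by the residue computation \eqref{R^{(1)}}, which is valid for $z$ outside the disks. The second-order remainder is estimated by the Neumann series. Its terms are of the form $n^{-2}\|\Delta^{(2)}\|$ together with $n^{-2}\|\Delta^{(1)}\|^2$, multiplied by the circle length $\delta_t$ and by the Cauchy kernel, which is bounded on $K$ since $\operatorname{dist}(K,\Sigma_R)\ge\eta$. Collecting the powers of $t$ should give $\bigO(n^{-2}t^{-\frac{8k+5}{4k+2}})$. The estimate for $R'$ follows by differentiating the Cauchy integral representation $R=I+\frac{1}{2\pi i}\int R_-(J_R-I)\frac{ds}{s-z}$, again using the distance $\eta$.

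\textbf{Main obstacle.} The hard part is the precise bookkeeping of the $t$-powers in Step 1 together with the annulus estimate in Step 2. In Step 1 this means uniform control of the coefficients $\Delta^{(j)}$ on a shrinking circle where $h_t'/h_t$ blows up. In Step 2 it means a uniform lower bound on $\Re\varphi_t$ that interpolates between the $(z-b_t)^{3/2}$ and $(z-A)^{2k+3/2}$ behaviors. For the latter I would first try the expansions of $b_t$ and $h_t$ from \eqref{eq:bt-expansion} and \eqref{eq:H-expansion}, writing $h_t(x)$ as $t\,c+(x^2-b_t^2)^{2k}\tilde\psi$.
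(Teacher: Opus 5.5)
There is a genuine gap in Step 3. Your claim that $\|J_R-I\|_{L^2\cap L^\infty}\to0$ uniformly for $t\ge n^{-a}$ is false on the whole range of the proposition.

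First, your circle bookkeeping is off. One has $C_t\delta_t\asymp t^{\frac{4k}{6k+3}+\frac{3}{6k+3}}=t^{\frac{4k+3}{6k+3}}$, not $t^{2/3}$. So on $\partial U_{\delta_t,b_t}$ one has $n|f_t|^{3/2}\asymp n|\varphi_t|\asymp nt^{\frac{4k+3}{4k+2}}$, which agrees with your own Step 2 quantity $n\delta_t^{2k+\frac32}$. It is the upper bound $a<\frac{4k+2}{4k+3}$ that makes this large. It is not $a<1$, and it is not the lower bound.

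More seriously, the check you postponed fails. $M\Phi_{\mathrm{Ai},1}M^{-1}=\frac18\bigl(\frac16\sigma_2+i\sigma_1\bigr)$ is off-diagonal. Conjugating it by $a_t(z)^{-\sigma_3}$, with $|a_t|\asymp\delta_t^{1/4}$ on the circle, multiplies one entry by $|a_t|^{-2}\asymp t^{-\frac{1}{4k+2}}$, and nothing compensates this. Hence
\[
\|J_R-I\|_{L^\infty(\partial U_{\delta_t,\pm b_t})}\asymp n^{-1}t^{-\frac{4k+3}{4k+2}-\frac{1}{4k+2}}=n^{-1}t^{-\frac{2k+2}{2k+1}},
\]
which at $t=n^{-a}$ tends to zero only if $a<\frac{2k+1}{2k+2}$. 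Since $3(2k+2)>6k+5$, the entire window $\bigl(\frac{3(2k+1)}{6k+5},\frac{4k+2}{4k+3}\bigr)$ lies above $\frac{2k+1}{2k+2}$. So the operator $f\mapsto C_-(f(J_R-I))$ is not small on $L^2(\Sigma_R)$, and neither unique solvability nor your Neumann-series expansion follows. The fact that $\|J_R-I\|_{L^2}\asymp n^{-1}t^{-\frac{4k+3}{4k+2}}$ is small does not by itself give invertibility.

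The missing idea is the constant-conjugation trick of Deift--Its--Krasovsky, which is what the paper uses. Set $\widehat R_j=D_j^{-1}MRM^{-1}D_j$ for $j\in\{r,l\}$, with $D_r=t^{-\sigma_3/(8k+4)}$ and $D_l=D_r^{-1}$.
\begin{itemize}
\item Each conjugation exactly cancels the $a_t^{\pm\sigma_3}$ imbalance on its own circle $\Gamma_j=\partial U_{\delta_t,\pm b_t}$. This gives $\|\widehat J_{R,j}-I\|_{L^\infty(\Gamma_j)}=\bigO(n^{-1}t^{-\frac{4k+3}{4k+2}})=o(1)$, precisely for $a<\frac{4k+2}{4k+3}$.
\item No single diagonal conjugation works at both endpoints, since the imbalance at $-b_t$ is the opposite one. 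You therefore need a coupled system of $L^2$ estimates for $\widehat R_{r,-}$ on $\Gamma_r$ and $\widehat R_{l,-}$ on $\Gamma_l$.
\item The cross terms carry $D_r^{\mp2}$, which costs at most $t^{-\frac{1}{2k+1}}$. This is recovered because the Cauchy kernel between the two well-separated small circles contributes a factor $|\Gamma_r|^{1/2}|\Gamma_l|^{1/2}=\bigO(t^{\frac{1}{2k+1}})$.
\item This yields $\|\widehat R_{j,-}-I\|_{L^2(\Gamma_j)}=\bigO(n^{-1}t^{-1})$ and an $L^1$ remainder $\bigO(n^{-2}t^{-2})$ in the conjugated frame.
\item Undoing the conjugation costs $\|D_j\|\,\|D_j^{-1}\|=t^{-\frac{1}{4k+2}}$. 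This is exactly where the exponent $\frac{8k+5}{4k+2}=2+\frac{1}{4k+2}$ comes from; your bookkeeping would not produce it.
\end{itemize}

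Your Step 2 bound $\Re\varphi_t\le -c|z\mp b_t|^{\frac{4k+3}{2}}$ on the lens lips and real line, taken from the endpoint expansions of Lemma \ref{lem:hb}, is in line with the paper. So is the bound on $R'$ via the Cauchy representation on $K$. Note, however, that the lower limit $\frac{3(2k+1)}{6k+5}$ plays no role in this proposition. It is needed only later, to make the error in the critical-side integration vanish.
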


\begin{proof}
From \eqref{eq:regular-g}, we have 
\begin{equation}
    \mbox{Re}\,\varphi_t(z) 
=-\frac{|\sigma|}{2}
h_t(s)\sqrt{b_t^2-s^2}\times
\left[
1+O\!\left(
\frac{\sigma^2}{(b_t-|s|)^2}
\right)
\right],~z = s + i\sigma \in  C_{0}^{(u)} \cup C_{0}^{(l)},
\end{equation}where $C_{0}^{(u,l)}$ are the parts of the curves $\gamma_{\pm}$ that lie outside of the disks $U_{\delta_t,b_t}$ and $U_{\delta_t,-b_t}$, as shown in Figure \ref{figure: contours R}. We choose $z=s\pm i\epsilon|b_t-s|$ for $\epsilon$ sufficiently small. Using Lemma \ref{lem:hb}, we have
\[
\mbox{Re} \, \varphi_t(z)
\le-c|z\mp b_t|^{\frac{4k+3}{2}}, \quad z \in C_{0}^{(u)} \cup C_{0}^{(l)},
\]
\[\mbox{Re} \, \varphi_{t,-}(x)
\ge c|x\mp b_t|^{\frac{4k+3}{2}},\quad x\in(-\infty,-b_t-\delta_t]\cup[b_t+\delta_t,\infty),\]
which implies
\begin{equation}\label{uniform5}
|J_{R}(z) - I | \leq Ce^{-c_{0}nt^{\frac{4k+3}{4k+2}}}, \quad \mbox{for all}\quad    z\in\Sigma_R\setminus(\partial
U_{\delta_t,-b_t}\cup\partial U_{\delta_t,b_t}).
\end{equation}
For $t \ge n^{-a}$ with the parameter $a$ given by \eqref{eq:overlap-exponent-range}, $J_R(z)$ approaches the identity matrix exponentially fast as $n \to \infty$.

We next consider the jump matrices on the boundary circles $\partial U_{\delta_t, b_t} \cup \partial U_{\delta_t, -b_t}$. Combining \eqref{Pinf Region 1}, \eqref{asy D_b}, \eqref{eq:asy D_-b}, and \eqref{eq:uniform ft}, we have
\begin{align}\label{asymptotics for the jumps of JR}
 J_{R}(z) = I+ J_{R}^{(1)}(z;t)n^{-1} + 
\bigO\left(
n^{-2}
t^{-\frac{8k+7}{4k+2}}
\right), \quad |J_R^{(1)}(z;t)|
\le Ct^{-\frac{2k+1}{2k+2}},
\end{align}
uniformly for $z \in \partial U_{\delta_t, b_t} \cup \partial U_{\delta_t, -b_t}$ as $n\to\infty$. This gives us
\begin{equation}\label{eq:est JR}
    \|J_R-I\|_{L^\infty(\partial U_{\delta_t,b_t}\cup\partial U_{\delta_t,-b_t})}
=
\bigO\!\left(
n^{-1}t^{-\frac{2k+2}{2k+1}}
\right),\quad
\|J_R-I\|_{L^1(\partial U_{\delta_t,b_t}\cup\partial U_{\delta_t,-b_t})}
=
\bigO\!\left(
n^{-1}t^{-1}
\right).
\end{equation}
These estimates imply that a standard small-norm RH problem for $R$ can only be established for $0<a<\frac{2k+1}{2k+2}$, which falls short of the desired range in \eqref{eq:overlap-exponent-range}.

To overcome this restriction, we adopt the constant conjugation technique from \cite[Section 4]{DIK} and introduce two global conjugations of $R$. That is, let $\Gamma_r=\partial U_{\delta_t,b_t},
\Gamma_l=\partial U_{\delta_t,-b_t},$
and define 
\begin{equation}\label{eq:relation hatR and R}
    \widehat R_{j}(z)=D_j^{-1}MR(z)M^{-1}D_j,\quad  j\in\{r,l\},\quad z\in\mathbb{C}\setminus\Sigma_R,
\end{equation}
with $M=\frac{I+i\sigma_1}{\sqrt{2}}$, $D_r=t^{-\frac{\sigma_3}{8k+4}}$ and $D_l=D_r^{-1}$. Both \(\widehat R_r(z)\) and \(\widehat R_l(z)\) possess jumps on \(\Sigma_R\), denoted by $\widehat J_{R,j}(z)$, which satisfy
\[
\widehat J_{R,j}(z)
=
D_j^{-1}MJ_R(z)M^{-1}D_j, \qquad \textrm{for \(j\in\{r,l\}\)}.
\]
In the RH problem for $\widehat R_j$, the jump matrices $\widehat J_{R,j}$ on $\Sigma_R\setminus(\Gamma_r\cup\Gamma_l)$ remain exponentially decaying. Meanwhile, on the corresponding circle $\Gamma_j$, conjugating \eqref{asymptotics for the jumps of JR} yields
\begin{equation}
   \widehat J_{R,j}(z)
=
I+\frac{\widehat J_{R,j}^{(1)}(z;t)}{n}
+
\bigO\left(
n^{-2}t^{-\frac{4k+3}{2k+1}}
\right), 
\qquad
z\in\Gamma_j.
\end{equation}
Moreover, from \eqref{eq:ht and ht'} and \eqref{eq:uniform ft}, we obtain
\begin{equation}\label{eq:hat JR}
    |\widehat J_{R,j}^{(1)}(z;t)|
\le Ct^{-\frac{4k+3}{4k+2}},\quad z\in\Gamma_j.
\end{equation}
Since
\(
|\Gamma_j|=\bigO\left(t^{\frac1{2k+1}}\right),
\)
it follows that
\begin{equation}\label{eq:L2 for hatJ}
    \left\|
\widehat J_{R,j}-I
\right\|_{L^2(\Gamma_j)}
=
\bigO(n^{-1}t^{-1}).
\end{equation}
Consequently, with $0<a<\frac{4k+2}{4k+3}$, for all $t\ge n^{-a}$, we have $\left\|
\widehat J_{R,j}-I
\right\|_{L^\infty(\Gamma_j)}
=
\bigO(n^{-1}t^{-\frac{4k+3}{4k+2}})=\bigO(n^{\frac{4k+3}{4k+2}a-1})=o(1)$.

Now, we consider $\widehat{R}_j(z)$ for \(j\in\{r,l\}\). First, we have (cf. \cite{DKMVZ1,DKMVZ2})
\begin{equation}\label{eq:int equ for hatR}
\widehat{R}_j(z)-I
=
\frac{1}{2\pi i}
\int_{\Sigma_R}
\frac{\widehat{R}_{j,-}(s)\bigl(\widehat{J}_{R,j}(s)-I\bigr)}{s-z}\,ds,
\qquad
z\in\mathbb{C}\setminus\Sigma_R.
\end{equation}
We next estimate the boundary values of $\widehat{R}_r$ on $\Gamma_r$ and those of $\widehat{R}_l$ on $\Gamma_l$. This analysis is essential because each conjugation is specifically tailored to optimize the jump estimates exclusively on its corresponding circle $\Gamma_j$. Combining these two separate localized $L^2$-estimates on $\Gamma_r$ and $\Gamma_l$ enables us to deduce the desired asymptotics for the original matrix $R(z)$. In what follows, we illustrate this procedure using $\widehat{R}_r$ as an example. Decomposing the contour $\Sigma_R$ in \eqref{eq:int equ for hatR} yields the representation
\begin{multline} \label{eq:coupled}
 \widehat R_r(z)=I
 +\frac1{2\pi i}
\int_{\Gamma_r}\frac{\widehat {R}_{r,-}(s)(\widehat J_{R,r}(s)-I)}{s-z}\,ds+D_r^{-2}\left[\frac1{2\pi i}
\int_{\Gamma_l}\frac{\widehat {R}_{l,-}(s)(\widehat J_{R,l}(s)-I)}{s-z}\,ds\right]D_r^2\\
+D_r^{-1}M\left[\frac1{2\pi i}
\int_{\Sigma_R\setminus(\Gamma_l\cup\Gamma_r)}\frac{{R}_{-}(s)( J_{R}(s)-I)}{s-z}\,ds\right]M^{-1}D_r.
\end{multline}
Since the Cauchy projection on each endpoint circle is uniformly bounded on
$L^2$, and the distance between $\Gamma_r$ and $\Gamma_l$ is bounded below by a
positive constant, we have
\begin{multline}  \label{eq:cross-cauchy}
\left\|\frac1{2\pi i}\int_{\Gamma_\ell}
                \frac{ (\widehat R_{l,-}-I)(\widehat J_{R,l}-I)}{s-z}\,d s\right\|_{L^2(\Gamma_r)}
 \le C|\Gamma_r|^{1/2}|\Gamma_\ell|^{1/2}
          \|{(\widehat R_{l,-}-I)(\widehat J_{R,l}-I)}\|_{L^2(\Gamma_\ell)}\\
 \le Ct^{\frac{1}{2k+1}} \|\widehat R_{l,-}-I\|_{L^2(\Gamma_\ell)}\|\widehat J_{R,l}-I\|_{L^\infty(\Gamma_\ell)}.
\end{multline}
Combining \eqref{uniform5}, \eqref{eq:hat JR}, \eqref{eq:L2 for hatJ}, \eqref{eq:coupled}, \eqref{eq:cross-cauchy} and the fact $\|D_r\|=\|D_l\|=\bigO\left(t^{-\frac1{8k+4}}\right)$, we obtain
\begin{equation}\label{eq:L2 for Rr-}
    \|\widehat R_{r,-}-I\|_{L^2(\Gamma_r)}\le Cn^{-1}t^{-1}+Cn^{-1}t^{-\frac{4k+3}{4k+2}} \left(\|\widehat R_{r,-}-I\|_{L^2(\Gamma_r)}+\|\widehat R_{l,-}-I\|_{L^2(\Gamma_l)}\right).
\end{equation}
Similarly, we have
\begin{equation}\label{eq:L2 for Rl-}
    \|\widehat R_{l,-}-I\|_{L^2(\Gamma_l)}\le Cn^{-1}t^{-1}+Cn^{-1}t^{-\frac{4k+3}{4k+2}}  \left(\|\widehat R_{r,-}-I\|_{L^2(\Gamma_r)}+\|\widehat R_{l,-}-I\|_{L^2(\Gamma_l)}\right).
\end{equation}
Here, the exponentially small terms are omitted. The coefficient of the second term on the right of \eqref{eq:L2 for Rr-} and \eqref{eq:L2 for Rl-}
tends to zero uniformly, since $ n^{-1}t^{-\frac{4k+3}{4k+2}}
 \le n^{-1+a\frac{4k+3}{4k+2}}\to0$ when $0<a<\frac{4k+2}{4k+3}$.
We can therefore obtain
\begin{equation}\label{eq:L2 for hatR}
    \|\widehat R_{j,-}-I\|_{L^2(\Gamma_j)}\le Cn^{-1}t^{-1},\quad z\in\Gamma_j.
\end{equation}
In addition, $\left\|
R_--I
\right\|_{L^2(\Sigma_R\setminus(\Gamma_r\cup\Gamma_l))}
=
\mathcal O\!\left(
n^{-1}t^{-\frac{4k+3}{4k+2}}
\right).$

We now estimate the remainder term $ R(z)-I-\frac{ R^{(1)}(z)}n$ for $R$. Using \eqref{eq:int equ for hatR}, we have
\begin{align}
    \label{eq:int equ for R}
 R(z)-I-\frac{R^{(1)}(z)}n
=
\frac1{2\pi i}
\int_{\Sigma_R}\frac{ {\mathcal{E}}(s)}{s-z}\,ds,\quad  z\in\mathbb{C}\setminus\Sigma_R,
\end{align}
where \({\mathcal{E}}=
\left( J_{R}-I-\frac{ J_{R}^{(1)}}{n}\right)
+( R_{-}-I)( J_{R}-I).\) 
On $\Gamma_j$, $j\in\{r,l\}$, the conjugation
$D_j^{-1}M(\cdot)M^{-1}D_j$ introduced in \eqref{eq:relation hatR and R} gives us
\begin{align}
\widehat {\mathcal{E}}_j=D_j^{-1}M\mathcal E M^{-1}D_j
=
\left(
\widehat J_{R,j}-I
-\frac{\widehat J_{R,j}^{(1)}}{n}
\right)+
(\widehat R_{j,-}-I)
(\widehat J_{R,j}-I).
\label{eq:adapted-remainder-direct}
\end{align}
Applying Hölder's inequality alongside \eqref{eq:hat JR}, \eqref{eq:L2 for hatJ}, and \eqref{eq:L2 for hatR}, we find
\begin{align*}
\|\widehat {\mathcal{E}}_j\|_{L^1(\Sigma_j)} \le \left\|\widehat J_{R,j}-I-\frac{\widehat J_{R,j}^{(1)}}{n}\right\|_{L^1(\Sigma_j)}+\|\widehat R_{j,-}-I\|_{L^2(\Gamma_j)}\left\|
\widehat J_{R,j}-I
\right\|_{L^2(\Gamma_j)}
\le Cn^{-2}t^{-2}.
\end{align*}
Therefore, $\left\|
D_j^{-1}M\mathcal E M^{-1}D_j
\right\|_{L^1(\Gamma_j)}
=
\mathcal O(n^{-2}t^{-2}).$
Since $\|D_j\|,\,\|D_j^{-1}\|
=
t^{-\frac1{4k+2}},$
we conclude that
\begin{equation}\label{eq:original-remainder-circle-simple}
\|\mathcal E\|_{L^1(\Gamma_j)}
=
\mathcal O\!\left(
n^{-2}t^{-\frac{8k+5}{4k+2}}
\right),
\qquad
j\in\{r,l\}.
\end{equation}
On the remaining contour $\Sigma_R\setminus(\Gamma_r\cup\Gamma_l)$, a similar computation yields  $\|\mathcal E\|_{L^1(\Sigma_R\setminus(\Gamma_r\cup\Gamma_l))}
=
o\!\left(
n^{-2}t^{-\frac{8k+5}{4k+2}}
\right)$, which finally implies $\|\mathcal{E}\|_{L^1(\Sigma_R)}=\bigO(n^{-2}t^{-\frac{8k+5}{4k+2}})$. 

For any $K \subset \mathbb{C} \setminus \Sigma_R$ satisfying
$
\operatorname{dist}(K,\Sigma_R)\ge\eta>0
$, using representation \eqref{eq:int equ for R}, we have
\begin{equation} 
\left|R(z)
-
I-\frac{R^{(1)}(z;t)}n\right|\le \|\mathcal{E}\|_{L^1(\Sigma_R)}\le Cn^{-2}t^{-\frac{8k+5}{4k+2}},
\end{equation}
uniformly for all $z\in K$ and $t\ge n^{-a}$. Finally, for any choice of $a$ in the range \eqref{eq:overlap-exponent-range}, we have $\bigO\left(\frac{n^{-2}t^{-\frac{8k+5}{4k+2}}}{t^{-1}n^{-1}}\right)=o(1)$, as $n\to\infty$. This completes the proof the proposition.
\end{proof}

\section{Proof of Theorem \ref{Main theorem}}\label{sec:proof 1.3}


We first introduce the differential identity obtained in \cite{Charlier}
\begin{align}\label{eq:differential-identity}
\partial_{t} \log D_{n}(V_{t}) =  \frac{1}{2\pi i}\int_{\mathbb{R}}[Y^{-1}(x)Y^{\prime}(x)]_{21}\partial_{t}w_{t}(x)dx,
\end{align}
where in our case, we have
\begin{equation}
w_{t}(x) = e^{-nV_{t}(x)} \qquad \mbox{and} \qquad V_t(x)=(1-t)V(x)+tx^2.
\end{equation} 
In the following, we integrate \eqref{eq:differential-identity} from $t$ to $1$, separating the regular regime $t\ge n^{-a}$ from the critical regime $t\le n^{-a}$ as in \cite{BWW,BI,Charlier,CFLW}. We first integrate on the regular side.

\subsection{Integration in $V$ when $t\in[n^{-a},1]$}\label{Section: integrating V}

Using the jump relations \eqref{jump relations of Y} of $Y$, the differential identity \eqref{eq:differential-identity} becomes
\begin{align}\label{lol 1}
\partial_{t} \log D_{n}(V_{t})={}&
\frac{1}{2\pi i}\int_{\mathbb{R}\setminus[-b_t-\epsilon,b_t+\epsilon]}
 [Y^{-1}(x)Y^{\prime}(x)]_{21}\,\partial_t w_t(x)\,dx \notag\\
&-\frac{1}{2\pi i}\int_{\mathcal C}
 [Y^{-1}(z)Y^{\prime}(z)]_{11}\,\partial_t\log w_t(z)\,dz.
\end{align}
where $\epsilon >0$ is fixed and $\mathcal{C}$ is a closed curve surrounding $[-b_t,b_t]$ and the lenses $\gamma_{+}\cup\gamma_{-}$, is oriented clockwise and passes through $-b_t-\epsilon$ and $b_t+\epsilon$.

For $z$ outside the lenses, we have
\begin{equation}\label{Y near inf 2}
Y(z) = e^{\frac{n\ell_{t}}{2}\sigma_{3}}R(z)P^{(\infty)}(z)e^{ng_t(z)\sigma_{3}}e^{-\frac{n\ell_{t}}{2}\sigma_{3}},
\end{equation}
and thus, by \eqref{variationalcondition:must-inequality1} and \eqref{eq:g++g-}, one has
\begin{equation}
[Y^{-1}(x)Y^{\prime}(x)]_{21} \partial_{t} w_{t}(x) = \bigO(e^{-cn}), \qquad \mbox{ as } n \to \infty,
\end{equation}
uniformly for $x \in \mathbb{R}\setminus [-b_t-\epsilon,b_t+\epsilon]$ and $t \in (0,1]$ , where $c >0$ is a fixed constant. We can explicitly compute $[Y^{-1}(z)Y^{\prime}(z)]_{11}$ using \eqref{Y near inf 2}. Thus, as $n\to\infty$, equation \eqref{lol 1} becomes
\begin{equation}\label{eq:DI-regular}
\begin{array}{r c l}
\displaystyle \partial_{t} \log D_{n}(V_{t})  & = & \displaystyle I_{1,t} + I_{2,t} + I_{3,t} + \bigO(e^{-cn}), \\[0.35cm]
\displaystyle I_{1,t} & = & \displaystyle \frac{-n}{2\pi i} \int_{\mathcal{C}} g_t^{\prime}(z) \partial_{t} \log w_{t}(z) dz, \\[0.35cm]
\displaystyle I_{2,t} & = & \displaystyle \frac{-1}{2\pi i} \int_{\mathcal{C}} [P^{(\infty)}(z)^{-1}P^{(\infty)}(z)^{\prime}]_{11} \partial_{t} \log w_{t}(z) dz, \\[0.35cm]
\displaystyle I_{3,t} & = & \displaystyle \frac{-1}{2\pi i} \int_{\mathcal{C}} [P^{(\infty)}(z)^{-1}R^{-1}(z)R^{\prime}(z)P^{(\infty)}(z)]_{11} \partial_{t} \log w_{t}(z) dz.
\end{array}
\end{equation}
 By \eqref{property g: 2}, a direct calculation shows that
\begin{equation}\label{eq:I1-regular}
I_{1,t} = n^{2} \int_{-b_t}^{b_t} \big(V(x)-x^{2}\big)\rho_{t}(x)dx,
\end{equation}
where $\rho_t(x)$ is defined in \eqref{eq:regular-equilibrium-density}.
Then it follows that
\begin{equation}
    I_{1,t}=n^2[(1-t)P(b_t)+tS(b_t)]=n^2[P(b_t)+t(S(b_t)-P(b_t))],
\end{equation}
with $P(b)$ and $S(b)$ are defined in \eqref{eq:P-definition} and \eqref{eq:S-U-definition}, 
respectively.
As $t\to0$, combining \eqref{eq:bt-expansion}, \eqref{eq:P-expansion}, and \eqref{eq:S-expansion}, we have
\begin{equation}
    \frac{I_{1,t}}{n^2}=p_0+C_{1}t+C_2t^\frac{2k+2}{2k+1}+\bigO(t^\frac{2k+3}{2k+1}),~t\to0,
\end{equation}
with
\begin{equation}
    p_0=\int_{-A}^A(V(x)-x^2)\rho(x)\,dx.
\end{equation}
In addition,
\begin{align}
  C_{1}
  &=\frac1\pi\int_{-A}^{A}(V(x)-x^2)\sqrt{A^2-x^2}\dd x
  -\int_{-A}^{A}(V(x)-x^2)\psi(x)(A^2-x^2)^{2k+\frac12}\dd x
  \notag\\
  &\quad
  +\frac{2-A^2}{2\pi}\int_{-A}^{A}\frac{V(x)-x^2}{\sqrt{A^2-x^2}}\dd x,
  \label{eq:def-C1}
\end{align}
and
\begin{equation}\label{eq:def C2}
  C_{2}=-\frac{(2k+1)}{(2k+2)}\alpha_k^{-\frac{1}{2k+1}}f_1(A)|f_1(A)|^{\frac{2k+2}{2k+1}},
\end{equation}
where $\alpha_k$ and $f_1(A)$ are defined in \eqref{eq:def alpha k} and \eqref{eq:def c1}, respectively.
Then
\begin{equation}\label{eq:asy I1,t}
    \int_{n^{-a}}^1I_{1,t}dt=\left(C_0-p_0n^{-a}-\frac{1}{2}C_1n^{-2a}-\frac{2k+1}{4k+3}C_2n^{-\frac{4k+3}{2k+1}a}+\bigO(n^{-\frac{4k+4}{2k+1}a})\right)n^2,
\end{equation}
with
\begin{equation}\label{eq:def C0}
    C_0=\int_0^1\int_{-b_s}^{b_s}(V(u)-u^2)\rho_s(u)\,du\,ds.
\end{equation}
The explicit global parametrix in \eqref{Pinf Region 1} gives
\[
 (P^{(\infty)}(z))^{-1}(P^{(\infty)}(z))'
 =-\frac{a_t'(z)}{a_t(z)}M^{-1}\sigma_3M.
\]
Since $[M^{-1}\sigma_3M]_{11}=0$, the second contribution vanishes, i.e., $ I_{2,t}=0.$
Proposition \ref{prop:uniform-R} gives
\begin{equation}\label{eq:I3-regular-R1}
I_{3,t} = \frac{1}{2\pi i} \int_{\mathcal{C}} [P^{(\infty)}(z)^{-1}R^{(1)}(z)^{\prime}P^{(\infty)}(z)]_{11}\,\partial_{t}V_{t}(z)\,dz
 + \bigO\!\left(n^{-1}t^{-\frac{8k+5}{4k+2}}\right),
\end{equation}
uniformly for $n^{-a}\le t\le1$.  
Using \eqref{Pinf Region 1}, it becomes, as $n \to \infty$
\begin{multline}\label{lol 12}
I_{3,t} = -\frac{1}{2\pi i} \int_{\mathcal{C}} \left( \frac{a_t(z)^{2}+a_t(z)^{-2}}{4}[R_{11}^{(1)}(z)^{\prime} - R_{22}^{(1)}(z)^{\prime}] + \frac{1}{2}[R_{11}^{(1)}(z)^{\prime}+R_{22}^{(1)}(z)^{\prime}] \right. \\ \left. - i \frac{a_t(z)^{-2}-a_t(z)^{2}}{4} [R_{12}^{(1)}(z)^{\prime} + R_{21}^{(1)}(z)^{\prime}] \right) \big(V(z)-z^{2}\big)dz
+ \bigO\!\left(n^{-1}t^{-\frac{8k+5}{4k+2}}\right).
\end{multline}
Substituting \eqref{eq:A-2} and \eqref{eq:A-1} into \eqref{R^{(1)}} gives
\begin{align*}
& R_{11}^{(1)\prime}(z)-R_{22}^{(1)\prime}(z)
 =\frac{5}{12h_t(b_t)}\left[ \frac{1}{(z-b_t)^{3}}+\frac{1}{(z+b_t)^3}\right] \notag\\
&\hspace{1.4cm}+\frac{1}{8h_t(b_t)}\left(\frac{h_t'(b_t)}{h_t(b_t)}+\frac{1}{b_t}\right)
 \left[\frac{1}{(z+b_t)^{2}}-\frac{1}{(z-b_t)^2}\right],\nonumber \\
& R_{11}^{(1)\prime}(z) + R_{22}^{(1)\prime}(z) = 0, \\
& R_{12}^{(1)\prime}(z)+R_{21}^{(1)\prime}(z)
 =\frac{5i}{12h_t(b_t)}\left[ \frac{1}{(z+b_t)^{3}}- \frac{1}{(z-b_t)^3}\right] \notag\\
&\hspace{1.4cm}+\frac{i}{h_t(b_t)}\left(\frac{h_t'(b_t)}{8h_t(b_t)}-\frac{1}{6b_t}\right)
 \left[\frac{1}{(z+b_t)^{2}}+ \frac{1}{(z-b_t)^2}\right].\nonumber 
\end{align*}
Put $\mathfrak r_t(z):=(z^2-b_t^2)^{\frac{1}{2}}$ with the branch fixed by $\mathfrak r_t(z)\sim z$ as $z\to\infty$. Then
\begin{equation}
    [P^{(\infty)}(z)^{-1}R^{(1)}(z)^{\prime}P^{(\infty)}(z)]_{11}
    =\frac{b_t^2}{4h_t(b_t)\mathfrak r_t(z)^5}
     -\frac{b_th_t'(b_t)}{8h_t(b_t)^2\mathfrak r_t(z)^3}.
\end{equation}
With $Q(z)=V(z)-z^2$, this gives
\begin{equation}
\begin{array}{r c l}
  \displaystyle I_{3,t}&=&\displaystyle\frac{1}{2\pi i}\left(-\frac{b_t^2}{4h_t(b_t)}\int_{\mathcal{C}} \frac{Q(z)}{\mathfrak r_t(z)^5}\,dz+\frac{b_th_t'(b_t)}{8h_t(b_t)^2}\int_{\mathcal{C}} \frac{Q(z)}{\mathfrak r_t(z)^3}\,dz\right)\\
  &:=&\displaystyle-\frac{b_t^2}{4h_t(b_t)}K_1+\frac{b_th_t'(b_t)}{8h_t(b_t)^2}K_2.
\end{array}
\end{equation}
Since $\mathfrak r_t'(z)=\frac{z}{\mathfrak r_t(z)}$, we have
\begin{equation}
    \left(\frac{zQ(z)}{\mathfrak r_t(z)}\right)'=\frac{zQ'(z)}{\mathfrak r_t(z)}-\frac{b_t^2Q(z)}{\mathfrak r_t(z)^3},\qquad
    \left(\frac{zQ(z)}{\mathfrak r_t(z)^3}\right)'=\frac{zQ'(z)}{\mathfrak r_t(z)^3}-\frac{2Q(z)}{\mathfrak r_t(z)^3}-\frac{3b_t^2Q(z)}{\mathfrak r_t(z)^5}.
\end{equation}
Thus,
\begin{equation}
K_2=\frac{1}{b_t^2}\frac{1}{2\pi i}\int_{\mathcal{C}}\frac{zQ'(z)}{\mathfrak r_t(z)}\,dz,\qquad
    3b_t^2K_1=\frac{1}{2\pi i}\int_{\mathcal{C}}\frac{zQ'(z)}{\mathfrak r_t(z)^3}\,dz-2K_2
\end{equation}
Integrating by parts, we have
\begin{equation}
    K_1=\frac{1}{3b_t^4}\frac{1}{2\pi i}\int_{\mathcal{C}}\frac{z^2Q''(z)-zQ'(z)}{\mathfrak r_t(z)}\,dz.
\end{equation}
 Using
$\mathfrak r_{t,\pm}(x)=\pm i\sqrt{b_t^2-x^2}$, we have
\begin{equation}\label{eq:K12-real-integrals}
    K_2=-\frac{1}{\pi b_t^2}\int_{-b_t}^{b_t}\frac{xQ'(x)}{\sqrt{b_t^2-x^2}}\,\dd x,
    \qquad
    K_1=-\frac{1}{3\pi b_t^4}\int_{-b_t}^{b_t}\frac{x^2Q''(x)-xQ'(x)}{\sqrt{b_t^2-x^2}}\,\dd x.
\end{equation}
The endpoint identity
\eqref{eq:M-prime-identity} reads
\[
 M'(b_t)=\frac{\pi b_t}{1-t}\bigl(h_t(b_t)-2t\bigr).
\]
Together with \eqref{eq:endpoint-equation}, this yields
\begin{equation}\label{eq:K12-endpoint}
    K_2=-\frac{2-b_t^2}{b_t^2(1-t)},
    \qquad
    K_1=-\frac{b_t^2h_t(b_t)-4}{3b_t^4(1-t)}.
\end{equation}
Then we obtain the following expansion:
\begin{equation}\label{lol 3}
I_{3,t} = \frac{1}{1-t}\left(\frac{b_t^2h_t(b_t)-4}{12b_t^2h_t(b_t)}-\frac{(2-b_t^2)h_t'(b_t)}{8b_th_t(b_t)^2}\right)
+ \bigO\!\left(n^{-1}t^{-\frac{8k+5}{4k+2}}\right),
\qquad n\to\infty.
\end{equation}
Define the $n$-independent regular contribution
\begin{equation}\label{eq:I3-regular-main}
 \mathcal I_3(t):=\frac{1}{1-t}\left(
 \frac{b_t^2h_t(b_t)-4}{12b_t^2h_t(b_t)}
 -\frac{(2-b_t^2)h_t'(b_t)}{8b_th_t(b_t)^2}\right).
\end{equation}
Then \eqref{lol 3} and the endpoint expansions imply
\begin{equation}\label{eq:uniform I3}
 I_{3,t}=-\frac{k}{6(2k+1)t}
 +\mathcal O\!\left(t^{-\frac{2k}{2k+1}}\right)
 +\mathcal O\!\left(n^{-1}t^{-\frac{8k+5}{4k+2}}\right),
 \qquad n^{-a}\le t\le1.
\end{equation}
The finite part is consequently defined without any $n$-dependent error by
\begin{equation}\label{eq:Creg}
 C_{\mathrm{reg}}:=\lim_{\epsilon\downarrow0}
 \left[\int_\epsilon^1\mathcal I_3(s)\,\dd s
 -\frac{k}{6(2k+1)}\log\epsilon\right].
\end{equation}
Integrating \eqref{eq:uniform I3} gives
\begin{equation}\label{eq:integration of I_3,s,1}
\int_{n^{-a}}^{1} I_{3,s}\,\dd s
= C_{\mathrm{reg}}-\frac{ak}{6(2k+1)}\log n
 +\mathcal O\!\left(n^{-\frac{a}{2k+1}}+n^{-1+a\frac{4k+3}{4k+2}}\right).
\end{equation}
We state the following proposition, which gives an explicit expression for $C_{\rm reg}$.
\begin{proposition}
    Assume $A^2\ne2$, we have
        \begin{equation}
C_{\mathrm{reg}}
=
\frac{1}{24}
\log\left\{
\frac{A^{4}}{16}
\left[
2\pi\psi(A)\frac{(4k+1)!!}{(2k)!}
\right]^{\frac{2}{2k+1}}
\left[
(2k+1)\lvert 2-A^{2}\rvert
\right]^{\frac{4k}{2k+1}}
\right\}.
\label{eq:Creg-explicit}
\end{equation}
\end{proposition}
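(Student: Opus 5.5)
The key observation is that $\mathcal I_3(t)$ in \eqref{eq:I3-regular-main} is, up to the prefactor $\frac{1}{1-t}$, the $t$-derivative of an explicit function of $(b_t,h_t(b_t))$. We therefore first find that antiderivative, and only then take the regularized limit \eqref{eq:Creg}. Concretely, we aim to show
\[
\mathcal I_3(t)=-\frac{1}{24}\frac{d}{dt}\log\Big(\frac{b_t^4h_t(b_t)^2}{16}\Big).
\]
Evenness of $V_t$ gives $h_t(-b_t)=h_t(b_t)$, so this is the same as $-\frac1{24}\frac{d}{dt}\log\big(b_t^4h_t(b_t)h_t(-b_t)/16\big)$. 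This form is also consistent with Remark \ref{transition}.

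To verify the identity, we differentiate the endpoint equation \eqref{eq:endpoint-equation} and use $M'(b_t)=\frac{\pi b_t}{1-t}(h_t(b_t)-2t)$. This gives an expression for $\dot b_t:=\frac{d b_t}{dt}$ in terms of $b_t$, $h_t(b_t)$ and $t$; in particular $\dot b_t\,h_t(b_t)(1-t)$ is proportional to $2-b_t^2$.

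Next we compute $\frac{d}{dt}[h_t(b_t)]$. Since $V_t'=(1-t)V'+2tx$, the representation \eqref{eq:def ht} can be written as a contour integral $\frac{1}{2\pi i}\oint \frac{V_t'(s)}{(s-x)\mathfrak r_t(s)}ds$. This splits into a $t$-explicit part and a part depending only on $b_t$, and yields $\frac{d}{dt}h_t(b_t)=\partial_t h_t(b_t)+h_t'(b_t)\dot b_t+(\partial_b h)\dot b_t$.

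Matching the result against \eqref{eq:I3-regular-main} should work as follows. The term $\frac{(2-b_t^2)h_t'(b_t)}{8b_th_t(b_t)^2(1-t)}$ should be absorbed by $-\frac1{12}\frac{\dot{}(h_t(b_t))}{h_t(b_t)}$, and the remaining terms should match $-\frac16\dot b_t/b_t$. This is a routine but delicate algebraic check.

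Once the identity holds, we integrate it:
\[
\int_\epsilon^1\mathcal I_3\,ds=\frac{1}{24}\log\frac{b_\epsilon^4h_\epsilon(b_\epsilon)^2}{16}-\frac1{24}\log\frac{b_1^4h_1(b_1)^2}{16}.
\]
At $t=1$ we have $V_1=x^2$, so $b_1=\sqrt2$ and $h_1\equiv2$, and the second term vanishes. It remains to find the small-$\epsilon$ asymptotics of $b_\epsilon$ and $h_\epsilon(b_\epsilon)$ from the expansions \eqref{eq:bt-expansion} and \eqref{eq:H-expansion}.

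For $b_\epsilon$ we expand \eqref{eq:endpoint-equation} near $b=A$. The function $M(b)-2\pi$ vanishes to order $2k+1$ in $(b-A)$, with leading coefficient given by $\psi(A)$ and the factor $\frac{(4k+1)!!}{(2k)!}$, while $\pi\epsilon(b^2-2)\approx\pi\epsilon(A^2-2)$. Hence
\[
(b_\epsilon-A)^{2k+1}\sim \epsilon\,\frac{2-A^2}{\kappa\,\psi(A)},
\]
for an explicit constant $\kappa$, and $b_\epsilon\to A$.

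For the density, $h_\epsilon(b_\epsilon)$ behaves like $\psi(A)$ times a combinatorial constant times $|b_\epsilon-A|^{2k}$, since $h$ has a zero of order $2k$ at $A$ when $\epsilon=0$. This gives $h_\epsilon(b_\epsilon)^2\propto \epsilon^{\frac{4k}{2k+1}}$, matching the $-\frac{k}{6(2k+1)}\log\epsilon$ subtraction in \eqref{eq:Creg}. Collecting constants should produce the factors $[2\pi\psi(A)\frac{(4k+1)!!}{(2k)!}]^{\frac{2}{2k+1}}$ and $[(2k+1)|2-A^2|]^{\frac{4k}{2k+1}}$, with $A^4/16$ coming from $b_\epsilon^4\to A^4$. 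The hypothesis $A^2\neq2$ is exactly what makes the leading coefficient of $b_\epsilon-A$ nonzero.

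The main obstacle is the first step, establishing the exact-derivative identity, in particular the $t$-derivative of $h_t(b_t)$ via \eqref{eq:def ht}. A fallback is to avoid the identity and integrate \eqref{eq:I3-regular-main} directly by changing variables from $t$ to $b$ via \eqref{eq:endpoint-equation}. The secondary difficulty is bookkeeping the precise constant in the expansion of $h_\epsilon(b_\epsilon)$; we would get it by differentiating $\rho$ in \eqref{eq:V-E-measure} $2k$ times at the endpoint.
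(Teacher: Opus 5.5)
Your route is the paper's: prove $\mathcal I_3(t)=-\frac1{24}\frac{d}{dt}\log\bigl(b_t^4h_t(b_t)^2/16\bigr)$, integrate from $\epsilon$ to $1$ (the $t=1$ term vanishes since $b_1=\sqrt2$ and $h_1\equiv2$), and expand as $\epsilon\to0$. The identity is true, but you leave implicit the one fact that makes it work. Write $h_t=(1-t)h_V^{(b_t)}+2t$. From $\partial_bH_b=-b\,h_b(b)/R_b$ (Liouville plus evenness) one gets $\partial_bh_V^{(b)}(z)\big|_{z=b}=\frac12\partial_zh_V^{(b)}(b)$, and hence
\[
\dot b_t=\frac{2-b_t^2}{(1-t)b_th_t(b_t)},\qquad \frac{d}{dt}h_t(b_t)=\frac{2-h_t(b_t)}{1-t}+\frac32\,\dot b_t\,h_t'(b_t).
\]
The factor $-\frac1{12}\cdot\frac32=-\frac18$ produces the $h_t'$ term of $\mathcal I_3$. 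The rest of $\mathcal I_3$, namely $\frac{b_t^2h_t(b_t)-4}{12(1-t)b_t^2h_t(b_t)}$, equals $-\frac{\dot b_t}{6b_t}-\frac{2-h_t(b_t)}{12(1-t)h_t(b_t)}$, not $-\frac{\dot b_t}{6b_t}$ alone.

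The genuine gap is the constant. You plan to get the leading coefficient of $h_\epsilon(b_\epsilon)$ by differentiating the critical density $2k$ times at $A$. That computes the Taylor coefficient of $h_A(z)=2\pi\psi(z)(A^2-z^2)^{2k}$, giving $h_A(b)\sim2\pi\psi(A)(2A)^{2k}(b-A)^{2k}$. But $h_\epsilon(b_\epsilon)=(1-\epsilon)h_V^{(b_\epsilon)}(b_\epsilon)+2\epsilon$, where $h_V^{(b)}$ is the Cauchy transform relative to the moved cut $[-b,b]$. Its endpoint value differs from $h_A(b)$ already at leading order:
\[
h_V^{(b)}(b)=2\pi\psi(A)\binom{2k+\frac12}{2k}(b^2-A^2)^{2k}\bigl(1+\mathcal O(b-A)\bigr).
\]
For $k=1$ the ratio is $\frac{15}{8}$. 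This combinatorial factor is exactly what turns $(2A)^{2k}$ into $\frac{(4k+1)!!}{(2k)!}A^{2k}$.

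To obtain it, use $V'-h_AR_A=\mathcal O(z^{-1})$ to write $h_V^{(b)}(z)=\frac1{2\pi i}\oint_\Gamma\frac{h_A(\xi)R_A(\xi)}{R_b(\xi)}\frac{d\xi}{\xi-z}$. Then note that $\frac1{2\pi i}\oint_\Gamma\frac{(\xi^2-A^2)^{m+\frac12}}{R_b(\xi)}\frac{d\xi}{\xi-z}$ is the polynomial part at infinity, $\sum_{j=0}^m\binom{m+\frac12}{j}(b^2-A^2)^j(z^2-b^2)^{m-j}$, whose value at $z=b$ is $\binom{m+\frac12}{m}(b^2-A^2)^m$.

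The same $H_0=2\pi\psi(A)\frac{(4k+1)!!}{(2k)!}A^{2k}$ also determines $M(b)-2\pi\sim\frac{\pi AH_0}{2k+1}(b-A)^{2k+1}$ through $M'(b)=\pi b\,h_V^{(b)}(b)$. You asserted that coefficient without derivation. With $b_\epsilon-A\sim b_1\epsilon^{\frac1{2k+1}}$ and $b_1^{2k+1}=\frac{(2k+1)(2-A^2)}{AH_0}$, one gets $C_{\rm reg}=\frac1{24}\log\frac{A^4}{16}+\frac1{12}\log\bigl(H_0|b_1|^{2k}\bigr)$, which is the stated formula. With your local coefficient the answer would be off by a nonzero multiple of $\log\binom{2k+\frac12}{2k}$.
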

\begin{proof}
    Differentiating \eqref{eq:endpoint-equation} and using \eqref{eq:M-prime-identity}, we have
   \begin{equation}
\label{eq:bt-derivative-Creg}
    \frac{d}{dt}b_t
    =
    \frac{2-b_t^2}
    {(1-t)b_t h_t(b_t)}.
\end{equation}
Moreover, differentiation of the Cauchy representation of $h_t(b_t)$ defined in \eqref{eq:hb-PV}, together with the evenness of the potential, yields
\begin{equation}
\label{eq:ht-endpoint-derivative}
    \frac{d}{dt}h_t(b_t)
    =
    \frac{2-h_t(b_t)}{1-t}
    +
    \frac{3}{2}b_t'h_t'(b_t).
\end{equation}
Here $h_t'(b_t)
:=
\left.\frac{\partial h_t(x)}{\partial x}\right|_{x=b_t}.$ Thus, using \eqref{eq:bt-derivative-Creg} and \eqref{eq:ht-endpoint-derivative}, it can be checked that
\begin{equation}
\label{eq:I3-Kt-identity}
    I_3(t)
    =
    -\frac{1}{24}\frac{d}{dt}
    \log\left(
        \frac{b_t^4h_t^2(b_t)}{16}
    \right).
\end{equation}

Combining the definition of $C_{\rm reg}$ in \eqref{eq:Creg}, \eqref{eq:H-expansion}, \eqref{eq:H0}, \eqref{eq:bt-expansion} and \eqref{eq:b1}, we obtain \eqref{eq:Creg-explicit}.
\end{proof}

\subsection{Integration in $V$ when $t\in[0,n^{-a}]$}\label{Section: integrating V to 0}
The case when $t\in[0,n^{-a}]$ is similar to Section \ref{Section: integrating V}. We first show the following lemma that we will use later.

\begin{lemma}
\label{lem:critical-overlap-error}
Let $a$ satisfy \eqref{eq:overlap-exponent-range} and put
$s_n(t):=n^{\frac{4k+2}{4k+3}} f_1(A)t$. Then $R(z)$ in \eqref{R-S-P} and $R'(z)$ satisfy
\begin{equation}\label{eq:critical-overlap-R}
 R(z)=I+n^{-\frac{1}{4k+3}}R^{(1)}(z;s_n(t),0)
 +\mathcal O\!\left(n^{-\frac{2}{4k+3}}(1+|s_n(t)|)^{\frac{4k+4}{2k+1}}\right),
\end{equation}
\begin{equation}\label{eq:critical-overlap-R'}
 R'(z)=n^{-\frac{1}{4k+3}}(R^{(1)})'(z;s_n(t),0)
 +\mathcal O\!\left(n^{-\frac{2}{4k+3}}(1+|s_n(t)|)^{\frac{4k+4}{2k+1}}\right),
\end{equation}
uniformly for $0\leq t\leq n^{-a}$ and $z\in\mathbb{C}\setminus(\partial
U_{\delta,-A}\cup\partial U_{\delta,A})$.
\end{lemma}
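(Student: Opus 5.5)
The plan is to revisit the small-norm analysis of Section~\ref{subsection: sd-R}, this time tracking how the error depends on the parameter $s_n(t)=n^{\frac{4k+2}{4k+3}}f_1(A)t$. In the earlier analysis $s$ stayed bounded, whereas for $0\le t\le n^{-a}$ we only have $|s_n(t)|\le Cn^{\frac{4k+2}{4k+3}-a}$, which may grow. We keep the construction \eqref{definition: Psingular} with $t_1=t$ and $t_j=0$ for $j\ge2$. Then $v_R$ on $\partial U_{\delta,\pm A}$ is given by the large-$\zeta$ expansion \eqref{RHP Psi: c}, evaluated at $\zeta=n^{\frac{2}{4k+3}}f_A(z)$ with parameter $n^{\frac{4k+2}{4k+3}}tf_1(z)$. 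Off the disks, the jumps are exponentially small by \eqref{Re phist<0} and Corollary~\ref{corollary: equilibrium measures}. This uses $t\le n^{-a}\to0$, so that the perturbation $t\nu_1$ cannot spoil the sign of $\Re\phi_{\mathbf t}$ on the lens away from $\pm A$.

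The first step is a uniform version of the $\Psi$ expansion in which the parameter is allowed to be large. We need the remainder in \eqref{RHP Psi: c} to satisfy
\[
\Psi(\zeta;s)e^{\theta\sigma_3}N^{-1}\zeta^{\frac14\sigma_3}
=I-h\zeta^{-\frac12}\sigma_3+\tfrac12\begin{pmatrix}h^2&iq\\-iq&h^2\end{pmatrix}\zeta^{-1}+\mathcal O\bigl(|\zeta|^{-\frac32}(1+|s|)^{\frac{4k+4}{2k+1}}\bigr)
\]
uniformly for $|\zeta|\ge c\,n^{\frac{2}{4k+3}}$ and $|s|\le C|\zeta|^{\frac{4k+2}{2}}\,\cdot$ (small). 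The route to this is the scaling $\zeta=|s|^{\frac{2}{4k+2}}\lambda$. Under this scaling the $P_{\rm I}^{2k}$ model problem for large $|s|$ becomes a small-norm problem around an explicit $g$-function parametrix, which is exactly what underlies \eqref{eq:asyq} and \eqref{eq:asy of Ham} in \cite{Claeys,DLXYZ}. The coefficients of $\zeta^{-j/2}$ then grow at most like polynomial powers of $|s|$: $h\sim|s|^{\frac{2k+2}{2k+1}}$, $h^2\sim|s|^{\frac{4k+4}{2k+1}}$, and similarly for $q$ and the next coefficient. Those powers are what produce the factor $(1+|s_n|)^{\frac{4k+4}{2k+1}}$ in the statement. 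The requirement $a>\frac{3(2k+1)}{6k+5}$ guarantees that $n^{-\frac{1}{4k+3}}|s_n|^{\frac{2k+2}{2k+1}}\to0$, so the jump on the circles is still $I+o(1)$, and the disk radius $\delta$ can stay fixed.

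With this bound in hand, we run the standard argument. We write $v_R=I+n^{-\frac1{4k+3}}\Delta_1+\mathcal E_v$ on $\partial U_{\delta,\pm A}$, where $\Delta_1$ is given by \eqref{definition: Delta1 near a}--\eqref{definition: Delta1 near b} with $h=h(s_n(t),0)$, and where $\|\mathcal E_v\|_{L^\infty}=\mathcal O(n^{-\frac2{4k+3}}(1+|s_n|)^{\frac{4k+4}{2k+1}})$. We also need the $z$-dependence of $f_1(z)$ inside $h$. For this we expand $h(n^{\frac{4k+2}{4k+3}}tf_1(z))$ around $z=A$ and use $|\partial_sh|=|q|\lesssim(1+|s|)^{\frac1{2k+1}}$. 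The correction this produces is of order $n^{-\frac1{4k+3}}|s_n|(1+|s_n|)^{\frac1{2k+1}}$. That correction is absorbed into the definition of $R^{(1)}(z;s_n(t),0)$, since it is still the residue computation leading to \eqref{eq:A1}, or else it is bounded by the stated error. The Neumann series for $(I-C_{v_R})^{-1}$ then gives $R-I-n^{-\frac1{4k+3}}R^{(1)}=\mathcal O(\|\mathcal E_v\|+\|v_R-I\|^2)$, and both pieces are of the claimed order. The estimate for $R'$ follows by differentiating the Cauchy integral representation for $z$ away from $\Sigma_R$. Uniformity up to the circles comes from the usual deformation of the contour, using the analyticity of $v_R$ near $\partial U_{\delta,\pm A}$.

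The main obstacle is the first step: obtaining the uniform large-$|s|$ remainder for the $P_{\rm I}^{2k}$ model problem with the stated polynomial growth in $s$, for both signs of $s$. Since $t\ge0$, we are in the regime $s\to+\infty$ when $f_1(A)>0$, and $s\to-\infty$ otherwise. I would first try to import the nonlinear steepest descent analysis for large $s$ from \cite{Claeys} (or from \cite{DLXYZ}) in rescaled form, and track the $s$-powers explicitly.
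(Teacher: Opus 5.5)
Your proposal follows the paper's proof. The paper evaluates the uniform large-$s$ expansion of Lemma~\ref{lem:uniform-P2k-Psi}, proved by your rescaling $\eta=\zeta/|s|^{\frac{1}{2k+1}}$ and Claeys's steepest descent, at $\zeta=n^{\frac{2}{4k+3}}f_A(z)$. This gives $J_R-I-n^{-\frac{1}{4k+3}}\Delta_1=\mathcal O\bigl(n^{-\frac{2}{4k+3}}(1+|s_n(t)|)^{\frac{4k+4}{2k+1}}\bigr)$ on $\partial U_{\delta,\pm A}$. The other jumps are exponentially small, and small-norm theory plus Cauchy's formula finish the argument; as you note, the circle lies in the lemma's validity region because $n^{-\frac{1}{4k+3}}|s_n|^{\frac{2k+2}{2k+1}}\to0$.

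Two small fixes are needed. First, the expansion you display is too strong: the $\zeta^{-3/2}$ coefficient contains terms cubic in $h\sim|s|^{\frac{2k+2}{2k+1}}$, so it grows like $|s|^{\frac{6k+6}{2k+1}}$, not $|s|^{\frac{4k+4}{2k+1}}$. State the expansion as the paper does, truncated after $\zeta^{-1/2}$ with remainder $\mathcal O\bigl((1+|s|)^{\frac{4k+4}{2k+1}}|\zeta|^{-1}\bigr)$, valid for $|\zeta|\ge L(1+|s|)^{\frac{4k+4}{2k+1}}$. Second, the $z$-variation of $h\bigl(n^{\frac{4k+2}{4k+3}}tf_1(z)\bigr)$ on the circle is of the same order as $n^{-\frac{1}{4k+3}}\Delta_1$, so your fallback of bounding it by the stated error fails. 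It must stay inside $\Delta_1$, as in your first option: the residue at $z=\pm A$ then sees only $h(s_n(t))$ and reproduces \eqref{eq:A1}.
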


\begin{proof} 
Adopting Lemma~\ref{lem:uniform-P2k-Psi} with
$|s|=|s_n(t)|$, it provides the residual jump estimate
\[
 J_R-I-n^{-\frac{1}{4k+3}}\Delta_1
 =\mathcal O\!\left(n^{-\frac{2}{4k+3}}(1+|s_n(t)|)^{\frac{4k+4}{2k+1}}\right).
\]
The remaining jumps are exponentially small, as stated by the strict variational inequality. The small-norm singular-integral equation and Cauchy's formula
then give \eqref{eq:critical-overlap-R} and \eqref{eq:critical-overlap-R'}.
\end{proof}
As $n \to \infty$, we also have
\begin{equation}\label{eq:DI-critical}
\begin{array}{r c l}
\displaystyle \partial_{t} \log D_{n}(V_{t})  & = & \displaystyle \hat{I}_{1,t} + \hat{I}_{2,t} + \hat{I}_{3,t} + \bigO(e^{-cn}), \\[0.3cm]
\displaystyle \hat{I}_{1,t} & = & \displaystyle \frac{-n}{2\pi i} \int_{\Omega} g^{\prime}(z) \partial_{t} \log w_{t}(z) dz, \\[0.3cm]
\displaystyle \hat{I}_{2,t} & = & \displaystyle \frac{-1}{2\pi i} \int_{\Omega} [\hat{P}^{(\infty)}(z)^{-1}\hat{P}^{(\infty)}(z)^{\prime}]_{11} \partial_{t} \log w_{t}(z) dz, \\[0.3cm]
\displaystyle \hat{I}_{3,t} & = & \displaystyle \frac{-1}{2\pi i} \int_{\Omega} [\hat{P}^{(\infty)}(z)^{-1}R^{-1}(z)R^{\prime}(z)\hat{P}^{(\infty)}(z)]_{11} \partial_{t} \log w_{t}(z) dz,
\end{array}
\end{equation}
where $\Omega$ is a closed curve surrounding $[-A,A]$ and the lenses, is oriented clockwise, and passes through $-A-\epsilon$ and $A+\epsilon$.
 By \eqref{eq:definition: psist} and \eqref{property g: 2}, a direct calculation shows that
\begin{equation}\label{eq:I1-critical}
\begin{array}{r c l}
\displaystyle \widehat I_{1,t}=n^2\int_{-A}^{A}(V(x)-x^2)\rho(x)\,dx
+t n^2\int_{-A}^{A}(V(x)-x^2)\rho_1(x)\,dx,
\end{array}
\end{equation}
where $\rho$ is defined in \eqref{definition: h0},  \(\rho_1\) is defined in \eqref{definition: psij} and is the fixed-support measure corresponding to
\(V_1=x^2-V\). By integrating it in $s$ from $t$ to $n^{-a}$, it gives
\begin{align}\label{eq:asy hat I1,t}
\int_t^{n^{-a}}\hat I_{1,s}\,ds={}&(n^{-a}-t)n^2
 \int_{-A}^{A}\bigl(V(x)-x^2\bigr)\rho(x)\,dx \notag\\
&+\frac{n^2}{2}(n^{-2a}-t^2)
 \int_{-A}^{A}\bigl(V(x)-x^2\bigr)\rho_1(x)\,dx.
\end{align}
Similarly, $\widehat I_{2,t}=0$ because the explicit outer parametrix satisfies
$[\hat P^{(\infty)}{}^{-1}(\hat P^{(\infty)})']_{11}=0$. By Lemma~\ref{lem:critical-overlap-error},
\begin{equation}\label{eq:critical-I3-leading}
\hat{I}_{3,t} = \frac{1}{2\pi i} \int_{\Omega}
 [\hat{P}^{(\infty)}(z)^{-1}R^{(1)}(z)^{\prime}\hat{P}^{(\infty)}(z)]_{11}
 \partial_{t}V_{t}(z)\,\dd z\, n^{\frac{4k+2}{4k+3}}
 +\mathcal O\!\left(n^{\frac{4k+1}{4k+3}}
 (1+|s_n(t)|)^{\frac{4k+4}{2k+1}}\right).
\end{equation}
Using \eqref{definition: R(1)}, \eqref{eq:A1} and the global parametrix $\hat{P}^{(\infty)}(z)$ defined in \eqref{definition: Pinfty} gives
\begin{equation}
    [\hat{P}^{(\infty)}(z)^{-1}R^{(1)}(z)^{\prime}\hat{P}^{(\infty)}(z)]_{11}=\frac{(2A)^\frac{3}{2}}{2\sqrt{C_A}}h(n^\frac{4k+2}{4k+3}f_1(A)t,0)(z^2-A^2)^{-\frac{3}{2}}.
\end{equation}
From \eqref{eq:J-prime}, one obtains
\[
 \frac{1}{2\pi i}\int_{\Omega}\frac{V(z)-z^2}{(z^2-A^2)^{3/2}}\,\dd z
 =\frac{J'(A)}{A}=\frac{2-A^2}{A^2}.
\]
Then \eqref{eq:critical-I3-leading} becomes
\begin{equation}\label{eq:def KA}
    \hat{I}_{3,t}=K_Ah(s_n(t),0)n^\frac{4k+2}{4k+3}
    +\mathcal O\!\left(n^\frac{4k+1}{4k+3}(1+|s_n(t)|)^{\frac{4k+4}{2k+1}}\right),
    \quad K_A=\frac{\sqrt{2A}(2-A^2)}{A\sqrt{C_A}},
\end{equation}
where $C_A$ is defined in \eqref{eq:def CA}. From \eqref{eq:rho1}, we have $K_A=-2f_1(A)$. We then make the change of variables in the integration as follows:
\begin{equation}
    t=n^{-\frac{4k+2}{4k+3}}x,~~s=n^{-\frac{4k+2}{4k+3}}y.
\end{equation}
Then Lemma~\ref{lem:critical-overlap-error} and the change of variables above give
\begin{equation}\label{eq:asy hat I3}
\begin{aligned}
    \int_t^{n^{-a}}\hat{I}_{3,s}\,\dd s
    ={}&\int_x^{n^{\frac{4k+2}{4k+3}-a}}K_Ah(f_1(A)y,0)\,\dd y +\mathcal O\!\left(
    n^{-\frac1{4k+3}}
    \bigl(1+n^{\frac{4k+2}{4k+3}-a}\bigr)^{\frac{6k+5}{2k+1}}
    \right).
\end{aligned}
\end{equation}
The estimate is uniform for $0\le t\le n^{-a}$ for every $ \frac{3(2k+1)}{6k+5}<a<\frac{4k+2}{4k+3}$.  The lower bound in
\eqref{eq:overlap-exponent-range} is exactly equivalent to the vanishing of
the error in \eqref{eq:asy hat I3}; it also places the matching circle in the
uniform sector of Lemma~\ref{lem:uniform-P2k-Psi}.
  Then
\begin{equation}\label{eq:int h}
\begin{aligned}
\int_x^{n^{\frac{4k+2}{4k+3}-a}} K_A h\bigl(f_1(A)y,0\bigr)\,\dd y
={}&\int_x^{n^{\frac{4k+2}{4k+3}-a}}
      K_A\widehat h\bigl(f_1(A)y,0\bigr)\,\dd y \\
&\hspace{-2.5cm} +\frac{K_A}{2f_1(A)}\frac{k}{12(2k+1)}
  \log\!\left(
    \frac{1+f_1(A)^2n^{\frac{4(2k+1)}{4k+3}-2a}}
         {1+f_1(A)^2x^2}
  \right) \\
&\hspace{-2.5cm} +\frac{(2k+1)^2}{(4k+4)(4k+3)}
  K_A\alpha_k^{-\frac{1}{2k+1}}
  |f_1(A)|^\frac{2k+2}{2k+1}
  \left[
    n^{2-\frac{4k+3}{2k+1}a}
    -x^\frac{4k+3}{2k+1}
  \right],
\end{aligned}
\end{equation}
where $f_1(A)$ is defined in \eqref{eq:def c1}. Combining with \eqref{eq:def alpha k}, \eqref{eq:def C2}, \eqref{eq:def KA} and \eqref{eq:b1}, it can be checked that
\begin{equation}
    \frac{(2k+1)^2}{(4k+4)(4k+3)}K_A\alpha_k^{-\frac{1}{2k+1}}|f_1(A)|^\frac{2k+2}{2k+1}=\frac{2k+1}{4k+3}C_2.
\end{equation}
Finally, we obtain
\begin{eqnarray}
   && \log D_{n}(x^2) - \log D_{n}(V_{t}(x)) = \int_t^1(I_{1,s}+I_{3,s})ds \nonumber \\
 && \hspace{2cm} =  \int_t^{n^{-a}}(\hat{I}_{1,s}+\hat{I}_{3,s})ds+\int_{n^{-a}}^1(I_{1,s}+I_{3,s})ds \nonumber \\
 &&\hspace{2cm} =\int_t^{n^{-a}}\widehat I_{3,s}\,ds+\int_t^{n^{-a}}\widehat I_{1,s}\,ds+\int_{n^{-a}}^1(I_{1,s}+I_{3,s})\,ds.
\end{eqnarray}
 Combining \eqref{eq:asy I1,t}, \eqref{eq:integration of I_3,s,1},
\eqref{eq:asy hat I1,t}, \eqref{eq:asy hat I3}, and \eqref{eq:int h}, the
terms proportional to $p_0n^{2-a}$ and $C_1n^{2-2a}$ cancel by
\eqref{eq:iden-rho1}; the power term at the matching point cancels by the
preceding identity involving $C_2$.  The logarithmic contributions combine as
\[
 -\frac{1}{12}\log n+\frac{k}{3(4k+3)}\log n
 =-\frac{1}{4(4k+3)}\log n
\]
after the Gaussian normalization is inserted.  The remaining error terms are
$o(1)$ because $a$ satisfies \eqref{eq:overlap-exponent-range}. This completes the proof of Theorem~\ref{Main theorem}.

\section{Proof of Theorem \ref{theorem: universality}}
    \label{section: universality}

The Christoffel--Darboux representation for the orthogonal-polynomial RH
problem, together with \eqref{TinY} and \eqref{property phi: 2}, gives the
following formula for the two-point kernel $K_n^{(\mathbf t)}$; cf.\
\cite{CK,CKV,ClaeysVan}:
\begin{equation}\label{KinY}
    K_n^{(\mathbf t)}(x,y) = e^{-n\phi_{\mathbf{t},+}(x)} e^{-n\phi_{\mathbf{t},+}(y)} \frac{1}{2\pi i(x-y)}
        \begin{pmatrix}
            0 & 1
        \end{pmatrix}
        T_+^{-1}(y)T_+(x)
        \begin{pmatrix}
            1 \\
            0
        \end{pmatrix}
\end{equation}
for $x,y\in\mathbb{R}$.
Using (\ref{SinT}) and (\ref{R-S-P}), we have
\begin{equation}\label{KinPhat}
    K_n^{(\mathbf t)}(x,y)
        = e^{-n\phi_{\mathbf t,+}(x)} e^{-n\phi_{\mathbf{t},+}(y)} \frac{1}{2\pi i(x-y)}
        \begin{pmatrix}
            0 & 1
        \end{pmatrix}
        \widehat P^{-1}(y)R^{-1}(y)R(x)\widehat P(x)
        \begin{pmatrix}
            1 \\
            0
        \end{pmatrix},
\end{equation}
for $x\in(A-\delta,A+\delta)$, where
\begin{equation}\label{definition: hatP}
    \widehat P(x)=
        \begin{cases}
            P^{(A)}_+(x),&\mbox{on $(A,A+\delta)$,}
            \\[1ex]
            P^{(A)}_+(x)
            \begin{pmatrix}
                1 & 0 \\
                e^{2n\phi_{\mathbf{t},+}(x)} & 1
            \end{pmatrix}, &\mbox{on $(A-\delta,A)$.}
        \end{cases}
\end{equation}
Further, we define
\begin{equation}\label{definition: hatPsi}
    \widehat\Psi(x)=
        \begin{cases}
            \Psi_+(x)&\mbox{on $\mathbb R_+$,}
            \\[1ex]
            \Psi_+(x)
            \begin{pmatrix}
                1 & 0 \\
                1 & 1
            \end{pmatrix},&\mbox{on $\mathbb R_-$,}
        \end{cases}
\end{equation}
where $\Psi$ is the solution of the RH problem for $\Psi$, see
Section \ref{subsection: PI2 equation}. Using (\ref{definition: Psingular}),
(\ref{definition: hatP}) and (\ref{definition: hatPsi}), a
straightforward calculation yields,
\[
    \widehat P(x)=
        E^{(A)}(x) \widehat\Psi\left(n^{\frac{2}{4k+3}}f_A(x);n^{\frac{4k+2}{4k+3}}t_1f_1(x),...,n^{\frac{4}{4k+3}}t_{2k}f_{2k}(x)\right)
        e^{n\phi_{\mathbf{t},+}(x)\sigma_3}
\]
for $x\in(A-\delta,A+\delta)$.
Inserting this into (\ref{KinPhat}) we then obtain
{\small
\begin{multline}\label{KnN: PsiER}
    K_n^{(\mathbf t)}(x,y)=\frac{1}{2\pi i(x-y)}
        \begin{pmatrix}
            0 & 1
        \end{pmatrix}
         \widehat\Psi^{-1}\left(n^{\frac{2}{4k+3}}f_A(y);n^{\frac{4k+2}{4k+3}}t_1f_1(y),...,n^{\frac{4}{4k+3}}t_{2k}f_{2k}(y)\right)
         \\[1ex]
         \times (E^{(A)})^{-1}(y)R^{-1}(y) R(x)E^{(A)}(x)\widehat
        \Psi\left(n^{\frac{2}{4k+3}}f_A(x);n^{\frac{4k+2}{4k+3}}t_1f_1(x),...,n^{\frac{4}{4k+3}}t_{2k}f_{2k}(x)\right)
        \begin{pmatrix}
            1 \\
            0
        \end{pmatrix},
\end{multline}}
for $x\in(A-\delta,A+\delta)$.

Let
\begin{equation}\label{eq:scaled-uv}
    u_n= A+\frac{u}{c n^{\frac{2}{4k+3}}},\quad\mbox{and}\quad v_n=A+\frac{v}{cn^{\frac{2}{4k+3}}}, \quad \mbox{with
    $c=C_A$.}
\end{equation}
We then have,
\begin{equation}\label{uv1}
    \lim_{n\to\infty}n^{\frac{2}{4k+3}}f_A(u_n)=u, \qquad\mbox{and}\qquad
    \lim_{n\to\infty}n^{\frac{2}{4k+3}}f_A(v_n)=v.
\end{equation}
Furthermore, since $f_j(A)=c_j$, $j=1,...,2k$ (see (\eqref{eq:def c1}) and (\eqref{eq:def cj})) we have 
\begin{equation}
   \lim_{\substack{n\to\infty\\ t_j\to0}} n^{\frac{4k-2j+4}{4k+3}}t_jf_j(u_n)=\hat{t}_j, \quad  \lim_{\substack{n\to\infty\\ t_j\to0}} n^{\frac{4k-2j+4}{4k+3}}t_jf_j(v_n)=\hat{t}_j,\quad j=1,...,2k. \label{eq:scaled-parameters}
\end{equation}
Now, a similar argument as in \cite{KV2} shows that
\begin{equation}\label{uv4}  \lim_{\substack{n\to\infty\\ t_j\to0}}
    (E^{(A)})^{-1}(v_n)R(v_n)^{-1}R(u_n) (E^{(A)})(u_n)=I.
\end{equation}
Inserting \eqref{eq:scaled-parameters} and \eqref{uv4} into (\ref{KnN: PsiER}), we obtain
\begin{align}
    \nonumber
    &  \lim_{\substack{n\to\infty\\ t_j\to0}} \frac{1}{C_A n^{\frac{2}{4k+3}}}K_n^{(\mathbf t)}(u_n,v_n) =\frac{1}{-2\pi
i(u-v)}\Big(\Psi_1^{(2k)}(u;\hat{t}_1,\ldots,\hat{t}_{2k})\Psi_2^{(2k)}(v;\hat{t}_1,\ldots,\hat{t}_{2k}) \\
    \nonumber
    & \hspace{6cm} -\Psi_1^{(2k)}(v;\hat{t}_1,\ldots,\hat{t}_{2k})\Psi_2^{(2k)}(u;\hat{t}_1,\ldots,\hat{t}_{2k})\Big).
\end{align}
This completes the proof of Theorem
\ref{theorem: universality}.

\appendix


\section{Cauchy operators and endpoint expansions}
Throughout this appendix, $V$ is an even real-analytic external field.  Fix a positively oriented simple closed contour $\Gamma$ in a common complex neighborhood to which $V$ and the analytic continuation of $\psi$ extend. We also take $\Gamma$ so that $[-b,b]\subset\operatorname{Int}(\Gamma)$ for all
$b$ sufficiently close to $A$. For $b>0$, set
\begin{equation}\label{eq:Rb-definition}
  R_b(z)=\sqrt{z^2-b^2},\qquad R_b(z)\sim z,\qquad z\to\infty,
\end{equation}
with branch cut $[-b,b]$. Thus
\begin{equation}\label{eq:Rb-boundary-values}
  R_{b,+}(x)=i\sqrt{b^2-x^2},\qquad
  R_{b,-}(x)=-i\sqrt{b^2-x^2},
  \qquad x\in(-b,b).
\end{equation}
Define (see \cite{Muskhelishvili1992})
\begin{equation}\label{eq:hb-definition}
  h_b(z)=h_V^{(b)}(z):=\frac{1}{2\pi i}\oint_\Gamma
  \frac{V'(\xi)}{R_b(\xi)}\frac{d\xi}{\xi-z},
  \qquad z\in\operatorname{Int}(\Gamma),
  \qquad H_b(z):=h_b(z)R_b(z).
\end{equation}
 The function $h_b$ is analytic in
$\operatorname{Int}(\Gamma)$ and satisfies
\begin{equation}\label{eq:rho-b-definition}
  \rho_b(x)=\frac{1}{2\pi}h_b(x)\sqrt{b^2-x^2},
  \qquad x\in(-b,b),
\end{equation}
and
\begin{equation}\label{eq:hb-PV}
  h_V^{(b)}(x)=\frac1\pi\dashint_{-b}^{b}
  \frac{V'(s)}{(s-x)\sqrt{b^2-s^2}}\,ds,
  \qquad x\in(-b,b).
\end{equation}

At the critical point,
\begin{equation}\label{eq:critical-density}
  \rho(x)=\psi(x)(A^2-x^2)^{\frac{4k+1}{2}},
  \qquad \psi>0\quad\hbox{on }[-A,A].
\end{equation}
Since $\psi$ is even and analytic, in a neighborhood of $A$ it has the
exact decomposition
\begin{equation}\label{eq:psi-decomp}
  \psi(z)=\psi(A)+\frac{\psi'(A)}{2A}(z^2-A^2)
  +(z^2-A^2)^2\chi(z),
\end{equation}
where $\chi$ is analytic.  Finally, let $Q(x):=V(x)-x^2$
and define
\begin{equation}\label{eq:P-definition}
  P(b)=\frac{1}{2\pi}\int_{-b}^{b}Q(x)h_b(x)\sqrt{b^2-x^2}\,\dd x,
\end{equation}
\begin{equation}\label{eq:S-U-definition}
  S(b)=\frac{1}{\pi}\int_{-b}^{b}Q(x)\sqrt{b^2-x^2}\,\dd x,
  \qquad U(b)=S(b)-P(b).
\end{equation}

\begin{lemma}\label{lem:hb}
Let $\delta=b-A$.  As $\delta\to0$,
\begin{equation}\label{eq:H-expansion}
  h_b(b)=H_0\delta^{2k}+H_1\delta^{2k+1}+\mathcal O(\delta^{2k+2}),
  \qquad
  \partial_z h_b(b)=\frac{4k}{3}H_0\delta^{2k-1}
  +\mathcal O(\delta^{2k}),
\end{equation}
where
\begin{equation}\label{eq:H0}
  H_0=2\pi\psi(A)\frac{(4k+1)!!}{(2k)!}A^{2k},
\end{equation}
and
\begin{equation}\label{eq:H1}
  H_1=H_0\left[\frac{k}{A}
  +\frac{4k+3}{2(2k+1)}\frac{\psi'(A)}{\psi(A)}\right].
\end{equation}
\end{lemma}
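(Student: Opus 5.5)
The plan is to trade the contour representation \eqref{eq:hb-definition} for an integral that is explicitly localized near the endpoints $\pm A$, and then expand in $\delta=b-A$ by rescaling.

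\emph{Step 1: rewrite $h_b$ through the critical density.} At $b=A$, \eqref{eq:rho-b-definition} and \eqref{eq:critical-density} give $h_A(z)=2\pi\psi(z)(z^2-A^2)^{2k}$ up to the sign convention for $(z^2-A^2)^{2k}$. The Euler--Lagrange equality for $\mu_V$ then yields, for $\xi$ on $\Gamma$,
\[
V'(\xi)=2G(\xi)+h_A(\xi)R_A(\xi),\qquad G(\xi)=\int_{-A}^{A}\frac{\rho(x)\,dx}{\xi-x},
\]
since $V'-2G$ and $h_AR_A$ have the same jump on $[-A,A]$ and the same decay, with the sign fixed by the normalization of $R_A$. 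Substitute this into \eqref{eq:hb-definition} and look at the $G$-term. The function $G(\xi)/(R_b(\xi)(\xi-z))$ is analytic outside $\Gamma$, because $[-A,A]\subset[-b,b]$ and $z\in\operatorname{Int}(\Gamma)$, and it is $\mathcal O(\xi^{-3})$ at infinity. Deforming $\Gamma$ to infinity therefore shows that this term vanishes. Hence
\[
h_b(z)=\frac{1}{2\pi i}\oint_\Gamma 2\pi\psi(\xi)(\xi^2-A^2)^{2k}\frac{R_A(\xi)}{R_b(\xi)}\frac{d\xi}{\xi-z}.
\]

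\emph{Step 2: collapse the contour.} The ratio $R_A/R_b$ has no jump on $(-A,A)$, since both factors change sign there. Its only cuts are $[-b,-A]\cup[A,b]$, where its boundary values are $\pm\sqrt{x^2-A^2}/(i\sqrt{b^2-x^2})$. For $z$ near $b$ but off the cut, I collapse $\Gamma$ onto these two intervals. This expresses $h_b(z)$ as a sum of two real integrals whose densities are, up to constants,
\[
\psi(x)(x^2-A^2)^{2k+\frac12}(b^2-x^2)^{-\frac12}(x-z)^{-1}.
\]
By evenness of $\psi$, the two contributions can be paired.

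\emph{Step 3: scale and expand.} On $[A,b]$ I substitute $x=A+\delta u$ with $u\in[0,1]$, and I also write $z=A+\delta w$. The leading term then becomes
\[
\delta^{2k}\,\psi(A)(2A)^{2k}\int_0^1\frac{u^{2k+\frac12}\,du}{(1-u)^{\frac12}(u-w)},
\]
to be evaluated at $w\to1$. Because of the endpoint singularity, $z=b$ cannot simply be plugged in. Instead I take the limit $w\to1$ of the analytic function. Equivalently, I rewrite
\[
\frac{u^{2k+\frac12}}{u-w}=\frac{u^{2k+\frac12}-w^{2k+\frac12}}{u-w}+\frac{w^{2k+\frac12}}{u-w}
\]
and use that $\int_0^1(1-u)^{-1/2}(u-w)^{-1}du$ has an explicit form which, combined with the $R_b$ branch, has a finite limit. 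The resulting Beta-type integrals should produce $\frac{(4k+1)!!}{(2k)!}$ and hence $H_0$ as in \eqref{eq:H0}.

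The correction $H_1$ comes from three sources: the expansion of $\psi$ via \eqref{eq:psi-decomp}, which gives the $\psi'(A)/\psi(A)$ term; the next-order terms in $(x^2-A^2)^{2k+\frac12}$ and $b^2-x^2=2A\delta(1-u)(1+\mathcal O(\delta))$, which give the $k/A$ term; and the interval $[-b,-A]$. On that interval $x-z\approx-2A$, so its contribution is already $\mathcal O(\delta^{2k+1})$, and I will check whether it contributes to $H_1$ or cancels by symmetry.

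For $\partial_zh_b(b)$ I differentiate the same representation in $z$ before taking $w\to1$. This brings in the extra factor $(u-w)^{-2}/\delta$, so the leading order is $\delta^{2k-1}$. The ratio of the two Beta-type constants is expected to give $\frac{4k}{3}$. As a consistency check, the answer should agree with the Airy-type relation \eqref{eq:asymptotics of f near b}.

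\emph{Main obstacle.} The delicate point is the endpoint regularization in Step 3. The integrand behaves like $(b-x)^{-3/2}$ at $x=b$, so the constants come from analytic continuation or finite-part evaluation, and getting the branch signs and factors of $2$ right is essential. I would first verify the whole computation in the model case $\psi\equiv$ const, where everything reduces to explicit hypergeometric integrals. After that I would add the $\psi'(A)$ and $\chi$ perturbations, which affect only higher orders in $\delta$.
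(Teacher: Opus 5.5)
Your Step 1 is exactly the paper's first step. The paper also writes $V'-h_AR_A=2g_A'$ and discards its Cauchy projection, because after division by $R_b$ it is $\mathcal O(\xi^{-2})$ at infinity.

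After that the two routes split, and the paper never collapses $\Gamma$. It splits $\psi$ by \eqref{eq:psi-decomp} and evaluates the model projections $\Phi_m(z;b)$ exactly. With $\epsilon=b^2-A^2$ one has $(\xi^2-A^2)^{m+\frac12}/R_b(\xi)=R_b(\xi)^{2m}(1+\epsilon R_b(\xi)^{-2})^{m+\frac12}$, and only the polynomial part of this expansion at infinity survives the projection. Hence $\Phi_m(z;b)=\sum_{j\le m}\binom{m+\frac12}{j}\epsilon^j(z^2-b^2)^{m-j}$, and setting $z=b$ is trivial. No endpoint singularity ever appears. $H_0$, the $\psi'$-part of $H_1$, and $\frac{4k}{3}=\binom{2k+\frac12}{2k-1}/\binom{2k+\frac12}{2k}$ are all binomial coefficients, and the $k/A$ term is just $\epsilon^{2k}=(2A\delta)^{2k}(1+\frac{\delta}{2A})^{2k}$.

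Your collapse-and-rescale route does work: the analytically continued Beta values $B(2k+\frac32,-\frac12)$ and $B(2k+\frac32,-\frac32)$ reproduce $H_0$ and the ratio $\frac{4k}{3}$. However, it replaces an exact algebraic identity by a regularized asymptotic computation, and your sketch needs three corrections.

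First, collapsing $\Gamma$ onto the cuts crosses the pole at $\xi=z$. So $h_b(z)$ is not just the two real integrals; it also contains $2\pi\psi(z)(z^2-A^2)^{2k}R_A(z)/R_b(z)$. This term is $(z-b)^{-1/2}$ times a function analytic at $b$, and it exactly cancels the $(z-b)^{-1/2}$ singularity of the cut integrals. That cancellation is what justifies reading off $h_b(b)$ and $\partial_zh_b(b)$ as finite parts, and you should say so explicitly.

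Second, the $[-b,-A]$ piece does not cancel by symmetry. The jumps on the two cuts have opposite signs, so by evenness the cut integrals combine into $\frac1\pi\int_A^b w(x)\frac{2x}{x^2-z^2}\,dx$, with $w(x)=2\pi\psi(x)(x^2-A^2)^{2k+\frac12}(b^2-x^2)^{-\frac12}$. The $\frac{1}{x+z}$ half contributes $\frac{H_0}{4(2k+1)A}\delta^{2k+1}+\mathcal O(\delta^{2k+2})$ at $z=b$, which is a genuine part of the $k/A$ coefficient. It is cleaner to substitute $x^2=A^2+\epsilon u$ instead of $x=A+\delta u$. The cut integral is then exactly $\epsilon^{2k}$ times a $b$-independent kernel, apart from the factor $\psi(x)$, which recovers the paper's structure.

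Third, $\delta$ can be negative in the application: Lemma~\ref{lem:bt-expansion} gives $b_1<0$ when $A^2>2$, and then the cuts are $[-A,-b]\cup[b,A]$. Either treat that case separately, or note from your Step 1 formula that $b\mapsto h_b(b)$ and $b\mapsto\partial_zh_b(b)$ are analytic near $A$. Then the coefficients you find for $\delta>0$ hold on both sides.
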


\begin{proof}
Let
\[
  g_A'(z):=\int_{-A}^{A}\frac{\rho(u)}{z-u}\,\dd u,
  \qquad z\in\mathbb C\setminus[-A,A].
\]
The differentiated Euler–Lagrange relation gives, initially in
$\mathcal V\setminus[-A,A]$,
\[
  F_A(z):=V'(z)-h_A(z)R_A(z)=2g_A'(z)=\frac{2}{z}+\mathcal O(z^{-2}),
  \qquad z\to\infty,
\]
where
\begin{equation}\label{eq:hA-critical}
  h_A(z)=2\pi\psi(z)(A^2-z^2)^{2k}.
\end{equation}
Thus $F_A$ has an analytic continuation to $\mathbb C\setminus[-A,A]$;
in particular, $\frac{F_A}{R_b}=\mathcal O(z^{-2})$ at infinity. 
Consequently,
\begin{equation}\label{eq:hb-reduced}
  h_b(z)=\frac{1}{2\pi i}\oint_\Gamma
  \frac{h_A(\xi)R_A(\xi)}{R_b(\xi)}\frac{d\xi}{\xi-z},
  \qquad z\in\operatorname{Int}(\Gamma),
\end{equation}
and differentiation with respect to the spectral variable gives
\begin{equation}\label{eq:hbprime-reduced}
  \partial_z h_b(z)=\frac{1}{2\pi i}\oint_\Gamma
  \frac{h_A(\xi)R_A(\xi)}{R_b(\xi)}\frac{d\xi}{(\xi-z)^2},
  \qquad z\in\operatorname{Int}(\Gamma).
\end{equation}

For $m\in\mathbb N_0$, set
\begin{equation}\label{eq:Phi-definition}
  \Phi_m(z;b):=\frac{1}{2\pi i}\oint_\Gamma
  \frac{(\xi^2-A^2)^{m+\frac12}}{R_b(\xi)}\frac{d\xi}{\xi-z},
  \qquad z\in\operatorname{Int}(\Gamma).
\end{equation}
As $z\to\infty$,
\begin{align}
  \frac{(z^2-A^2)^{m+\frac12}}{R_b(z)}
  &=R_b(z)^{2m}\left(1+\frac{b^2-A^2}{R_b(z)^2}\right)^{m+\frac12}\notag\\
  &=\sum_{j=0}^{m}\binom{m+\frac12}{j}(b^2-A^2)^j
  R_b(z)^{2m-2j}+E_m(z;b),
  \label{eq:model-expansion}
\end{align}
where $E_m(\,\cdot\,;b)$ is analytic in the exterior of $\Gamma$ and is
$\mathcal O(z^{-2})$ at infinity. Its Cauchy projection into
$\operatorname{Int}(\Gamma)$ is therefore zero, and hence
\begin{equation}\label{eq:Phi-polynomial}
  \Phi_m(z;b)=\sum_{j=0}^{m}\binom{m+\frac12}{j}(b^2-A^2)^j
  (z^2-b^2)^{m-j}.
\end{equation}
For $m\ge1$, evaluating \eqref{eq:Phi-polynomial} and its derivative at
$z=b$ yields
\begin{equation}\label{eq:model-value}
  \Phi_m(b;b)=\binom{m+\frac12}{m}(b^2-A^2)^m,
  \qquad
  \partial_z\Phi_m(b;b)=2b\binom{m+\frac12}{m-1}(b^2-A^2)^{m-1}.
\end{equation}
This, together with \eqref{eq:psi-decomp} and \eqref{eq:model-value}, we obtain
\begin{align}
  h_b(b)
  &=2\pi\psi(A)\binom{2k+\frac12}{2k}(b^2-A^2)^{2k}\notag\\
  &\quad+\frac{\pi\psi'(A)}{A}\binom{2k+\frac32}{2k+1}
  (b^2-A^2)^{2k+1}
  +\mathcal O\bigl((b^2-A^2)^{2k+2}\bigr),\label{eq:hb-value-expansion}\\
  \partial_z h_b(b)
  &=4\pi b\psi(A)\binom{2k+\frac12}{2k-1}(b^2-A^2)^{2k-1}
  +\mathcal O\bigl((b^2-A^2)^{2k}\bigr).
  \label{eq:hb-derivative-expansion}
\end{align}
Using $b^2-A^2=\delta(2A+\delta)$, we obtain 
\eqref{eq:H-expansion}–\eqref{eq:H1}.
\end{proof}

\begin{lemma}\label{lem:m3m4}
Recall
\begin{equation}\label{eq:M-def-appendix}
  M(b)=\int_{-b}^{b}\frac{xV'(x)}{\sqrt{b^2-x^2}}\,\dd x.
\end{equation}
As $\delta\to0$,
\begin{equation}\label{eq:M-expansion}
  M(A+\delta)=2\pi+\mu_0\delta^{2k+1}
  +\mu_1\delta^{2k+2}+\mathcal O(\delta^{2k+3}),
\end{equation}
where
\begin{equation}\label{eq:mu0}
  \mu_0=\frac{\pi A H_0}{2k+1}
  =\frac{2\pi^2\psi(A)A^{2k+1}}{2k+1}
  \frac{(4k+1)!!}{(2k)!},
\end{equation}
and
\begin{equation}\label{eq:mu1}
  \mu_1=\frac{\pi(AH_1+H_0)}{2k+2}
  =\frac{\pi^2A^{2k}}{k+1}\frac{(4k+1)!!}{(2k)!}
  \left[(k+1)\psi(A)+\frac{A(4k+3)}{2(2k+1)}\psi'(A)\right].
\end{equation}
In particular, $M^{(j)}(A)=0$ for $1\le j\le2k$.
\end{lemma}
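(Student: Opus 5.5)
The plan is to turn the expansion of $M$ into the expansion of $h_b(b)$ already established in Lemma~\ref{lem:hb}. The key identity is
\[
  M'(b)=\pi b\,h_b(b).
\]
This is the $t=0$ case of the endpoint identity \eqref{eq:M-prime-identity} used in Section~\ref{sec:proof 1.3}. Once it is available, the expansion follows by integrating in $\delta=b-A$ from $0$, together with the normalization $M(A)=2\pi$.

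\emph{Step 1: contour representation of $M$.} Collapse $\Gamma$ onto $[-b,b]$, using the boundary values \eqref{eq:Rb-boundary-values}:
\[
  \frac{1}{2\pi i}\oint_\Gamma \frac{\xi V'(\xi)}{R_b(\xi)}\,d\xi
  =\frac{c}{\pi}\int_{-b}^{b}\frac{xV'(x)}{\sqrt{b^2-x^2}}\,dx=\frac{c}{\pi}M(b),
\]
with $c=\pm1$. The sign $c$ is fixed by orientation and will be checked against the final constant. Since $\Gamma$ is independent of $b$ for $b$ near $A$, I can differentiate under the integral. With $\partial_b R_b^{-1}=b\,R_b^{-3}$ this gives
\[
  M'(b)=c\,\pi b\,\frac{1}{2\pi i}\oint_\Gamma\frac{\xi V'(\xi)}{R_b(\xi)^3}\,d\xi .
\]

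\emph{Step 2: relating this to $h_b(b)$ via evenness.} Take \eqref{eq:hb-definition} at $z=b$, where $b\in\operatorname{Int}(\Gamma)$. Under $\xi\mapsto-\xi$, the function $V'$ is odd and $R_b$ is odd. Hence $\oint V'/(R_b(\xi-b))\,d\xi=\oint V'/(R_b(\xi+b))\,d\xi$. Averaging the two expressions and using $\tfrac12\bigl(\tfrac1{\xi-b}+\tfrac1{\xi+b}\bigr)=\xi/R_b(\xi)^2$ yields
\[
  h_b(b)=\frac{1}{2\pi i}\oint_\Gamma\frac{\xi V'(\xi)}{R_b(\xi)^3}\,d\xi .
\]
Therefore $M'(b)=c\,\pi b\,h_b(b)$. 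The sign is $c=+1$: $h_b(b)$ is positive for one-cut regular $b$, and \eqref{eq:M-prime-identity} has this form at $t=0$. Alternatively, the substitution $x=b\sin\theta$ checks the sign directly.

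\emph{Step 3: normalization and integration.} At $b=A$, the equilibrium measure has unit mass. Equivalently, the $z^{-1}$ coefficient of $V'(z)-h_A(z)R_A(z)=2g_A'(z)$ is $2$, as in the proof of Lemma~\ref{lem:hb}. Comparing the $z^{-1}$ terms in the contour representation of Step 1 then gives $M(A)=2\pi$. Inserting \eqref{eq:H-expansion} gives
\[
  M'(A+\delta)=\pi(A+\delta)\bigl(H_0\delta^{2k}+H_1\delta^{2k+1}+\mathcal O(\delta^{2k+2})\bigr)
  =\pi AH_0\delta^{2k}+\pi(AH_1+H_0)\delta^{2k+1}+\mathcal O(\delta^{2k+2}).
\]
Integrating from $0$ to $\delta$ gives \eqref{eq:M-expansion} with $\mu_0=\pi AH_0/(2k+1)$ and $\mu_1=\pi(AH_1+H_0)/(2k+2)$. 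The explicit forms of $\mu_0$ and $\mu_1$ follow by substituting \eqref{eq:H0} and \eqref{eq:H1}. In particular, $AH_1+H_0=H_0\bigl[(k+1)+\tfrac{A(4k+3)}{2(2k+1)}\psi'(A)/\psi(A)\bigr]$, which reproduces \eqref{eq:mu1}. Finally, $M'(A+\delta)=\mathcal O(\delta^{2k})$ and $M$ is analytic in $b$ near $A$, so $M^{(j)}(A)=0$ for $1\le j\le 2k$.

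The main obstacle is Step 2 together with the sign and orientation bookkeeping in Step 1. The evenness trick is what converts the $R_b^{-3}$ kernel into the Cauchy kernel defining $h_b(b)$. I would verify the final constant $\pi b$ independently on the Gaussian case $V=2x^2$, where both sides are explicit.
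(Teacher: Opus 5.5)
Your proof is correct and follows the same route as the paper: you establish the endpoint identity $M'(b)=\pi b\,h_b(b)$, insert the expansion of $h_b(b)$ from Lemma~\ref{lem:hb}, and integrate from $A$ using $M(A)=2\pi$, with your contour-plus-evenness derivation of $M'(b)=\pi b\,h_b(b)$ supplying the details the paper only sketches. The sign you left open can be settled directly, without appealing to \eqref{eq:M-prime-identity} or to positivity of $h_b(b)$: collapsing the positively oriented $\Gamma$ onto $[-b,b]$ gives $\oint_\Gamma f\,d\xi=\int_{-b}^{b}(f_--f_+)\,dx$, which with \eqref{eq:Rb-boundary-values} yields $\frac{1}{2\pi i}\oint_\Gamma \xi V'(\xi)R_b(\xi)^{-1}\,d\xi=M(b)/\pi$, i.e.\ $c=+1$.
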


\begin{proof}
Differentiation of the Cauchy representation of $h_b$ (equivalently, the
substitution $x=b\cos\theta$ followed by an integration by parts) gives the
standard endpoint identity
\begin{equation}\label{eq:M-prime-identity}
  M'(b)=\pi b h_b(b).
\end{equation}
Combining \eqref{eq:M-prime-identity} with \eqref{eq:H-expansion} gives
\begin{align*}
  M'(A+\delta)
  &=\pi(A+\delta)\left(H_0\delta^{2k}+H_1\delta^{2k+1}
  +\mathcal O(\delta^{2k+2})\right)\\
  &=\pi A H_0\delta^{2k}+\pi(AH_1+H_0)\delta^{2k+1}
  +\mathcal O(\delta^{2k+2}).
\end{align*}
Integration from $A$ to $A+\delta$, together with $M(A)=2\pi$, proves the
claim.
\end{proof}

\begin{lemma}\label{lem:bt-expansion}
Assume $A^2\neq2$; as $t\to0$, we have
\begin{equation}\label{eq:bt-expansion}
  b_t=A+b_1t^{\frac{1}{2k+1}}+b_2t^{\frac{2}{2k+1}}
  +\mathcal O\!\left(t^{\frac{3}{2k+1}}\right),
\end{equation}
where
\begin{equation}\label{eq:b1}
  b_1^{2k+1}=\frac{\pi(2-A^2)}{\mu_0} \quad \mbox{and}\quad  b_2=-\frac{b_1^2}{2k+1}
  \left(\frac{\mu_1}{\mu_0}+\frac{2A}{2-A^2}\right).
\end{equation}
If $A^2=2$, we have $b_t=A=\sqrt{2}$.
\end{lemma}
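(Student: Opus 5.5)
The plan is to treat the endpoint equation \eqref{eq:endpoint-equation}, $(1-t)M(b_t)+\pi t b_t^2=2\pi$, as an implicit equation for $\delta=b_t-A$ in terms of $t$. I will substitute the expansion of Lemma~\ref{lem:m3m4} and then invert the resulting Puiseux-type relation. Writing $M(b_t)=2\pi+\mu_0\delta^{2k+1}+\mu_1\delta^{2k+2}+\mathcal O(\delta^{2k+3})$, the equation becomes
\[
(1-t)\bigl(\mu_0\delta^{2k+1}+\mu_1\delta^{2k+2}+\mathcal O(\delta^{2k+3})\bigr)
= 2\pi t-\pi t(A+\delta)^2 = \pi t(2-A^2)-2\pi A t\delta-\pi t\delta^2 .
\]
Before expanding, I need to know that $\delta\to0$ as $t\to0$. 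This should follow from continuity of the equilibrium measure in $t$: $b_t\to A$ because $\mu_{V_t}\to\mu_V$ weakly and the support of $\mu_V$ is exactly $[-A,A]$ with no type I points. The equation itself then forces $\delta^{2k+1}\asymp t$.

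For the leading order, the dominant balance is $\mu_0\delta^{2k+1}=\pi(2-A^2)t+o(t)$. Since $\mu_0>0$ by \eqref{eq:mu0}, this gives $\delta\sim b_1t^{\frac1{2k+1}}$ with $b_1^{2k+1}=\pi(2-A^2)/\mu_0$. Here $b_1$ is the real $(2k+1)$-th root, which exists because the exponent is odd, and it is nonzero precisely because $A^2\ne2$.

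To justify the expansion rigorously, I set $\tau=t^{\frac1{2k+1}}$ and $\delta=\tau w$. After dividing by $\tau^{2k+1}$, the equation becomes $G(w,\tau)=0$ with
\[
G(w,\tau)=(1-\tau^{2k+1})\bigl(\mu_0w^{2k+1}+\mu_1\tau w^{2k+2}+\mathcal O(\tau^2)\bigr)-\pi(2-A^2)+2\pi A\tau w+\mathcal O(\tau^2).
\]
Here $G$ is real-analytic, since $M$ is analytic in $b$. At the base point, $G(b_1,0)=0$ and $\partial_wG(b_1,0)=(2k+1)\mu_0b_1^{2k}\neq0$. The implicit function theorem then gives an analytic solution $w(\tau)=b_1+b_2\tau+\mathcal O(\tau^2)$. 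Uniqueness of $b_t$, which holds because $b_t$ is determined by the one-cut regular equilibrium problem, identifies this branch with the actual endpoint.

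For the next coefficient, I collect the order-$\tau$ terms: $(2k+1)\mu_0b_1^{2k}b_2+\mu_1b_1^{2k+2}+2\pi Ab_1=0$. Substituting $2\pi=2\mu_0b_1^{2k+1}/(2-A^2)$ yields $b_2=-\frac{b_1^2}{2k+1}\bigl(\frac{\mu_1}{\mu_0}+\frac{2A}{2-A^2}\bigr)$, which is the formula in \eqref{eq:b1}.

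In the degenerate case $A^2=2$, the value $b=A$ solves the endpoint equation for every $t$, since $M(A)=2\pi$ and $\pi tA^2=2\pi t$. The uniqueness of the regular endpoint then forces $b_t=A$.

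I expect the main obstacle to be the a priori step: showing that $b_t\to A$ and that the implicit-function branch is the true endpoint, rather than an extraneous root of the endpoint equation. I would handle this by appealing to continuity of the equilibrium measure together with the assumed one-cut regularity for $t\in(0,1]$.
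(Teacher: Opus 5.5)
Your argument follows the paper's proof. You substitute the expansion of Lemma~\ref{lem:m3m4} into $(1-t)M(b_t)+\pi t b_t^2=2\pi$, take the dominant balance $\mu_0\delta^{2k+1}\approx\pi(2-A^2)t$, and match the coefficients of $t$ and $t^{\frac{2k+2}{2k+1}}$. Your rescaling $\delta=\tau w$ with the implicit function theorem, together with the a priori step $b_t\to A$, makes rigorous what the paper leaves implicit when it asserts $\delta_t=\mathcal O(t^{\frac1{2k+1}})$.

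Two points of justification need tightening.

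First, in the case $A^2\neq2$, the branch is identified with the true endpoint by \emph{local} uniqueness in the implicit function theorem, not by global uniqueness of $b_t$. The facts $b_t\to A$ and the dominant balance give $\delta_t/\tau\to b_1$. Hence $(\delta_t/\tau,\tau)$ eventually lies in the neighborhood where $G=0$ has only the branch $w(\tau)$.

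Second, in the case $A^2=2$, knowing that $b=A$ solves the endpoint equation does not by itself give $b_t=A$, since that equation may have other roots. You also need that $A$ is the only root near $A$. This follows from $(1-t)M(A+\delta)+\pi t(A+\delta)^2-2\pi=\delta\bigl[(1-t)\mu_0\delta^{2k}(1+\mathcal O(\delta))+\pi t(2A+\delta)\bigr]$. The bracket is positive for small $\delta$ and $t>0$, so combining this with $b_t\to A$ gives $b_t=A$. This is what the paper's one-line ``$\delta_t=0$'' amounts to.

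Alternatively, you can get $b_t=A$ for all $t\in[0,1]$ at once. When $A^2=2$, both $\mu_V$ and the equilibrium measure $\frac1\pi\sqrt{2-x^2}\,dx$ of $x^2$ are supported on $[-\sqrt2,\sqrt2]$. Their convex combination with weights $1-t$ and $t$ therefore satisfies the Euler--Lagrange conditions for $V_t$, and so it equals $\mu_{V_t}$.
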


\begin{proof}
Put $\delta_t=b_t-A$.  Substitution of \eqref{eq:M-expansion} into the
endpoint equation
\[
  (1-t)M(b_t)+\pi t b_t^2=2\pi
\]
gives
\begin{equation}\label{eq:delta-equation}
  \mu_0\delta_t^{2k+1}+\mu_1\delta_t^{2k+2}
  +\pi(A^2-2)t+2\pi At\delta_t
  +\mathcal O(\delta_t^{2k+3})+\mathcal O(t\delta_t^2)=0.
\end{equation}
If $A^2=2$, from \eqref{eq:delta-equation}, we have $\delta_t=0$. If $A^2\neq2$, the first and third terms in
\eqref{eq:delta-equation} have nonzero coefficients and determine the dominant
balance.  Hence $\delta_t=\mathcal O(t^{\frac{1}{2k+1}})$.  Substitution of \eqref{eq:bt-expansion} into \eqref{eq:delta-equation} and comparison
of the coefficients of $t$ and $t^{\frac{2k+2}{2k+1}}$ gives respectively
\[
  \mu_0b_1^{2k+1}=\pi(2-A^2),
\]
and
\[
  (2k+1)\mu_0b_1^{2k}b_2+\mu_1b_1^{2k+2}+2\pi Ab_1=0.
\]
These two equations are equivalent to \eqref{eq:b1}.  Since
the exponent $2k+1$ is odd, \eqref{eq:b1} determines a unique real $b_1$;
this is precisely the branch selected by the one-cut endpoint $b_t$.
\end{proof}

\begin{lemma}\label{lem:P-expansion}
As $t\to0$, 
\begin{equation}\label{eq:P-expansion}
  P(b_t)=P(A)+P_1(b_t-A)^{2k+1}+P_2(b_t-A)^{2k+2}
  +\mathcal O\bigl((b_t-A)^{2k+3}\bigr),
\end{equation}
where
\begin{equation}\label{eq:p3-formula}
  P_1=\frac{\mu_0}{2\pi}J_A,\quad P_2=\frac{(2k+1)\mu_0(2-A^2)}{2\pi A(2k+2)}
  +\frac{\mu_1}{2\pi}J_A,
\end{equation}
with $ J_A:=\frac1\pi\int_{-A}^{A}\frac{Q(x)}{\sqrt{A^2-x^2}}\,\dd x.$ Particularly, if $A^2=2$, we have $P(b_t)=P(A)$.
\end{lemma}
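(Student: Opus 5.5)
The plan is to obtain \eqref{eq:P-expansion} by differentiating $P(b)$ in $b$. This reduces the question to the already known expansion of $M'(b)=\pi b h_b(b)$ from Lemma~\ref{lem:m3m4}, combined with a Taylor expansion of a smooth integral of $Q$. The case $A^2=2$ is immediate: Lemma~\ref{lem:bt-expansion} gives $b_t=A$, so $P(b_t)=P(A)$. I will assume $A^2\neq2$ from now on and write $\delta=b-A$.

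\textbf{Step 1 (variation of the density).} Set $H_b(z)=h_b(z)R_b(z)$ as in \eqref{eq:hb-definition}. Then $H_b-V'$ extends analytically to $\mathbb C\setminus[-b,b]$ and is $\mathcal O(1/z)$ at infinity. Consequently $\partial_bH_b(z)$ is analytic off $[-b,b]$ and is $\mathcal O(1/z)$ at infinity. Near each endpoint $\pm b$, $\partial_b H_b(z)$ has at most an inverse-square-root singularity, which comes from $\partial_b R_b=-b/R_b$. Since $V$ is even, I claim that
\[
  \partial_b H_b(z)=-\frac{b\,h_b(b)}{R_b(z)}.
\]
To prove this, I will check that $R_b\,\partial_bH_b$ is entire and bounded, hence constant. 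The constant is then identified by comparing the leading singularity at $z=b$: since $\partial_b R_b=-b/R_b$, the singular part of $\partial_b H_b$ at $z=b$ is $-b\,h_b(b)/R_b$. Taking boundary values via \eqref{eq:Rb-boundary-values} gives
\[
  \partial_b\rho_b(x)=\frac{b\,h_b(b)}{2\pi\sqrt{b^2-x^2}}.
\]
The boundary term in $\frac{d}{db}\int_{-b}^{b}$ vanishes because $\rho_b(\pm b)=0$. Hence
\[
  P'(b)=\frac{b\,h_b(b)}{2}\,J_b=\frac{M'(b)}{2\pi}J_b,\qquad J_b:=\frac1\pi\int_{-b}^{b}\frac{Q(x)}{\sqrt{b^2-x^2}}\,\dd x,
\]
where the second equality uses \eqref{eq:M-prime-identity}. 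As a check, the same formula follows directly by differentiating \eqref{eq:hb-PV} after the substitution $x=b\cos\theta$.

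\textbf{Step 2 (expansion of $J_b$).} Writing $J_b=\frac1\pi\int_0^\pi Q(b\cos\theta)\,\dd\theta$ shows that $J_b$ is smooth in $b$. Its derivative is
\[
  J'(b)=\frac{1}{\pi b}\int_{-b}^{b}\frac{xQ'(x)}{\sqrt{b^2-x^2}}\,\dd x=\frac{M(b)-\pi b^2}{\pi b}.
\]
Using $M(A)=2\pi$ this gives $J'(A)=\frac{2-A^2}{A}$, which is consistent with \eqref{eq:J-prime}. Therefore $J_b=J_A+\frac{2-A^2}{A}\delta+\mathcal O(\delta^2)$.

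\textbf{Step 3 (integration).} Lemma~\ref{lem:m3m4} gives $M'(A+\delta)=(2k+1)\mu_0\delta^{2k}+(2k+2)\mu_1\delta^{2k+1}+\mathcal O(\delta^{2k+2})$. Multiplying by the expansion of $J_b$ and integrating from $A$ to $A+\delta$ yields
\[
  P(b)-P(A)=\frac{\mu_0J_A}{2\pi}\delta^{2k+1}+\Bigl[\frac{\mu_1J_A}{2\pi}+\frac{(2k+1)\mu_0(2-A^2)}{2\pi A(2k+2)}\Bigr]\delta^{2k+2}+\mathcal O(\delta^{2k+3}).
\]
These are exactly the coefficients $P_1$ and $P_2$ in \eqref{eq:p3-formula}. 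Setting $\delta=b_t-A=\mathcal O(t^{\frac1{2k+1}})$ by Lemma~\ref{lem:bt-expansion} completes the argument.

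The main obstacle is Step 1, specifically justifying the formula for $\partial_bH_b$. This requires that differentiation in $b$ of the Cauchy integral \eqref{eq:hb-definition} commutes with the contour integral, and that the resulting function has only the stated endpoint singularities. It also requires getting the sign and branch conventions right. I would first try the direct route of differentiating \eqref{eq:hb-PV} in the $\theta$-variable, mirroring the derivation of \eqref{eq:M-prime-identity}, and then confirm the sign against the known relation $M'(b)=\pi b h_b(b)$.
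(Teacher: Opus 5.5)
Your proposal is correct and follows essentially the same route as the paper: the Liouville argument giving $\partial_bH_b=-b\,h_b(b)/R_b$ (which uses evenness), then the identity $P'(b)=\frac{b\,h_b(b)}{2}J(b)$, then the expansion $J(b)=J_A+\frac{2-A^2}{A}(b-A)+\mathcal O((b-A)^2)$, and finally term-by-term integration. The differences are cosmetic: you differentiate the density on $[-b,b]$ (the boundary terms vanish) rather than the contour representation $P(b)=-\frac{1}{4\pi i}\oint_\Gamma QH_b\,\dd z$, and you use the expansion of $M'(b)=\pi b h_b(b)$ in place of that of $h_b(b)$ directly.
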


\begin{proof}
Because $\Gamma$ is positively oriented and $H_{b,+}-H_{b,-}
=2i h_b\sqrt{b^2-x^2}$ on $(-b,b)$, contour deformation gives
\begin{equation}\label{eq:P-contour}
  P(b)=-\frac{1}{4\pi i}\oint_\Gamma Q(z)H_b(z)\,\dd z.
\end{equation}

To justify the $b$-derivative used below, observe that $\partial_bH_b$
is analytic on $\mathbb C\setminus[-b,b]$, is $\mathcal O(z^{-1})$ at
infinity, and has at most inverse-square-root endpoint singularities.  Since
$\partial_bR_b(z)=-\frac{b}{R_b(z)}$, its singular part at $z=b$ is
$-b \frac{h_b(b)}{R_b(z)}$.  The scalar one-cut relation and Liouville's theorem
therefore give
\begin{equation}\label{eq:dHb}
  \partial_bH_b(z)=-\frac{b h_b(b)}{R_b(z)},
  \qquad z\in\mathbb C\setminus[-b,b].
\end{equation}
Combining \eqref{eq:P-contour}, \eqref{eq:dHb}, and the boundary values in
\eqref{eq:Rb-boundary-values}, we obtain
\begin{equation}\label{eq:Pprime-contour}
  P'(b)=\frac{b h_b(b)}{2}\,J(b),
  \qquad
  J(b):=\frac1\pi\int_{-b}^{b}\frac{Q(x)}{\sqrt{b^2-x^2}}\,\dd x.
\end{equation}
After $x=b\cos\theta$,
\begin{equation}\label{eq:J-prime}
  J'(b)=\frac1{\pi b}\int_{-b}^{b}
  \frac{x\bigl(V'(x)-2x\bigr)}{\sqrt{b^2-x^2}}\,\dd x.
\end{equation}
Since $M(A)=2\pi$, \eqref{eq:J-prime} gives
\begin{equation}\label{eq:J-expansion}
  J(b)=J_A+\frac{2-A^2}{A}(b-A)+\mathcal O\bigl((b-A)^2\bigr).
\end{equation}
Substitution of \eqref{eq:H-expansion} and \eqref{eq:J-expansion} into
\eqref{eq:Pprime-contour} yields
\begin{align*}
  P'(b)
  &=\frac{A H_0}{2}J_A\delta^{2k}
  +\frac12\left[H_0(2-A^2)+(AH_1+H_0)J_A\right]\delta^{2k+1}
  +\mathcal O(\delta^{2k+2}).
\end{align*}
Integrating with respect to $\delta$ and using \eqref{eq:mu0}--\eqref{eq:mu1}
proves \eqref{eq:P-expansion}--\eqref{eq:p3-formula}.
\end{proof}

\begin{lemma}\label{lem:S-expansion}
As $t\to0$,
\begin{equation}\label{eq:S-expansion}
  S(b_t)=S(A)+S_1(b_t-A)+S_2(b_t-A)^2
  +\mathcal O\bigl((b_t-A)^3\bigr),
\end{equation}
where
\begin{equation}\label{eq:S1}
  S_1=A J_A \quad\mbox{and}\quad S_2=\frac12\bigl[J_A+(2-A^2)\bigr].
\end{equation}
Particularly, if $A^2=2$, we have $S(b_t)=S(A)$.
\end{lemma}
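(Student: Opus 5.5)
We need the expansion of $S(b)=\frac1\pi\int_{-b}^b Q(x)\sqrt{b^2-x^2}\,dx$ about $b=A$. The plan is to differentiate $S$ in $b$ and express the derivatives through the quantities $J(b)$ and $J'(b)$, which are already available from the proof of Lemma \ref{lem:P-expansion}.

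\textbf{First derivative.} The integrand vanishes at $x=\pm b$, so differentiating under the integral sign is legitimate and gives
\[
S'(b)=\frac1\pi\int_{-b}^{b}\frac{bQ(x)}{\sqrt{b^2-x^2}}\,dx=b\,J(b),
\]
with $J$ as in \eqref{eq:Pprime-contour}. Equivalently, the substitution $x=b\cos\theta$ writes $S(b)=\frac{b^2}{\pi}\int_0^\pi Q(b\cos\theta)\sin^2\theta\,d\theta$, and differentiating this form gives the same identity. Consequently $S'(A)=AJ_A$, which is $S_1$.

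\textbf{Second derivative.} Next,
\[
S''(b)=J(b)+bJ'(b).
\]
By \eqref{eq:J-prime},
\[
bJ'(b)=\frac1\pi\int_{-b}^b\frac{x(V'(x)-2x)}{\sqrt{b^2-x^2}}\,dx=\frac{M(b)}{\pi}-\frac{2}{\pi}\int_{-b}^b\frac{x^2}{\sqrt{b^2-x^2}}\,dx=\frac{M(b)}{\pi}-b^2 .
\]
At $b=A$ we have $M(A)=2\pi$ (Lemma \ref{lem:m3m4}), so $S''(A)=J_A+2-A^2$. Hence $S_2=\tfrac12 S''(A)=\tfrac12\bigl[J_A+(2-A^2)\bigr]$, as claimed. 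Since $Q$ is analytic, $S$ is smooth near $A$, and Taylor's theorem with $b=b_t\to A$ (Lemma \ref{lem:bt-expansion}) gives the $\mathcal O((b_t-A)^3)$ remainder.

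\textbf{The case $A^2=2$.} Lemma \ref{lem:bt-expansion} gives $b_t\equiv A$ in this case, so $S(b_t)=S(A)$ trivially.

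The only point needing care is the evaluation $\frac1\pi\int_{-b}^b x^2(b^2-x^2)^{-1/2}dx=b^2/2$, combined with $M(A)=2\pi$; both are routine.
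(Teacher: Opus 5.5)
Your proof is correct and follows the paper's argument. The paper also differentiates to obtain $S'(b)=bJ(b)$, and then reads off $S''(A)=J_A+AJ'(A)=J_A+(2-A^2)$ from the expansion of $J$ near $A$; that expansion rests on exactly your computation $bJ'(b)=M(b)/\pi-b^2$ combined with $M(A)=2\pi$. Your explicit justification of the Taylor remainder, and your treatment of the case $A^2=2$ via $b_t\equiv A$, simply fill in details the paper leaves implicit.
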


\begin{proof}
Differentiating \eqref{eq:S-U-definition} gives
\[
  S'(b)=bJ(b).
\]
Therefore $S'(A)=AJ_A$ and, by \eqref{eq:J-expansion}, we have $ S''(A)=J_A+(2-A^2),$ which completes the proof of the lemma.
\end{proof}

\begin{lemma}\label{lem:rho1-integral}
Let $\rho_1$ be defined in \eqref{definition: psij} and be the fixed-support measure corresponding to
\(V_1=x^2-V\).
Then $\rho_1$ is characterized by
\begin{equation}\label{eq:rho1-eq}
  2\dashint_{-A}^{A}\frac{\rho_1(u)}{x-u}\,\dd u=2x-V'(x),
  \qquad x\in(-A,A),
\end{equation}
and
\begin{equation}\label{eq:rho1}
  \rho_1(x)=\frac1\pi\sqrt{A^2-x^2}-\rho(x)
  +\frac{2-A^2}{2\pi\sqrt{A^2-x^2}},
  \qquad x\in(-A,A).
\end{equation}
Consequently
\begin{equation}\label{eq:iden-rho1}
  \int_{-A}^{A}\rho_1(x)Q(x)\,\dd x=C_1,
\end{equation}
where $C_1$ is defined in \eqref{eq:def-C1}.
\end{lemma}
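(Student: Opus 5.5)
The lemma has three parts: the characterization \eqref{eq:rho1-eq}, the explicit formula \eqref{eq:rho1}, and the identity \eqref{eq:iden-rho1}. I will prove them in that order.

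\emph{Characterization.} Specialize the earlier lemma on $\nu_j$ to $j=1$ with $V_1=x^2-V$. Relations \eqref{SP2} and \eqref{SP1} give $2\dashint_{-A}^A\frac{\rho_1(u)}{x-u}\dd u=V_1'(x)=2x-V'(x)$ on $(-A,A)$. The same lemma gives \eqref{zero mass nuj}, so $\rho_1$ has zero total mass. Moreover, by \eqref{definition: psij}, $\rho_1$ has at most inverse square root singularities at $\pm A$. A solution of this singular integral equation with these properties is unique. Indeed, the difference $D$ of two solutions has Cauchy transform $G(z)=\int_{-A}^A\frac{D(u)}{u-z}\,\dd u$ with $G_++G_-=0$ on $(-A,A)$, so $G(z)R(z)$ is entire. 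The zero-mass condition makes $G=\bigO(z^{-2})$, hence $GR=\bigO(z^{-1})$ at infinity, and Liouville's theorem gives $GR\equiv0$.

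\emph{Formula.} By uniqueness, it suffices to check that the right-hand side of \eqref{eq:rho1} has zero mass and satisfies \eqref{eq:rho1-eq}. I will treat the three terms separately.
\begin{itemize}
\item \textbf{Semicircle term.} By the classical Hilbert transform, $\frac1\pi\sqrt{A^2-x^2}$ contributes $2\dashint\frac{\frac1\pi\sqrt{A^2-u^2}}{x-u}\dd u=2x$.
\item \textbf{Term $-\rho$.} Differentiating the Euler--Lagrange equality \eqref{variationalcondition:nu0-equality} gives $2\dashint\frac{\rho(u)}{x-u}\dd u=V'(x)$, so $-\rho$ contributes $-V'(x)$.
\item \textbf{Arcsine term.} Since $\dashint_{-A}^A\frac{\dd u}{(x-u)\sqrt{A^2-u^2}}=0$ for $x\in(-A,A)$, the term $\frac{2-A^2}{2\pi\sqrt{A^2-x^2}}$ contributes nothing.
\item \textbf{Mass.} The masses are $\frac{A^2}{2}$, $-1$ and $\frac{2-A^2}{2}$, which sum to zero.
\end{itemize}
The main point needing care is that uniqueness requires exactly this endpoint class. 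The $\rho_1$ of \eqref{definition: psij} lies in it because $h_1$ is analytic, so $\rho_1=\bigO((A^2-x^2)^{-1/2})$. The candidate also lies in it, since $\rho$ vanishes at the endpoints.

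\emph{Identity.} Multiply \eqref{eq:rho1} by $Q$ and integrate over $(-A,A)$. The three resulting integrals are term by term those in the definition \eqref{eq:def-C1} of $C_1$, which gives \eqref{eq:iden-rho1}.
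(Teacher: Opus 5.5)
Your proposal is correct and follows the paper's proof essentially step by step. You check, via the semicircle and arcsine Hilbert-transform identities together with $2\dashint\frac{\rho(u)}{x-u}\,\dd u=V'(x)$, that the candidate satisfies \eqref{eq:rho1-eq} and has zero mass; you then invoke uniqueness in the class of densities with at most inverse-square-root endpoint singularities, and integrate against $Q$ to get $C_1$. The only difference is that you prove uniqueness directly, by a Liouville argument on $G(z)R(z)$ in which the zero-mass condition kills the constant, whereas the paper cites the fact that the kernel of the finite Hilbert transform is spanned by $(A^2-x^2)^{-1/2}$; your argument is a self-contained proof of that same fact.
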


\begin{proof}
We use
\begin{equation}\label{eq:semicircle-hilbert}
  \dashint_{-A}^{A}\frac{1}{\pi}
  \frac{\sqrt{A^2-u^2}}{x-u}\,\dd u=x,
  \qquad x\in(-A,A),
\end{equation}
and
\begin{equation}\label{eq:arcsine-hilbert}
  \dashint_{-A}^{A}\frac{1}{\pi\sqrt{A^2-u^2}}
  \frac{\dd u}{x-u}=0,
  \qquad x\in(-A,A).
\end{equation}
Since $2\dashint_{-A}^{A}\rho(u)(x-u)^{-1}\,\dd u=V'(x)$, the right-hand
side of \eqref{eq:rho1} satisfies \eqref{eq:rho1-eq}  and its mass is zero.

For uniqueness, the difference of two admissible solutions lies in the kernel
of the finite Hilbert transform \cite{Gakhov,Muskhelishvili1992}.  In the class of integrable functions with at
most inverse-square-root endpoint singularities, this kernel is spanned by
$(A^2-x^2)^{-\frac{1}{2}}$.  The zero-mass condition forces its coefficient to vanish.
Thus \eqref{eq:rho1} holds.  Multiplication by $Q$ and the definition of $C_1$
in \eqref{eq:def-C1} give \eqref{eq:iden-rho1}.
\end{proof}

\section{Uniform asymptotics of the $P_{\rm I}^{2k}$ parametrix}
\begin{lemma}[Uniform large-$s$ expansion of the $P_{\rm I}^{2k}$ model]
\label{lem:uniform-P2k-Psi}
For $\boldsymbol\tau=(\tau_1,\ldots,\tau_{2k-1})$ in a compact subset of
$\mathbb R^{2k-1}$, there exist constants $L,C>0$ such that
\begin{equation}\label{eq:uniform-third-term}
  \Psi(\zeta;s,\boldsymbol\tau)
  =\zeta^{-\frac{\sigma_3}{4}}N\left[I-h(s,\boldsymbol\tau)\sigma_3\zeta^{-\frac{1}{2}}
  +\mathcal O\!\left((1+|s|)^{\frac{4k+4}{2k+1}}\zeta^{-1}\right)\right]
  e^{-\theta(\zeta;s,\boldsymbol\tau)\sigma_3},
\end{equation}
uniformly for all real $s$, for $\zeta\notin\Gamma$, and whenever
$|\zeta|\ge L(1+|s|)^{\frac{4k+4}{2k+1}}$.  The implicit constant is bounded by
$C$ uniformly for $\boldsymbol\tau$ in that compact set.
\end{lemma}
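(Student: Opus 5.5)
The bound will come from a rescaling argument combined with a small-norm analysis. For $|s|$ bounded, the statement is just the large-$\zeta$ expansion \eqref{RHP Psi: c}, taken uniformly over compact sets of $(s,\boldsymbol\tau)$: the $\zeta^{-1}$ coefficient $\frac12\begin{pmatrix}h^2&iq\\-iq&h^2\end{pmatrix}$ is continuous in the parameters, and the solution depends analytically on them. So only $|s|\to\infty$ needs work. In that regime I rescale $\zeta=|s|^{\frac{2}{2k+1}}\lambda$, so that
\[
\theta(\zeta;s,\boldsymbol\tau)=|s|^{\frac{4k+3}{2k+1}}\Big(\tfrac{4}{4k+3}\lambda^{\frac{4k+3}{2}}\pm\lambda^{\frac12}\Big)+\mathcal O\big(|s|^{\frac{4k+1}{2k+1}}\big)
\]
on compact $\lambda$-sets. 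The $\tau_j$ terms are lower order in the large parameter $\mu=|s|^{\frac{4k+3}{2k+1}}$. I then carry out the standard Deift--Zhou analysis of the rescaled problem, which is the same analysis used in \cite{Claeys} to obtain the asymptotics \eqref{eq:asyq} and \eqref{eq:asy of Ham}. The steps are a $g$-function normalization, lens opening, an outer parametrix on a cut $[\lambda_0,0]$ or its analogue, Airy parametrices at the soft endpoint(s), and a small-norm problem whose jumps are $I+\mathcal O(\mu^{-1})$.

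The main step is the following. I write $\Psi=\zeta^{-\sigma_3/4}N\,(I+\Xi(\zeta))\,e^{-\theta\sigma_3}$ and must show that, outside the disk $|\zeta|\ge L(1+|s|)^{\frac{4k+4}{2k+1}}$, the quantity $\Xi+h\sigma_3\zeta^{-1/2}$ is bounded by $C(1+|s|)^{\frac{4k+4}{2k+1}}|\zeta|^{-1}$. The threshold $|s|^{\frac{4k+4}{2k+1}}$ is much larger than the natural scale $|s|^{\frac{2}{2k+1}}$. On the region $|\lambda|\gg1$ the steepest-descent solution is just $R\,P^{(\infty)}$ with the $g$-function. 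There, $e^{g-\theta}$ together with $P^{(\infty)}$ expands in powers of $\zeta^{-1/2}$. The $\zeta^{-1/2}$ coefficient must equal $-h\sigma_3$, by the defining relation \eqref{RHP Psi: c}. The next coefficient is built from $h^2=\mathcal O(|s|^{\frac{4k+4}{2k+1}})$ by \eqref{eq:asy of Ham}, from $q=\mathcal O(|s|^{\frac1{2k+1}})$, and from the $R$-correction, which is $\mathcal O(\mu^{-1})$ times powers of the scale. Taking the worst of these, the $\zeta^{-1}$ remainder is dominated by $h^2\zeta^{-1}$, which gives the stated power $(1+|s|)^{\frac{4k+4}{2k+1}}$.

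To make this a genuine remainder bound, and not merely a statement about coefficients, I use the tail of the convergent Laurent-type series in $\zeta^{-1/2}$ of $e^{\pm(g-\theta)}$ outside a large disk. The series in $\zeta^{-1/2}$ has coefficients that grow like $h^{m}\sim|s|^{\frac{(2k+2)m}{2k+1}}$. Once $|\zeta|\ge L|s|^{\frac{4k+4}{2k+1}}$, the ratio $|h||\zeta|^{-1/2}$ is at most $L^{-1/2}$, so the tail is geometrically summable. Choosing $L$ large then gives the uniform bound. This is exactly why the threshold is $|s|^{\frac{4k+4}{2k+1}}$.

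I expect the main obstacle to be obtaining uniformity in the sign of $s$ and in $\boldsymbol\tau$. The $g$-function, its cut, and its endpoint(s) differ for $s\to+\infty$ and $s\to-\infty$. I would treat the two cases separately, citing \cite{Claeys} for existence of the $g$-function and for its continuity in $\boldsymbol\tau$ on compacts. The intermediate range of $s$ is handled by compactness and continuity. Uniformity in $\zeta\notin\Gamma$ near the rays is obtained by using the analytic continuation of the jump-free combination $\Psi e^{\theta\sigma_3}$ in each sector, since the lens jumps there are exponentially small in $|\zeta|^{\frac{4k+3}{2}}$.
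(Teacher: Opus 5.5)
Your route is the paper's own. You rescale, run the steepest descent of \cite{Claeys} (with $g$-function, lens, outer parametrix, Airy parametrix and a small-norm $R$), and write $\Psi e^{G\sigma_3}$ as $R\,P^{(\infty)}$ up to constant lower-triangular normalization factors (the paper's $T$ and $H$). You then expand each factor for large $\zeta$, using $G-\theta=\gamma_1\zeta^{-1/2}+\mathcal O\big(|s|^{\frac{2k+3}{2k+1}}\zeta^{-3/2}\big)$ with $\gamma_1=\mathcal O\big(|s|^{\frac{2k+2}{2k+1}}\big)$. You correctly see that the threshold $|\zeta|\ge L(1+|s|)^{\frac{4k+4}{2k+1}}$ is exactly what makes $\gamma_1\zeta^{-1/2}$ small, so the exponential expands with an $\mathcal O(\gamma_1^2\zeta^{-1})=\mathcal O(h^2\zeta^{-1})$ remainder.

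However, your rescaling is wrong, and as written it breaks the steepest-descent step. With $\zeta=|s|^{\frac{2}{2k+1}}\lambda$ you get $\zeta^{\frac{4k+3}{2}}=|s|^{\frac{4k+3}{2k+1}}\lambda^{\frac{4k+3}{2}}$, but $s\zeta^{1/2}=\pm|s|^{\frac{2k+2}{2k+1}}\lambda^{1/2}$. After dividing by your $\mu=|s|^{\frac{4k+3}{2k+1}}$, the $s$-term therefore has relative size $|s|^{-1}$, not order one. At leading order the rescaled phase is then the $s$-free critical phase $\frac{4}{4k+3}\lambda^{\frac{4k+3}{2}}$, which vanishes at $\lambda=0$ to the non-generic order $\frac{4k+1}{2}$. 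The genuine soft edge of the $g$-function sits at $\lambda\sim|s|^{-\frac{1}{2k+1}}\to0$, so no fixed-size Airy disk is available and the small-norm argument does not close.

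The balancing scale is $r=|s|^{\frac{1}{2k+1}}$, which is the paper's $\eta=\zeta/r$. With it, $\theta=|s|^{\frac{4k+3}{4k+2}}\big(\frac{4}{4k+3}\lambda^{\frac{4k+3}{2}}\pm\lambda^{\frac12}\big)+\mathcal O\big(|s|^{\frac{4k-1}{4k+2}}\big)$, and the large parameter is $\mu=|s|^{\frac{4k+3}{4k+2}}$. After this correction the rest of your sketch goes through:
\begin{itemize}
\item $\gamma_1\sim\mu r^{1/2}=|s|^{\frac{2k+2}{2k+1}}$, consistent with $h$.
\item Conjugation by $\zeta^{\sigma_3/4}$ amplifies the off-diagonal entries of $R(\zeta/r)-I$ by $\zeta^{\pm1/2}$. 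These feed the $\zeta^{-1/2}$ coefficient, which must be exactly $-h\sigma_3$, rather than the remainder.
\item All $R$- and $P^{(\infty)}$-contributions to the $\zeta^{-1}$ term are of size at most $r$ or $r\mu^{-1}$, hence dominated by $h^2\zeta^{-1}$.
\end{itemize}
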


\begin{proof}
    We follow the steepest descent analysis in \cite{Claeys}. Put
    \begin{equation}\label{eq:scaling}
        \eta=\frac{\zeta}{r},
       \qquad r=|s|^\frac{1}{2k+1}.
\end{equation}
    Following the transformations in \cite{Claeys}
    \[
        \Psi \longmapsto Y \longmapsto S \longmapsto R,
\]
where the global parametrix is
\begin{equation}\label{eq:Pinf}
        P^{(\infty)}(\eta)
        =|s|^{-\frac{1}{4(2k+1)}\sigma_3}
        (\eta-z_0)^{-\frac14\sigma_3}N,
\end{equation}
and a local parametrix $P(\zeta)$ is used inside \(U\).  The final error matrix is
\begin{equation}\label{eq:Rdef}
R(\eta)=
\begin{cases}
S(\eta)P(\eta)^{-1},& \eta\in U,\\[2pt]
S(\eta)P^{(\infty)}(\eta)^{-1},& \eta\in \mathbb{C} \setminus U.
\end{cases}
\end{equation}
The jumps of \(R\) satisfy
\begin{equation}\label{eq:JR}
        J_R(\eta)=I+O(|s|^{-1})\quad\hbox{on }\partial U,
        \qquad
        J_R(\eta)=I+O(e^{-c|s|})\quad\hbox{elsewhere},
\end{equation}
with \(c>0\), uniformly for fixed $\tau_j$, $j=1,...,2k-1$.  
We now have, for $|\zeta|>Lr$
\begin{equation}\label{eq:identity for Psi}
    TH\Psi(\zeta)e^{G(\zeta)\sigma_3}N^{-1}\zeta^{\frac{1}{4}\sigma_3}=R\left(\frac{\zeta}{r}\right)P^{(\infty)}\left(\frac{\zeta}{r}\right)N^{-1}\zeta^{\frac{1}{4}\sigma_3},
\end{equation}
where
\begin{equation}
    H=\begin{pmatrix}
        1 & 0\\
        -h & 1
    \end{pmatrix},\qquad T=\begin{pmatrix}
        1 & 0\\
        d_1|s|^\frac{1}{4k+2} & 1
    \end{pmatrix}.
\end{equation}
And $P^{(\infty)}$ and $R$ satisfy
\begin{equation}\label{eq:asy Pinfty}
   P^{(\infty)}\left(\frac{\zeta}{r}\right)N^{-1}\zeta^{\frac{1}{4}\sigma_3}=I+\frac{rz_0}{4\zeta}\sigma_3+\bigO\left(\frac{(rz_0)^2}{\zeta^2}\right),\quad z_0=\bigO(1), \quad \zeta\to\infty.
\end{equation}
\begin{equation}\label{eq:asy R}
    R\left(\frac{\zeta}{r}\right)=I+\bigO\left(|s|^{-\frac{2k}{2k+1}}\zeta^{-1}\right),\qquad |\zeta|>Lr
\end{equation}
At the same time, we have for any fixed $\tau_j$, $j=1,...,2k-1$,
\begin{equation}\label{eq:asy h and q}
    h(s,\tau_1,...,\tau_{2k-1})=\bigO(|s|^\frac{2k+2}{2k+1}),\qquad q(s,\tau_1,...,\tau_{2k-1})=\bigO(r),\quad |s|\to\infty,
\end{equation}
and as $|s|\to\infty$,
\begin{equation}\label{eq:asy G}
    G(\zeta;s,\tau_1,...,\tau
    _{2k-1})-\theta(\zeta;s,\tau_1,...,\tau
    _{2k-1})=\gamma_1\zeta^{-\frac{1}{2}}+\bigO\left(|s|^\frac{2k+3}{2k+1}\zeta^{-\frac{3}{2}}\right),\qquad \gamma_1=\bigO(|s|^\frac{2k+2}{2k+1}),
\end{equation}
uniformly for all $|\zeta|>L|s|^\frac{4k+4}{2k+1}$.

Putting \eqref{eq:asy Pinfty}, \eqref{eq:asy R}, \eqref{eq:asy h and q} and \eqref{eq:asy G} into \eqref{eq:identity for Psi}, we obtain \eqref{eq:uniform-third-term}. This completes the proof of the lemma.
\end{proof}

\section{Airy model RH problem}
\begin{itemize}
\item[(a)] $\Phi_{\mathrm{Ai}} : \mathbb{C} \setminus \Sigma_{A} \rightarrow \mathbb{C}^{2 \times 2}$ is analytic, and $\Sigma_{A}$ is shown in Figure \ref{figAiry}.
\item[(b)] $\Phi_{\mathrm{Ai}}$ has the jump relations
\begin{equation}\label{jumps P3}
\Phi_{\mathrm{Ai},+}(z) = \Phi_{\mathrm{Ai},-}(z) \times \left\{ \begin{array}{l l}\begin{pmatrix}
0 & 1 \\ -1 & 0
\end{pmatrix}, & \mbox{ on } \mathbb{R}^{-}, \\
\begin{pmatrix}
 1 & 1 \\
 0 & 1
\end{pmatrix}, & \mbox{ on } \mathbb{R}^{+}, \\
\begin{pmatrix}
 1 & 0  \\ 1 & 1
\end{pmatrix}, & \mbox{ on } e^{ \frac{2\pi i}{3} }  \mathbb{R}^{+} , \\
\begin{pmatrix}
 1 & 0  \\ 1 & 1
\end{pmatrix}, & \mbox{ on }e^{ -\frac{2\pi i}{3} }\mathbb{R}^{+} . \\
\end{array} \right.
\end{equation}
\item[(c)] As $z\to \infty$, $z \notin \Sigma_{A}$, we have
\begin{equation}\label{Asymptotics Airy}
\Phi_{\mathrm{Ai}}(z) = z^{-\frac{\sigma_{3}}{4}}M \left( I + \sum_{k=1}^{\infty} \frac{\Phi_{\mathrm{Ai,k}}}{z^{3k/2}} \right) e^{-\frac{2}{3}z^{\frac{3}{2}}\sigma_{3}},
\end{equation}
where $M = \frac{1}{\sqrt{2}}\begin{pmatrix}
1 & i \\ i & 1
\end{pmatrix}$ and $\Phi_{\mathrm{Ai,1}} = \frac{1}{8}\begin{pmatrix}
\frac{1}{6} & i \\ i & -\frac{1}{6}
\end{pmatrix}$.

As $z \to 0$, we have
\begin{equation}
\Phi_{\mathrm{Ai}}(z) = \bigO(1).
\end{equation} 
\end{itemize}
The Airy model RH problem was introduced and solved in \cite{DKMVZ1} (see also \cite[equation (7.30)]{DKMVZ1}, where explicit forms for the constant matrices $\Phi_{\mathrm{Ai,k}}$ can be found). We have
\begin{figure}[t]
    \begin{center}
    \setlength{\unitlength}{1truemm}
    \begin{picture}(100,55)(-5,10)
        \put(50,40){\line(1,0){30}}
        \put(50,40){\line(-1,0){30}}
        \put(50,39.8){\thicklines\circle*{1.2}}
        \put(50,40){\line(-0.5,0.866){15}}
        \put(50,40){\line(-0.5,-0.866){15}}
        \qbezier(53,40)(52,43)(48.5,42.598)
        \put(53,43){$\frac{2\pi}{3}$}
        \put(50.3,36.8){$0$}
        \put(65,39.9){\thicklines\vector(1,0){.0001}}
        \put(35,39.9){\thicklines\vector(1,0){.0001}}
        \put(41,55.588){\thicklines\vector(0.5,-0.866){.0001}}
        \put(41,24.412){\thicklines\vector(0.5,0.866){.0001}}
    \end{picture}
    \caption{\label{figAiry}The jump contour $\Sigma_{A}$ for $\Phi_{\mathrm{Ai}}$.}
\end{center}
\end{figure}
\begin{equation}
\Phi_{\mathrm{Ai}}(z) := M_{A} \times \left\{ \begin{array}{l l}
\begin{pmatrix}
\mbox{Ai}(z) & \mbox{Ai}(\omega^{2}z) \\
\mbox{Ai}^{\prime}(z) & \omega^{2}\mbox{Ai}^{\prime}(\omega^{2}z)
\end{pmatrix}e^{-\frac{\pi i}{6}\sigma_{3}}, & \mbox{for } 0 < \arg z < \frac{2\pi}{3}, \\
\begin{pmatrix}
\mbox{Ai}(z) & \mbox{Ai}(\omega^{2}z) \\
\mbox{Ai}^{\prime}(z) & \omega^{2}\mbox{Ai}^{\prime}(\omega^{2}z)
\end{pmatrix}e^{-\frac{\pi i}{6}\sigma_{3}}\begin{pmatrix}
1 & 0 \\ -1 & 1
\end{pmatrix}, & \mbox{for } \frac{2\pi}{3} < \arg z < \pi, \\
\begin{pmatrix}
\mbox{Ai}(z) & - \omega^{2}\mbox{Ai}(\omega z) \\
\mbox{Ai}^{\prime}(z) & -\mbox{Ai}^{\prime}(\omega z)
\end{pmatrix}e^{-\frac{\pi i}{6}\sigma_{3}}\begin{pmatrix}
1 & 0 \\ 1 & 1
\end{pmatrix}, & \mbox{for } -\pi < \arg z < -\frac{2\pi}{3}, \\
\begin{pmatrix}
\mbox{Ai}(z) & - \omega^{2}\mbox{Ai}(\omega z) \\
\mbox{Ai}^{\prime}(z) & -\mbox{Ai}^{\prime}(\omega z)
\end{pmatrix}e^{-\frac{\pi i}{6}\sigma_{3}}, & \mbox{for } -\frac{2\pi}{3} < \arg z < 0, \\
\end{array} \right.
\end{equation}
with $\omega = e^{\frac{2\pi i}{3}}$, Ai the Airy function and
\begin{equation}
M_{A} := \sqrt{2 \pi} e^{\frac{\pi i}{6}} \begin{pmatrix}
1 & 0 \\ 0 & -i
\end{pmatrix}.
\end{equation}

\section*{Acknowledgements}
Dan Dai was partially supported by grants from the Research Grants Council of the Hong Kong
Special Administrative Region, China [Project No. CityU 11311622, CityU 11306723 and CityU
11301924].


\begin{thebibliography}{99}



\bibitem{ACC2026}
Y. Ameur, C. Charlier, and J. Cronvall,
{Free energy and fluctuations in the random normal matrix
model with spectral gaps},
\textit{Constr. Approx.} \textbf{63} (2026), 279--335.





\bibitem{BWW} N. Berestycki, C. Webb, and M.D. Wong, Random Hermitian Matrices and Gaussian Multiplicative Chaos, \textit{Probab. Theory Related Fields} \textbf{172} (2018), no. 1-2, 103--189.



\bibitem{BI} P. Bleher and A. Its, Asymptotics of the partition function of a random matrix model, \textit{Ann. Inst. Fourier} \textbf{55} (2005), 1943--2000.






\bibitem{BG1} G. Borot and A. Guionnet, Asymptotic expansion of $\beta$-matrix models in the one-cut regime, \textit{Comm. Math. Phys.} \textbf{317} (2013), no. 2, 447--483.



\bibitem{BB} M.J. Bowick and E. Bre\'zin, Universal scaling of the tail of the density of eigenvalues in random matrix models, \textit{Phys. Lett. B} \textbf{268} (1991), 21--28.

\bibitem{BMP} E. Bre\'zin, E. Marinari and G. Parisi, A nonperturbative ambiguity free solution of a string model, \textit{Phys. Lett. B} \textbf{242} (1990), 35--38.

\bibitem{ByunKangSeoYang2025}
S.-S. Byun, N.-G. Kang, S.-M. Seo, and M. Yang,
{Free energy of spherical Coulomb gases with point charges},
\textit{J. Lond. Math. Soc.}  \textbf{112} (2025), no.~3, Paper No.~e70294.

\bibitem{ByunSeoYang2025}
S.-S. Byun, S.-M. Seo, and M. Yang,
{Free energy expansions of a conditional GinUE and large
deviations of the smallest eigenvalue of the LUE},
\textit{Comm. Pure Appl. Math.} \textbf{78} (2025), 2247--2304.

\bibitem{ByunYangYoo2026}
S.-S. Byun, M. Yang, and E. Yoo,
{Free energy expansion of determinantal Coulomb gases in the
quadratic fields with a point charge},
arXiv:2605.29594 (2026).







\bibitem{Charlier} C. Charlier, {Asymptotics of Hankel determinants with a one-cut regular potential and Fisher--Hartwig singularities}, \textit{Int. Math. Res. Not.} \textbf{2019} (2019), 7515--7576.

\bibitem{CFWW} C. Charlier, B. Fahs, C. Webb and M.D. Wong, {Asymptotics of Hankel determinants with a multi-cut regular potential and Fisher-Hartwig singularities}, \textit{Mem. Amer. Math. Soc.} \textbf{310} (2025), 1567.

\bibitem{CG}  C. Charlier and R. Gharakhloo, {Asymptotics of Hankel determinants with a Laguerre-type and Jacobi-type potential and Fisher--Hartwig singularities}, \textit{Adv. Math.} \textbf{383} (2021), no. 107672, 69 pp.






\bibitem{Claeys} T. Claeys, {Pole-free solutions of the first Painlev\'e hierarchy and non-generic critical behavior for the KdV equation}, \textit{Physica D} \textbf{241} (2012), 2226--2236.


\bibitem{CFLW} T. Claeys, B. Fahs, G. Lambert, and C. Webb, How much can the eigenvalues of a random Hermitian matrix fluctuate? \textit{Duke Math. J.} {\bf 170} (2021), no. 9, 2085--2235.

\bibitem{CG09} T. Claeys and T. Grava, Universality of the break-up profile for the KdV equation in the small dispersion limit using the Riemann–Hilbert approach, \textit{Comm. Math. Phys.} \textbf{286} (2009), 979--1009.

\bibitem{CG12} T. Claeys and T. Grava, {The KdV hierarchy: universality and a Painlev\'e transcendent}, \textit{Int. Math. Res. Not.} \textbf{2012} (2012), 5063--5099.

\bibitem{CGML} T. Claeys, T. Grava, and K.T.-R. McLaughlin, Asymptotics for the partition function in two-cut random matrix models, \textit{Comm. Math. Phys.} \textbf{339} (2015), no. 2, 513--587.

\bibitem{CK} T. Claeys and A.B.J. Kuijlaars, Universality of the double scaling limit in random matrix models, \textit{Comm. Pure Appl. Math.} \textbf{59} (2006), 1573--1603.

\bibitem{CKV} T. Claeys, A.B.J. Kuijlaars and M. Vanlessen, Multi-critical unitary random matrix ensembles and the general Painlev\'e II equation, \textit{Ann. of Math.} \textbf{167} (2008), 601--641.






\bibitem{ClaeysItsK} {T. Claeys, A. Its and I. Krasovsky, Higher-order analogues of the Tracy-Widom distribution and the Painlev\'e II hierarchy, \textit{Comm. Pure Appl. Math.} \textbf{63} (2010), 362--412.}

\bibitem{ClaeysVan} T. Claeys and M. Vanlessen, {Universality of a double scaling limit near singular edge points in random matrix models}, \textit{Comm. Math. Phys.} \textbf{273} (2007), 499--532.

\bibitem{ClaeysVan2} T. Claeys and M. Vanlessen, {The existence of a real pole-free solution of the fourth-order analogue of the Painlev\'e I equation}, \textit{Nonlinearity} \textbf{20} (2007), 1163--1184.



\bibitem{DLXYZ} D. Dai, W.-G. Long, S.-X. Xu, L.-M. Yao and L. Zhang, {The multiplicative constant in asymptotics of higher-order analogues of the Tracy–Widom distribution}, arXiv:2501.12679 (2025).

\bibitem{DMMS}
A. Dea{\~n}o, K.T.-R. McLaughlin, L. Molag and N. Simm,
{Asymptotics for a class of planar orthogonal polynomials and
truncated unitary matrices},
arXiv:2505.12633 (2025).

\bibitem{Deift} P. Deift, \emph{Orthogonal Polynomials and Random Matrices: A Riemann--Hilbert Approach}, Courant Lecture Notes in Mathematics, vol.~3, American Mathematical Society, 2000.





\bibitem{DeiKriMcL} P. Deift, T. Kriecherbauer and K.T.-R. McLaughlin, New results on the equilibrium measure for logarithmic potentials in the presence of an external field, \textit{J. Approx. Theory} \textbf{95} (1998), 388--475.

\bibitem{DIK} {P. Deift, A. Its and I. Krasovsky, {Asymptotics of Toeplitz, Hankel, and Toeplitz+Hankel determinants with Fisher--Hartwig singularities}, \textit{Ann. of Math.} \textbf{174} (2011), 1243--1299.}

\bibitem{DKMVZ1} P. Deift, T. Kriecherbauer, K.T.-R. McLaughlin, S. Venakides and X. Zhou, Strong asymptotics of orthogonal polynomials with respect to exponential weights, \textit{Comm. Pure Appl. Math.} {\bf 52} (1999), 1491--1552.

\bibitem{DKMVZ2} P. Deift, T. Kriecherbauer, K.T.-R. McLaughlin, S. Venakides and X. Zhou, Uniform asymptotics for polynomials orthogonal with respect to varying exponential weights and applications to universality questions in random matrix theory, \textit{Comm. Pure Appl. Math.} {\bf 52} (1999), 1335--1425.

\bibitem{DeiftZhou} P. Deift and X. Zhou, A steepest descent method for oscillatory Riemann--Hilbert problems: asymptotics for the mKdV equation, \textit{Ann. of Math.} \textbf{137} (1993), 295--368.

\bibitem{DeiftZhou1992} P. Deift and X. Zhou, {A steepest descent method for oscillatory Riemann--Hilbert problems}, \textit{Bull. Amer. Math. Soc.} \textbf{26} (1992), 119--123.

\bibitem{Dub06} B. Dubrovin, On Hamiltonian perturbations of hyperbolic systems of conservation laws, II: Universality of critical behavior, \textit{Comm. Math. Phys.} \textbf{267} (2006), 117--139.

\bibitem{Dub08} B. Dubrovin, On Universality of Critical Behavior in Hamiltonian PDEs, Geometry, Topology, and Mathematical Physics, American Mathematical Society Translation Series 2, vol. 224, American Mathematical Society, Providence, (2008), 59–109.

\bibitem{Dub09} B. Dubrovin, T. Grava and C. Klein, {On universality of critical behavior in the focusing nonlinear Schr\"odinger equation, elliptic umbilic catastrophe and the tritronqu\'ee solution to the Painlev\'e-I equation}, \textit{J. Nonlinear Sci.} \textbf{19} (2009), 57--94.



\bibitem{EML} N.M. Ercolani and K.T.-R. McLaughlin, Asymptotics of the partition function for random matrices via Riemann-Hilbert techniques and applications to graphical enumeration, \textit{Int. Math. Res. Not.} \textbf{2003} (2003), no. 14, 755--820.





\bibitem{FokasItsKitaev} A.S. Fokas, A.R. Its and A.V. Kitaev, The isomonodromy approach to matrix models in 2D quantum gravity, \textit{Comm. Math. Phys.} \textbf{147} (1992), 395--430.





\bibitem{Gakhov}{F. Gakhov, {Boundary Value Problems}, Pergamon Press, Oxford (1966). Reprinted by Dover Publications, New York (1990).}

\bibitem{Gordoa} R. Gordoa and A. Pickering, Nonisospectral scattering problems: a key to integrable hierarchies, \textit{J. Math. Phys.} \textbf{40} (1999), 5749--5786.



\bibitem{Grava-Kapaev-Klein-2015} T. Grava, A. Kapaev and C. Klein, On the tritronqu\'ee solution of $P_I^2$, \textit{Constr. Approx.} \textbf{41} (2015), 425--466

\bibitem{HastingsMcLeod} S.P. Hastings and J.B. McLeod, A boundary value problem associated with the second Painlev\'e transcendent and the Korteweg-de Vries equation, \textit {Arch. Rational Mech. Anal.} {\bf 73} (1980), 31-51.








\bibitem{Kud97} N.A. Kudryashov, The first and second Painlev\'{e} equations of higher order and some relations between them, \textit{Phys. Lett. A} \textbf{224} (1997), 353--360.

\bibitem{Kuijlaars_survey} A.B.J. Kuijlaars, \textit{Universality}, in {\it The Oxford Handbook of Random Matrix Theory}, 103--134, Oxford Univ. Press, Oxford, 2011.

\bibitem{KM} A.B.J. Kuijlaars and K. T.-R. McLaughlin, Generic behavior of the density of states in random matrix theory and equilibrium problems in the presence of real analytic external fields, \textit{Comm. Pure Appl. Math.} \textbf{53} (2000), 736--785.



\bibitem{KV2} A.B.J. Kuijlaars and M. Vanlessen, {Universality for eigenvalue correlations at the origin of the spectrum}, \textit{Comm. Math. Phys.} \textbf{243} (2003), 163--191.



\bibitem{Mehta} M.L. Mehta, {\it Random matrices}, Third Edition, \textit{Pure and Applied Mathematics Series} \textbf{142}, Elsevier Academic Press, 2004.

\bibitem{Mugan} U. Mugan and F. Jrad, Painlev\'{e} test and the first Painlev\'{e} hierarchy, \textit{J. Phys. A: Math. Gen.} 32 (1999), 7933--7952.

\bibitem{Muskhelishvili1992} N.I. Muskhelishvili, {\it Singular Integral Equations}, Dover Publications, New York, 1992.



\bibitem{SaTo} E.B. Saff and V. Totik, {\it Logarithmic Potentials with External Fields}, Springer-Verlag, 1997.

\bibitem{Shim04} R. Shimomura, A certain expression for the first Painlev\'{e} hierarchy, \textit{Proc. Japan Acad. Ser. A} \textbf{80} (2004), 105--109.



\bibitem{Szego} {G. Szeg\H{o}, \textit{Orthogonal Polynomials}, AMS Colloquium Publ. \textbf{23} (1959), New York: AMS.}

\bibitem{TW94} C. Tracy and H. Widom, Level spacing distributions and the Airy kernel, \textit{Comm. Math. Phys.} \textbf{159} (1994), 151--174.

\bibitem{XDZ} S.-X. Xu, D. Dai and Y.-Q. Zhao, Painlev\'e III asymptotics of Hankel determinants for a singularly perturbed Laguerre weight, \textit{J. Approx. Theory} \textbf{192} (2015), 1--18.

\bibitem{ZXZ} Z.-Y. Zeng, S.-X. Xu and Y.-Q. Zhao, Painlev\'e III asymptotics of Hankel determinants for a perturbed Jacobi weight, \textit{Stud. Appl. Math.} \textbf{135} (2015), 347--376.

\bibitem{ZhaoCD} Y. Zhao, L.H. Cao and D. Dai, Asymptotics of the partition function of a Laguerre-type random matrix model, \emph{J. Approx. Theory} \textbf{178} (2014), 64--90.



\end{thebibliography}
\end{document}